\documentclass[preprint]{elsarticle}

\usepackage{amsmath,amsfonts,amssymb,amsthm}
\usepackage{mathrsfs,tikz}
\usepackage[latin1]{inputenc}
\usepackage[normalem]{ulem}

\newcommand{\gconf}{g.conf.~}
\newcommand{\HH}{\mathcal{H}}
\newcommand{\EE}{\mathcal{E}}
\newcommand{\CC}{\mathcal{C}}
\newcommand{\NP}{\mathcal{NP}}
\newcommand{\gd}{\gamma}
\newcommand{\grd}{\gamma_{\rm gr}}

\newcommand{\la}{\langle}
\newcommand{\ra}{\rangle}
\newcommand{\fp}{\mathfrak{fp}}
\newcommand{\ch}{\mathfrak{ch}}
\newcommand{\initial}{\bf 1,\bf 0}
\newcommand{\antes}{\vartriangleleft}
\newcommand{\noantes}{\not\vartriangleleft}
\def\tick{\tikz\fill[scale=0.4](0,.35) -- (.25,0) -- (1,.7) -- (.25,.15) -- cycle;}

\newtheorem{thm}{Theorem}[section]
\newtheorem{prop}[thm]{Proposition}
\newtheorem{obs}[thm]{Observation}
\newtheorem{lemma}[thm]{Lemma}
\newtheorem{cor}[thm]{Corollary}
\newtheorem{rem}[thm]{Remark}
\newtheorem{claim}{Claim}

\journal{}

\begin{document}

\begin{frontmatter}

\title{The polytope of legal sequences\tnoteref{grant}}

\author[a]{Manoel Camp\^elo}
\author[b,c]{Daniel Sever\'in\fnref{correspon}}

\address[a]{Dep. Estat\'istica e Matem\'atica Aplicada, Universidade Federal do Cear\'a, Brazil}

\address[b]{Depto. de Matem\'atica (FCEIA), Universidad Nacional de Rosario, Argentina}

\address[c]{CONICET, Argentina}

\tnotetext[grant]{Partially supported by grants PICT-2016-0410 (ANPCyT), PID ING538 (UNR),
                                                443747/2014-8, 305264/2016-8 (CNPq) and PNE 0112­00061.01.00/16 (FUNCAP/CNPq).\\
\emph{E-mail addresses}: \texttt{mcampelo@lia.ufc.br} (M. Camp\^elo),
                         \texttt{daniel@fceia.unr.edu.ar} (D. Sever\'in).}
\fntext[correspon]{Corresponding author at Departamento de Matem\'atica (FCEIA), UNR, Pellegrini 250, Rosario, Argentina.}

\begin{abstract}
A sequence of vertices in a graph is called a \emph{(total) legal dominating sequence} if every vertex in the sequence (total) dominates
at least one vertex not dominated by those ones that precede it, and at the end all vertices of the graph are (totally) dominated.
The \emph{Grundy (total) domination number} of a graph is the size of the largest (total) legal dominating sequence.
In this work, we address the problems of determining these two parameters by introducing a generalized version of them. We explicitly calculate the corresponding (general) parameter for paths and web graphs.
We propose integer programming formulations for the new problem and we study the polytope associated to one of them.
We find families of valid inequalities and derive conditions under which they are facet-defining.
Finally, we perform computational experiments to compare the formulations as well as to test valid inequalities as cuts in a B\&C framework.
\end{abstract}

\begin{keyword}
Grundy (total) domination number, Legal dominating sequence, Facet-defining inequality, Web graph.
\MSC[2010] 90C57 \sep 05C69 
\end{keyword}

\end{frontmatter}

\section{Introduction} \label{SSINTRO}

Covering problems are some of the most studied problems in Graph Theory and Combinatorial Optimization due to the
large number of applications.
Consider a hypergraph $\HH = (X, \EE)$ without isolated vertices.
An \emph{edge cover} of $\HH$ is a set of hyperedges $\CC \subset \EE$ that cover all vertices of $\HH$, i.e.$\!$ $\cup_{C \in \CC} C = X$.
The general covering problem consists in finding the \emph{covering number} of $\HH$, denoted by $\rho(\HH)$, which is the minimum number of hyperedges in an edge cover of $\HH$ \cite{BERGE}.

The most natural constructive heuristic for obtaining an edge cover of $\HH$ is as follows.
Start from empty sets $\CC$ and $W$ (the latter one keeps the already covered vertices).
At each step $i$, pick a hyperedge $C_i \in \EE$, add $C_i$ to $\CC$ and add all the elements of $C_i$ to $W$.
The process is repeated until $W = V$.
In addition, $C_i$ can only be chosen if at least one of its elements has not been previously included in $W$,
i.e.$\!$ $C_i \setminus (\cup_{j=1}^{i-1} C_j) \neq \emptyset$.

How bad can a solution given by this heuristic be (compared to the optimal one) ?
The answer leads to the concept of Grundy covering number of $\HH$, denoted by $\rho_{gr}(\HH)$, which computes the largest number of steps performed by such a constructive heuristic, or equivalently the largest number of hyperedges used in the resulting covering~\cite{BRESAR2014}. In general, ``Grundy numbers'' relate the worst case of greedy heuristics. They were initially addressed in the context of coloring problems~\cite{GRUNDYCOL}.

A particular case of the covering problem, motivated by a domination game \cite{P1,P2,P3}, occurs when $\HH$ is the hypergraph of the closed neighborhoods of vertices in a simple graph $G$, i.e.$\!$ $\HH = (V(G), \EE)$ where $\EE = \{N[v] : v \in V(G)\}$ and $N[v]\doteq \{u\in V(G): (u,v)\in E(G)\}\cup\{v\}$.
Here, the Grundy covering number of $\HH$ is called \emph{Grundy domination number} of $G$ \cite{BRESAR2014}.
Analogously, the \emph{Grundy total domination number} of $G$ is the Grundy covering number of the hypergraph of the open neighborhoods of vertices in $G$ (where the hyperedge related to $v\in V(G)$ is $N(v)\doteq N[v]\setminus \{v\}$)~\cite{BRESAR2016}.

The problems associated to these parameters are $\NP$-hard even when $G$ belongs to a restricted family of graphs.
For example, the Grundy domination problem is hard when $G$ is a chordal graph, although it is polynomial when $G$ is a tree,
a cograph or a split graph \cite{BRESAR2014}.
The total version of this problem is hard when $G$ is bipartite \cite{BRESAR2016} but it is polynomial on trees, $P_4$-tidy and
distance-hereditary bipartites \cite{NASINI2017}.

In this work, we introduce the General Grundy Domination Problem (GGDP), where the hyperedges can be either closed or open neighborhoods. We obtain the parameter associated with GGDP for paths and web graphs. We propose integer programming formulations for GGDP, which in particular
can obtain the Grundy (total) domination number of a graph.
We study the polytope associated to one of these formulations. We find families of valid inequalities and derive conditions under which they are facet-defining. Besides, we perform computational
experiments to compare the formulations as well as to test some of the proposed valid inequalities as cuts in a B\&C framework.
Some results contained in this work appeared without proof in the extended abstract \cite{LAGOS2017}.

\subsection{Definitions and notation}

Let $G = (V, E)$ be a simple graph on $n$ vertices and $C$ be a subset of vertices of $V$.
Define the function $N\la\_\ra : V \rightarrow \mathcal{P}(V)$ as follows: $N\la v \ra \doteq N(v)$ if $v \notin C$ and
$N\la v \ra \doteq N[v]$ if $v \in C$.
Assume that no vertex from $V \setminus C$ is isolated in $G$, so that $N\la v\ra\neq \emptyset$ for all $v\in V$. $N\la v\ra$ will be simply called the neighborhood of $v$.

Now, define $\HH(G;C)$ as the hypergraph $(V, \EE)$ with $\EE = \{N\la v\ra : v \in V\}$.
The Grundy covering number of $\HH(G;C)$ will be called \emph{Grundy domination number} of $G;C$, to be denoted $\grd(G;C)$.
The combinatorial problem of our interest is presented below:

\medskip

\noindent \textbf{GENERAL GRUNDY DOMINATION PROBLEM} (GGDP)\\
\noindent \underline{INSTANCE:} a graph $G = (V,E)$ and a set $C \subset V$ such that no isolated vertex is in $V \setminus C$.\\
\noindent \underline{OBJECTIVE:} obtain $\grd(G;C)$.

\medskip

Since the Grundy domination problems mentioned in the introduction are particular cases of our problem (note that $\grd(G;V)$ is the Grundy
domination number of $G$ while $\grd(G;\emptyset)$ is the Grundy total domination number of $G$), they can be simultaneously addressed by GGDP.

We refer to the pair ``$G;C$'' as an \emph{instance} and we present the same definitions given in \cite{BRESAR2014,BRESAR2016} in terms of instances (of the GGDP).

Let $S=(v_1,\ldots,v_k)$ be a sequence of distinct vertices of $G$.
The sequence $S$ is called a \emph{legal sequence} of an instance $G;C$ if
\begin{equation*}
W_i \doteq N\la v_i\ra \setminus \bigcup_{j=1}^{i-1} N\la v_j\ra \ne \emptyset
\end{equation*}
holds for every $i = 2,\ldots,k$.
If $S$ is a legal sequence, then we say that $v_i$ \emph{footprints} the vertices from $W_i$, and that $v_i$ is the \emph{footprinter} of every vertex $u\in W_i$.
That is, $v_i$ footprints a vertex $u \in N\la v_i\ra$ if $u$ does not belong to the neighborhood of $v_j$, for $j=1,2,\ldots,i-1$.

If $\cup_{j=1}^k W_j = V$ then $S$ is called a \emph{dominating sequence} of $G;C$. The legal dominating sequences of $G;C$ are in one-to-one correspondence with the solutions of the constructive heuristic for the covering number of $\HH(G;C)$.

Let $S$ be any legal sequence of $G;C$ with maximal length, i.e.$\!$ it is not possible to append a vertex at the end of $S$.
It is easy to see that $S$ is also a dominating sequence and, in particular, if its length is maximum then it is equal to $\grd(G;C)$.
Therefore, when determining $\grd(G;C)$, we do not need to restrict ourselves to legal dominating sequences.
Instead, we can only explore the structure of legal sequences, which is simpler.

For given vertices $v_1, v_2$, we denote $v_1 \antes v_2$ the expression
$N\la v_2\ra \setminus N\la v_1\ra \neq \emptyset$, which basically means that the sequence $(v_1, v_2)$ is legal.
Naturally, $v_1 \noantes v_2$ means $N\la v_2\ra \subset N\la v_1\ra$.
An instance $G;C$ is called \emph{clutter} if $\HH(G;C)$ is a clutter too, i.e.$\!$ if $u \antes v$ for all
distinct vertices $u, v$. It is called a \emph{strong clutter} if $|N\la u\ra \setminus N\la v\ra|\geq 2$, for all distinct vertices $u, v$.

We say that distinct vertices $u, v$ are \emph{twins} if $N\la u\ra = N\la v\ra$.
If $G;C$ does not have twin vertices, then $G;C$ is called \emph{twin free}. Clutters are twin free instances.

\section{Properties of the GGDP and examples} \label{SSPROP}

In order to reduce the size of the input graph, two simple rules can be applied.
In first place, if $G$ is the disjoint union of graphs $G_1$ and $G_2$, then:
$$\grd(G;C) = \grd(G_1;C \cap V(G_1)) + \grd(G_2; C \cap V(G_2)).$$
Therefore, we can restrict ourselves to connected graphs.

In addition, if there exist twin vertices $u, v$, then
$$\grd(G;C) = \grd(G - v; C \setminus \{v\}),$$
where $G - v$ is the graph obtained from $G$ by deleting $v$.
Hence, we can suppose that the instance is twin free.

Let $\delta(G;C)$ be the minimum cardinality of $N\la v\ra$ for all $v \in V(G)$.
In \cite{BRESAR2014,BRESAR2016}, upper bounds of Grundy domination numbers are presented.
They can easily be adapted to our problem:
\begin{prop} \cite{BRESAR2014,BRESAR2016}
$\grd(G;C) \leq m \doteq n-\delta(G;C)+1$.
\end{prop}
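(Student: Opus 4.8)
The plan is to exploit the disjointness of the footprinted sets associated with a legal sequence. Let $S = (v_1, \ldots, v_k)$ be a legal sequence of $G;C$ of maximum length, so that $k = \grd(G;C)$, and for convenience set $W_1 \doteq N\la v_1\ra$ (the definition of $W_i$ for $i \geq 2$ being as in the text). First I would check that the sets $W_1, \ldots, W_k$ are pairwise disjoint: if $a < b$, then every vertex of $W_b$ lies outside $\bigcup_{j<b} N\la v_j\ra$ and hence outside $N\la v_a\ra$, whereas $W_a \subseteq N\la v_a\ra$; thus $W_a \cap W_b = \emptyset$. Since also $\bigcup_{i=1}^{k} W_i \subseteq V$, this yields $\sum_{i=1}^{k} |W_i| \leq n$.

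Next I would bound each summand from below. By the definition of $\delta(G;C)$ we have $|W_1| = |N\la v_1\ra| \geq \delta(G;C)$, and the legality of $S$ gives $|W_i| \geq 1$ for every $i = 2, \ldots, k$. Plugging these into the inequality above,
$$ n \;\geq\; \sum_{i=1}^{k} |W_i| \;\geq\; \delta(G;C) + (k-1), $$
whence $k \leq n - \delta(G;C) + 1 = m$, which is the claimed bound since $S$ has maximum length.

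I do not expect any real obstacle here. The only point that must be made explicit before summing cardinalities is the pairwise disjointness of the $W_i$ — equivalently, the fact that every vertex has at most one footprinter — which should not be taken for granted even though it is implicit in the ``footprinter'' terminology. Apart from that, the argument is a direct counting estimate, entirely analogous to the bounds for the Grundy (total) domination number established in \cite{BRESAR2014,BRESAR2016}.
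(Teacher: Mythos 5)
Your proof is correct, but it runs along a slightly different line than the paper's. You partition the dominated vertices into the footprint classes $W_1,\ldots,W_k$, observe they are pairwise disjoint, and get the bound from $|W_1| = |N\la v_1\ra| \geq \delta(G;C)$ together with $|W_i|\geq 1$ for $i\geq 2$; that is, the minimum degree enters through the \emph{first chosen} vertex. The paper instead looks at a vertex $u$ footprinted at the \emph{last} step and notes that, since $u$ is undominated before step $k$, none of $s_1,\ldots,s_{k-1}$ lies in $N\la u\ra$; counting the disjoint sets $\{s_1,\ldots,s_{k-1}\}$ and $N\la u\ra$ inside $V$ gives $k-1 \leq n - |N\la u\ra| \leq n-\delta(G;C)$. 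Both are one-line counting arguments of equal strength. A small trade-off: your route requires making explicit the pairwise disjointness of the $W_i$ (which you do, and which is immediate), while the paper's route implicitly uses the symmetry fact that $u\notin N\la s_j\ra$ forces $s_j\notin N\la u\ra$ in the mixed open/closed neighborhood setting (trivial when $s_j\neq u$, and handled by $u\notin N\la u\ra$ when $s_j=u$), a point your argument sidesteps entirely. Either way the conclusion $k\leq m$ follows, so your proposal is a valid alternative proof.
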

\begin{proof}
Let $S = (s_1,\ldots,s_k)$ be a legal dominating sequence of $G;C$ of maximum size and let $u$ be a vertex footprinted in the last step.
Since $u$ was not dominated in the previous steps, $N\la u\ra \cap \{s_1,\ldots,s_{k-1}\} = \emptyset$.
Then, $k - 1 \leq n - |N\la u\ra| \leq n - \delta(G;C)$, implying $k \leq m$.
\end{proof}

Now, we consider two families of graphs (paths and web graphs) that will serve as examples.
First, let $P_n$ denote an induced path on $n\geq 1$ vertices where $V(P_n)=\{1,\ldots,n\}$.
Note that, for any integer $n\geq 2$ and any $C \subset V(P_n)$,
$\delta(P_n;C) = 2$ when $\{1,n\} \subset C$ and $\delta(P_n;C) = 1$ otherwise. Therefore,
\begin{obs} \label{OBSERV1}
Let $n \geq 2$ and $C \subset V(P_n)$. If $\{1,n\} \subset C$ then $m = n - 1$; otherwise, $m = n$.
\end{obs}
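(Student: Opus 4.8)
The statement to prove is Observation~\ref{OBSERV1}: for $n \geq 2$ and $C \subset V(P_n)$, if $\{1,n\} \subset C$ then $m = n-1$, otherwise $m = n$.

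Recall that $m \doteq n - \delta(G;C) + 1$ by the preceding Proposition, so the entire content of the Observation is the computation of $\delta(P_n;C)$, the minimum cardinality of $N\la v\ra$ over all $v \in V(P_n)$. The plan is simply to enumerate the possible neighborhoods and read off the minimum, which the excerpt in fact already asserts in the sentence immediately preceding the Observation; so the ``proof'' is essentially an unpacking of that sentence together with an application of the Proposition.

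\begin{proof}
By the preceding Proposition, $m = n - \delta(P_n;C) + 1$, so it suffices to show that $\delta(P_n;C) = 2$ when $\{1,n\} \subset C$ and $\delta(P_n;C) = 1$ otherwise. Recall that in $P_n$ the endpoints $1$ and $n$ have degree one (with $N(1) = \{2\}$ and $N(n) = \{n-1\}$), while every internal vertex $v \in \{2,\ldots,n-1\}$ has degree two. Hence $|N\la v\ra| \geq 2$ for every internal $v$, regardless of membership in $C$: if $v \notin C$ then $|N\la v\ra| = |N(v)| = 2$, and if $v \in C$ then $|N\la v\ra| = |N[v]| = 3$. For an endpoint, say $v = 1$: if $1 \notin C$ then $|N\la 1\ra| = |N(1)| = 1$, and if $1 \in C$ then $|N\la 1\ra| = |N[1]| = |\{1,2\}| = 2$; symmetrically for $v = n$.

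Now we take the minimum over all $v$. If some endpoint is not in $C$, that endpoint contributes a neighborhood of size $1$, which is the smallest possible cardinality of a nonempty set, so $\delta(P_n;C) = 1$. If instead $\{1,n\} \subset C$, then both endpoints contribute neighborhoods of size $2$, and every internal vertex contributes a neighborhood of size at least $2$; since $n \geq 2$, the vertex $1$ realizes the value $2$, so $\delta(P_n;C) = 2$. Substituting into $m = n - \delta(P_n;C) + 1$ gives $m = n-1$ in the first case and $m = n$ in the second.
\end{proof}

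The argument is entirely elementary; there is no real obstacle. The only point requiring a moment's care is the degenerate case $n = 2$, where $P_2$ is a single edge and both vertices are endpoints — but the same case analysis applies verbatim ($|N(1)| = |N(2)| = 1$ if either is outside $C$, and $|N[1]| = |N[2]| = 2$ if $\{1,2\} \subset C$), so nothing special is needed.
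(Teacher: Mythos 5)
Your proof is correct and follows exactly the paper's (implicit) argument: the sentence preceding the Observation computes $\delta(P_n;C)$ by inspecting endpoint versus internal neighborhoods, and the Observation then just substitutes into $m = n-\delta(G;C)+1$. Your write-up simply unpacks that same computation (including the harmless $n=2$ case), so there is nothing to correct.
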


We say that $C$ is a \emph{good configuration} (or ``\gconf\!'' for short) for $P_n$ if\\
{ \small
\indent \indent (i) $n=1$ and $C = \{1\}$,\\
\indent \indent (ii) $n=2$ and $C \neq \{1,2\}$,\\
\indent \indent (iii) $n\geq 3$ and either\\
\indent \indent \indent (iii.1) $1\notin C$ and $C$ is a \gconf for the subpath induced by $\{3,\ldots,n\}$ or\\
\indent \indent \indent (iii.2) $n\notin C$ and $C$ is a \gconf for the subpath induced by $\{1,\ldots,n-2\}$.\\ }

For the sake of simplicity, when we say that $C$ is a \gconf for a subpath $P'$ of $P$, we are actually referring to the set $C \cap V(P')$. Moreover, differently from the definition of GGDP, we allow the subgraph induced by $V(P')\setminus C$ to have isolated vertices in the \gconf definition.

From the previous observation, we trivially have
\begin{obs} \label{OBSERV2}
Let $n \geq 1$ and $C \subset V(P_n)$. If $C$ is a \gconf for $P_n$, then $m = n$.
\end{obs}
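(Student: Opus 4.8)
The plan is to reduce the claim to Observation~\ref{OBSERV1}. Since $m = n-\delta(P_n;C)+1$, that observation says that for $n\ge 2$ we have $m=n$ precisely when $\{1,n\}\not\subset C$, and $m=n-1$ otherwise. Hence it suffices to check that a good configuration $C$ for $P_n$ never contains both endpoints $1$ and $n$, together with a separate treatment of the degenerate case $n=1$ (which is not covered by Observation~\ref{OBSERV1}).

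First I would dispose of $n=1$ directly: the only good configuration is $C=\{1\}$, in which case $N\la 1\ra = N[1] = \{1\}$, so $\delta(P_1;C)=1$ and $m = 1 - 1 + 1 = 1 = n$. For $n\ge 2$ I would then simply inspect the top-level cases in the definition of good configuration. If $n=2$, case (ii) forces $C\neq\{1,2\}=\{1,n\}$, so $\{1,n\}\not\subset C$. If $n\ge 3$, then $C$ falls under case (iii): subcase (iii.1) explicitly requires $1\notin C$, and subcase (iii.2) explicitly requires $n\notin C$; in either event $\{1,n\}\not\subset C$. Applying Observation~\ref{OBSERV1} in each case gives $m=n$.

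I do not expect a genuine obstacle here: the recursive structure of the good-configuration definition is irrelevant, since the requirement that at least one endpoint be excluded is already imposed at the outermost level of the recursion, so no induction is needed. The only point requiring a little care is that Observation~\ref{OBSERV1} is stated only for $n\ge 2$, which is why the single-vertex path must be handled by an explicit computation of $\delta$.
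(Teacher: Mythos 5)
Your proposal is correct and is exactly the ``trivial'' argument the paper intends: the outermost case of the good-configuration definition already excludes having both endpoints in $C$ when $n\geq 2$, so Observation~\ref{OBSERV1} gives $m=n$, and the $n=1$ case is a direct computation of $\delta$. This matches the paper, which derives the observation immediately from Observation~\ref{OBSERV1} without further argument.
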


The following results give the Grundy domination number of a path and show where the upper bound $m$ is tight.

\begin{prop}
Let $n\geq 1$, $G = P_n$ and $C \subset V(G)$.
If $\{1,n\}\subset C$ or $C$ is a good configuration for $G$ then $\grd(G;C) = m$; otherwise, $\grd(G;C) = m - 1$.
\end{prop}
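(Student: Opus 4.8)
The plan is to prove the proposition by strong induction on $n$, after first squeezing $\grd(G;C)$ into a window of width one. To this end I would first record an easy lower bound: for every $n\ge 2$ and every $C\subset V(P_n)$ the sequence $(1,2,\ldots,n-1)$ is legal, since at step $i\le n-1$ the vertex $i$ footprints $i+1$ (indeed $i+1\in N\langle i\rangle$ while $\bigcup_{j<i}N\langle j\rangle\subseteq\{1,\ldots,i\}$), so $\grd(G;C)\ge n-1$. Combined with the bound $\grd(G;C)\le m$ and Observation~\ref{OBSERV1}, this already settles the case $\{1,n\}\subseteq C$, where $m=n-1$ and hence $\grd(G;C)=n-1=m$. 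In the remaining case $\{1,n\}\not\subseteq C$ we have $m=n$ and $\grd(G;C)\in\{n-1,n\}$, so everything reduces to the equivalence: $P_n;C$ has a legal sequence of length $n$ if and only if $C$ is a good configuration for $P_n$. Note that a legal sequence $(v_1,\ldots,v_n)$ of length $n$ must have every $W_i$ a singleton, because the $W_i$ are pairwise disjoint and $\sum_i|W_i|\le n$; in particular $N\langle v_1\rangle$ is a singleton, and in a path $|N\langle v\rangle|=1$ exactly when $v$ is an endpoint not in $C$.

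For the implication ``good configuration $\Rightarrow$ length-$n$ sequence'' I would induct on $n$ along the recursive definition; the small cases are verified directly. For $n$ large, using the reflection $i\mapsto n+1-i$ (which preserves being a good configuration and all the quantities in play), we may assume we are in case (iii.1): $1\notin C$ and $C'':=C\cap\{3,\ldots,n\}$ is a good configuration for the subpath $P''$ on $\{3,\ldots,n\}$. Since $C''$ is a good configuration, $m''=n-2$ by Observation~\ref{OBSERV2}, and the induction hypothesis applied to $P''$ yields a legal sequence $S'$ of $P'';C''$ of length $n-2$. The claim is that the sequence $S$ obtained from $S'$ by prepending the vertex $1$ and appending the vertex $2$ is a legal dominating sequence of $P_n;C$ of length $n$. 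The verification rests on the neighborhood comparison $N^{P_n}\langle u\rangle=N^{P''}\langle u\rangle$ for all $u\in\{4,\ldots,n\}$, while $N^{P_n}\langle 3\rangle=N^{P''}\langle 3\rangle\cup\{2\}$. Granting this: the opening $1$ footprints $\{2\}$; along the block $S'$ every vertex footprints the same vertex it footprints in $P''$ (the delicate case being vertex $3$, which keeps its footprint because $2$ has already been dominated by the opening $1$); and after $S'$ the unique undominated vertex is $1$, footprinted by the closing $2$. Case (iii.2) is handled by the reflected construction $S=(n,S',n-1)$.

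For the converse, suppose $P_n;C$ with $\{1,n\}\not\subseteq C$ has a legal sequence $T=(v_1,\ldots,v_n)$ of length $n$. Then $v_1$ is an endpoint outside $C$, and (reflecting if needed) we may assume $v_1=1$, so $v_1$ footprints $2$ and, since $1\notin C$, the only vertex able to footprint $1$ is the vertex ``$2$'' occurring in $T$. Deleting the vertices $1$ and $2$ from $T$ yields a length-$(n-2)$ sequence $T''$ on $\{3,\ldots,n\}$, which I claim is legal for $P'';C''$ (same $P'',C''$ as above). This follows from the same neighborhood comparison: no vertex of $T''$ footprints $1$ or $2$ in $T$, so each vertex of $T''$ footprints a vertex of $\{3,\ldots,n\}$ which it still footprints in $P''$. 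Hence $\grd(P'';C'')=n-2=|V(P'')|$, which forces $m''=|V(P'')|$, and the induction hypothesis applied to $P''$ then gives that $C''$ is a good configuration for $P''$; together with $1\notin C$, part (iii.1) of the definition shows $C$ is a good configuration for $P_n$. Putting everything together: if $C$ is a good configuration then $\grd(G;C)=n=m$; if not, no length-$n$ legal sequence exists, so $\grd(G;C)=n-1=m-1$.

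I expect the main obstacle to be the bookkeeping in the two passages between $P_n$ and the subpath $P''$: exactly one neighborhood changes (vertex $3$ gains the element $2$), and one must check this never breaks legality---in the forward direction because $2$ is already dominated by the time vertex $3$ is processed, and in the converse because $2$ is never the footprint of a vertex of $T''$. A minor point, to be dealt with among the base cases, is that the innermost subpath can reduce to a single vertex (when $n=3$): there one checks that a length-$3$ legal sequence, resp.\ a good configuration of $P_3$ with $1\notin C$, forces $3\in C$, so that this subpath is a legitimate instance and the induction runs.
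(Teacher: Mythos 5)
Your proof is correct, and its overall architecture is the same as the paper's: establish $\grd(G;C)\ge n-1$ via the sequence $(1,\ldots,n-1)$, dispose of the case $\{1,n\}\subset C$ with Observation~\ref{OBSERV1}, and reduce the rest to the equivalence ``good configuration $\Leftrightarrow$ legal sequence of length $n$'', proved by induction through the subpath on $\{3,\ldots,n\}$ with the extension $(1,S',2)$ in the forward direction. Where you genuinely diverge is in the converse. The paper asserts that a length-$n$ legal sequence starting at the endpoint $1$ must have $2$ as its \emph{last} element, so that the middle block is itself a legal dominating sequence of the subpath; you instead delete the vertices $1$ and $2$ from wherever they occur and verify, via the singleton-$W_i$ observation and the fact that only $2$ can footprint $1$ while $2$ is footprinted at step one, that the remaining $n-2$ vertices form a legal sequence of $P'';C''$. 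Your variant is in fact the more robust one: the paper's claim that $2$ must come last is not literally true --- on $P_4$ with $C=\emptyset$ the sequence $(1,4,2,3)$ is legal and dominating of length $4$ with $2$ in third position --- whereas your deletion argument handles such sequences without any assumption on where $2$ sits, at the modest price of invoking the full induction hypothesis (through $\grd(P'';C'')=m''=|V(P'')|$ and Observation~\ref{OBSERV1}) rather than the bare equivalence. Your explicit treatment of the degenerate subpath $P_1$ (forcing $3\in C$ when $n=3$) also makes precise a point the paper leaves implicit.
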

\begin{proof}
If $n=1$, we must have $C=\{1\}$, because $N\la1\ra\neq \emptyset$. So $C$ is a \gconf and $\grd(G;C) = m = 1$.
Now, consider the case $n = 2$.
If $C = \{1,2\}$ then $\grd(G;C) = m = 1$.
And, if $C \neq \{1,2\}$ ($C$ is a \gconf for $G$) then $\grd(G;C) = m = 2$.

For $n\geq 3$, note that $(1,2,\ldots,n-1)$ is a legal dominating sequence.
Indeed, $1$ footprints $2$ (and itself, if $1\in C$), $2$ footprints $3$ (and $1$, if $1\notin C$), and $i=3,\ldots,n-1$ footprints $i+1$.
Thus, $n-1\leq \grd(G;C)\leq m$.
If $\{1,n\}\subset C$, by Observation \ref{OBSERV1} we have $\grd(G;C) = m = n-1$.
It remains to consider the case $\{1,n\}\not\subset C$. By Observation \ref{OBSERV1}, $m = n$. Then, $\grd(G;C)\in \{m-1,m\}$,
and it is enough to prove that $C$ is a \gconf for $P_n$ if and only if there exists a legal dominating sequence of size $n$.

First, assume that $C$ is a \gconf for $P_n$.
We use induction on $n$.
Recall that we have already obtained $\grd(G;C)=2$ for $n=2$, and $\grd(G;C)=1$ is trivial for $n=1$.
For $n\geq 3$, (iii.1) or (iii.2) holds. Assume w.l.o.g.$\!$ that (iii.1) holds (the other case is symmetric).
The induction hypothesis ensures the existence of a legal dominating sequence $S$ for $(3,\ldots,n)$, with $|S|=n-2$.
Consider the extended sequence $S' = (1,S,2)$. $S'$ is legal and dominating for $P_n$ with $|S|=n$.

Now, assume that there exists a legal dominating sequence $S$ such that $|S|=n$.
We use again induction on $n$.
$C$ is trivially a \gconf for $P_n$ when $n\in\{1,2\}$.
For $n\geq 3$, any dominating legal sequence of length $n$ must start with an endpoint of $P_n$, say $1$, and such a vertex must belong to
$V\setminus C$. Moreover, $2$ must be the last vertex in the sequence, and the $n-2$ vertices between $1$ and $2$ define a legal dominating
sequence for $(3,\ldots,n)$. By the induction hypothesis, $C$ is a \gconf for $(3,\ldots,n)$. Since $1\notin C$, $C$ is also a
\gconf for $(1,\ldots,n)$.
\end{proof}	

\begin{cor} \label{cor:path}
Let $n\geq 1$, $G = P_n$ and $C \subset V(G)$.
If $C$ is a good configuration for $G$, then $\grd(G;C)=n$; otherwise, $\grd(G;C)=n-1$.
\end{cor}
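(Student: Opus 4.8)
The plan is to deduce the corollary directly from the preceding Proposition, substituting for the bound $m$ its explicit value as supplied by Observations~\ref{OBSERV1} and~\ref{OBSERV2}. I would organize the argument around whether or not $C$ is a good configuration for $G$.

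First I would dispose of the case in which $C$ is a good configuration for $G$. Observation~\ref{OBSERV2} then gives $m=n$, and the Proposition asserts $\grd(G;C)=m$ in this case, so $\grd(G;C)=n$; for $n=1$ this is the degenerate instance $C=\{1\}$, for which $m=n=1$, so nothing special happens.

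Next I would handle the case in which $C$ is not a good configuration, splitting according to whether $\{1,n\}\subset C$. If $\{1,n\}\subset C$, the Proposition still gives $\grd(G;C)=m$, while Observation~\ref{OBSERV1} gives $m=n-1$, hence $\grd(G;C)=n-1$. If $\{1,n\}\not\subset C$, then $C$ falls in the third branch of the Proposition, so $\grd(G;C)=m-1$, and Observation~\ref{OBSERV1} gives $m=n$, hence again $\grd(G;C)=n-1$. Either way $\grd(G;C)=n-1$, matching the claimed value.

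There is no real obstacle here; the one point worth verifying — rather than a genuine difficulty — is that the hypotheses ``$C$ is a good configuration'' and ``$\{1,n\}\subset C$'' do not overlap in a way that would force us to choose between the two observations. For $n\ge 2$, every good configuration violates $1\in C$ or $n\in C$ by clauses (ii)/(iii.1)/(iii.2), so $\{1,n\}\not\subset C$; and for $n=1$ the only admissible instance is the good configuration $C=\{1\}$, with $m=n=1$. Collecting the cases yields the statement.
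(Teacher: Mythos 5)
Your proposal is correct and matches the paper's (implicit) argument: the corollary is obtained exactly by substituting the value of $m$ from Observations~\ref{OBSERV1} and~\ref{OBSERV2} into the preceding Proposition, with the same case split on whether $C$ is a good configuration and whether $\{1,n\}\subset C$. Your extra check that a good configuration forces $\{1,n\}\not\subset C$ for $n\ge 2$ is precisely the content of Observation~\ref{OBSERV2}, so nothing is missing.
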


The second nontrivial example is related to web graphs \cite{ANNEGRET}.
Let $n, k$ be positive integers such that $n \geq 2(k+1)$.
A \emph{web} $W_n^k = (V,E)$ is a graph with $V = \{0,\ldots,n-1\}$ and $E = \{(i,j) : 0<|i-j|\leq k \text{~or~} |i-j|\geq n-k\}$.
Note that $N(i) = \{i \ominus k, i \ominus (k-1), \ldots, i \ominus 1, i \oplus 1,\ldots, i \oplus (k-1), i \oplus k\}$, where $\oplus$ and $\ominus$ stand for the addition and subtraction modulo $n$.
Therefore, $m=n-2k$ if $C=V$, and $m=n-2k+1$ if $C\neq V$.

\begin{prop}
Let $G$ be the web graph $W_n^k$ and $C \subset V(G)$ ($C$ possibly empty).
$G;C$ is a clutter if and only if $n > 2(k+1)$ or $C=V$.
Moreover, $\grd(G;C) = m$ in the following cases: (i) $C=V$, or (ii) there is $i\in V\setminus C$ such that $V\setminus N[i]$ induces a path $P_t$, $t=n-2k-1$, and $C$ is a good configuration for $P_t$.
Otherwise $\grd(G;C) = m-1$.
\end{prop}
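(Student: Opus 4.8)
The plan is to treat the three claims separately, using the path formula (Corollary~\ref{cor:path}) as a black box wherever the instance reduces to a path. \emph{Clutter.} First I would record the shape of neighbourhoods in $W_n^k$: $N\la v\ra$ is an arc of $2k$ consecutive vertices with a hole at $v$ when $v\notin C$, and an arc of $2k+1$ consecutive vertices when $v\in C$. Two distinct closed neighbourhoods have the same size, hence are never nested, so a proper inclusion $N\la u\ra\subsetneq N\la v\ra$ forces $u\notin C$, $v\in C$ and $|N[v]\setminus N(u)|=1$; a one-line case distinction (is the missing vertex $v$ or not?) reduces it to $N(u)=N(v)$. Now $V\setminus N(v)$ is, inside the cycle $C_n$, the isolated vertex $v$ together with a run of $n-2k-1$ consecutive vertices; when $n\ge 2k+3$ that run has length $\ge 2$, so the decomposition pins down $v$ and no such twins exist, while when $n=2k+2$ the antipodal pairs $\{v,v+k+1\}$ are twins. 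This gives: $G;C$ is a clutter iff $n>2(k+1)$ or $C=V$ (for $n=2k+2$, $C\ne V$ one exhibits a nesting $N\la i\ra\subseteq N\la i+k+1\ra$ for a suitable $i\notin C$).

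\emph{Lower bound and cases (i), (ii).} I would first show $\grd(W_n^k;C)\ge n-2k$ for every $C$ by checking that $(0,1,\dots,n-2k-1)$ is legal: after placing $0,\dots,j-2$ the covered set lies in the arc $\{-k,\dots,j-2+k\}$, which is proper as long as $j\le n-2k$ and omits $j-1+k\in N\la j-1\ra$. Together with the bound $\grd\le m$, this settles (i) ($C=V$: $\grd=n-2k=m$) and gives $\grd\in\{m-1,m\}$ when $C\ne V$. For (ii): if $i\notin C$, $W_n^k[V\setminus N[i]]\cong P_t$ with $t=n-2k-1$, and $C$ is a good configuration for that path, take a legal dominating sequence $(w_1,\dots,w_t)$ of $P_t$ from Corollary~\ref{cor:path}. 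Since $V\setminus N[i]$ is an \emph{induced} subgraph disjoint from $N(i)$, the sequence $(i,w_1,\dots,w_t,v)$, with $v$ any vertex of $N(i)$, is legal and dominating in $W_n^k$: $i$ footprints $N(i)$; each $w_j$ footprints inside $V\setminus N[i]$ exactly what it footprints in $P_t$ and never reaches $i$, so the $w_j$ together cover $V\setminus N[i]$ and leave only $i$ uncovered; finally $v$ footprints $i$. The length is $t+2=m$.

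\emph{The converse (the crux).} Suppose $C\ne V$, (ii) fails, yet $\grd(W_n^k;C)=m$, and fix a legal dominating sequence $S=(s_1,\dots,s_m)$. Counting footprints ($\sum_i|W_i|=n$, $|W_1|\ge 2k$, $|W_i|\ge 1$) forces $|W_1|=2k$, so $s_1\notin C$, and $|W_i|=1$ for $i\ge 2$. Letting $u$ be the vertex with $W_m=\{u\}$, the size argument used for the bound $\grd\le m$ gives $u\notin C$ and $\{s_1,\dots,s_{m-1}\}=V\setminus N(u)=\{u\}\cup B$, where $B=V\setminus N[u]$, $|B|=t$, and $s_m\in N(u)$. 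Using that in $W_n^k$ (with $k\ge 2$) one has $|N(x)\setminus N(y)|\ge 2$ for all distinct $x,y$, the vertex placed right after $s_1$ must be $u$ when $s_1\ne u$, and then $|W|=1$ at $u$ forces $s_1$ (if $s_1\in B$) to have a single neighbour in $B$. In every case, deleting $u$ from $S$ leaves a sequence through $B$ which, read inside the induced subgraph $G'=W_n^k[B]$, footprints one new vertex at each step, i.e.\ a legal sequence of $G'$ of full length $t=|V(G')|$; hence $\grd(G';C\cap B)=t$. But $G'=P_t^k$, and a full-length legal sequence needs a starting vertex with $\la\ra$-neighbourhood of size $1$ (again by footprint counting), hence a vertex of degree $1$; since $P_t^k$ has minimum degree $\ge 2$ whenever $k\ge 2$ and $t\ge 3$, we must have $k=1$ or $t\le 2$, i.e.\ $G'=P_t$ is a genuine path. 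Corollary~\ref{cor:path} then turns $\grd(P_t;C\cap B)=t$ into ``$C\cap B$ is a good configuration for $P_t$'', and the degenerate value $t=1$ (i.e.\ $n=2k+2$) is checked by hand via the identity $N(u+k+1)=N(u)$. Either way $i=u$ witnesses (ii), a contradiction, so $\grd(W_n^k;C)=m-1$.

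I expect the last part to be the real work: controlling the very rigid structure of a maximum legal sequence. Two delicate points stand out — the local inequality $|N(x)\setminus N(y)|\ge 2$ fails for $k=1$, so the cycle case $W_n^1=C_n$ (and with it the scenario of a maximum sequence that does not ``begin at the centre'' $u$) needs a separate, more careful argument, and the boundary value $t=1$ must be handled directly. The clutter statement and the lower-bound and constructive parts are routine once the arc description of the neighbourhoods is available.
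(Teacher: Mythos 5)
Your clutter argument, the lower-bound sequence $(0,1,\dots,n-2k-1)$ and the construction $(i,w_1,\dots,w_t,v)$ are fine and essentially the paper's; anchoring the converse at the last footprinted vertex $u$ instead of at the first vertex of the sequence is a legitimately different decomposition. But the converse as written has problems. First, the inequality you invoke, $|N(x)\setminus N(y)|\ge 2$ for all distinct $x,y$ when $k\ge 2$, is false: in $W_7^2$ one has $N(0)\setminus N(3)=\{6\}$, and in general the bound fails at cyclic distance $k+1$ whenever $n\le 2k+3$. It does hold for two distinct vertices of $B=V\setminus N[u]$ when $k\ge 2$ (their cyclic distance is at most $n-2k-2$), which is all you actually need, but that restricted statement is what you would have to prove. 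Moreover, even granting $s_2=u$ and $|N(s_1)\cap B|=1$, the step ``deleting $u$ leaves a full-length legal sequence of $G[B]$'' is asserted rather than argued: you need the short count showing that no $B$-vertex of the sequence has its unique new footprint inside $N(u)$ (the number of such bad steps is exactly $|N(s_1)\cap B|-1$ when $s_1\in B$).

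Second, and this is the genuine gap: the case $k=1$ is precisely where the statement has content (for $k\ge 2$ and $t\ge 3$ the set $V\setminus N[i]$ induces $P_t^k$, never a path, so case (ii) is vacuous there and your degree argument closes that branch), and for $k=1$ your two structural claims are not merely in need of extra care --- they are false. In $C_7=W_7^1$ with $C=\emptyset$, the sequence $(3,5,0,2,4,1)$ is legal, dominating and of length $m=6$; its last footprinted vertex is $u=0$, yet $s_1=3$ is an interior vertex of $B=\{2,3,4,5\}$ with two neighbours in $B$, $s_2=5\ne u$, and deleting $u$ yields $(3,5,2,4)$, which is not legal in $G[B]=P_4$ because the only new vertex footprinted by $5$ in the original sequence was $6\in N(0)$. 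So an arbitrary maximum sequence need not restrict to $B$, and extracting a good configuration on the path from an arbitrary length-$m$ sequence when $k=1$ --- the crux of the converse --- is exactly what you have deferred without supplying. You would need either to show that whenever some length-$m$ sequence exists, one of the rigid form ``$i$, then $V_i$, then a neighbour of $i$'' exists (this is essentially how the paper frames its converse, arguing per starting vertex), or to repair the given sequence; until that $k=1$ argument is in place the proof is incomplete.
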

\begin{proof}
Suppose that $n = 2(k+1)$.
Then, $N(i) = V \setminus \{i, i \oplus (k+1)\} = N(i \oplus (k+1))$ for all $i$.
If $C \neq V$, let $i' \in V \setminus C$. Then, $N\la i'\ra \subset N\la i' \oplus (k+1)\ra$ and $G;C$ is not a clutter.
If $C = V$, $N\la i\ra \setminus N\la i \oplus (k+1)\ra = \{i\}$ for all $i \in V$.
In addition,  $j \oplus (k+1) \in N\la i\ra \setminus N\la j\ra$ for all $j \in V\setminus \{i, i \oplus  (k+1)\}$.
Therefore, $G;V$ is a clutter.

Now, suppose that $n > 2(k+1)$. Note that, for all $i,j$ such that $i \neq j$, there is
$k_{ij} \in N(i) \setminus N(j)$ such that $k_{ij} \notin \{i,j\}$. Therefore, $k_{ij} \in N\la i\ra \setminus N\la j\ra$
and $G;C$ is a clutter.

In order to determine $\grd(G;C)$, consider the sequence $(0, \ldots, n - 2k-1)$ of length $n-2k$.
It is a legal dominating sequence since vertex $0$ footprints $1,\ldots,k$ and $n-k,\ldots,n-1$ (and itself if $0 \in C$),
vertex $1$ footprints $k+1$ (and vertex $0$ if $0 \notin C$) and, if $n>2(k+1)$, vertex $i$ footprints $k+i$ for all $i = 2,\ldots,n-2k-1$. The last footprinted vertex is $n-k-1$. Therefore, we obtain $m-1\leq n-2k\leq \grd(G;C)\leq m$.
If $C=V$, we are done since $m=n-2k$.

Assume now that $C\neq V$. Let $i\in V\setminus C$, and $V_i\doteq V\setminus N[i] =\{i\oplus (k+1), i\oplus (k+2),\ldots,i\oplus (n-k-1)\}$. Note that $|V_i|=n-2k-1$. It suffices to show that $\grd(G;C) = m$ if and only if $V_i$ induces a path and $C$ is a good configuration for it. Recall that $m=n-2k+1$. The unique way of getting a dominating legal sequence of length $m$ is by starting with $i$ and then choosing a vertex that will footprint only one more vertex at each step. This means that, at any step but the last one, we cannot choose a vertex from $N(i)$ because it would footprint $i$ and at least one vertex from $V_i$. So, after $i$, we must choose all the $n-2k-1$ vertices in $V_i$, and finally a vertex from $N(i)$. In addition, $V_i$ must induce a path, otherwise some of its vertices would footprint at least 2 vertices.
Consider the sequence $S' = (i,S,j)$ where $j \in N(i)$ and $S$ is a maximum legal sequence of $G[V_i]$.
In virtue of Corollary~\ref{cor:path}, $|S|=n-2k-1$ if $C$ is a \gconf for $G[V_i]$ and $|S|=n-2k-2$ otherwise.
Therefore, $\grd(G;C) = m$ if and only if $V_i$ induces a path and $C$ is a \gconf for it.
\end{proof}

For example, consider the web graphs depicted in Figure \ref{fig:webs}.
Filled Circles denote the vertices of $C$,
i.e.$\!$ $C = \{1,2,3,4,5,7\}$. Upper bounds for $\grd(G;C)$ are $m = 7$ (if $G=W_8^3$) and $m = 3$ (if $G=W_8^1$).
In (a), neither $C$ is a \gconf for $V\setminus N[0] = \{2,3,4,5,6\}$ nor for $V\setminus N[6] = \{0,1,2,3,4\}$, so a legal sequence of maximum length is $(0,1,2,3,4,5)$.
In (b), since $C$ is a \gconf for $V\setminus N[0] = \{4\}$, a legal sequence of maximum length is $(0,4,1)$.
Note also that this latter instance is not a clutter since $N\la 0\ra \subset N\la 4\ra$. 
\begin{figure}
	\centering
		\includegraphics[scale=0.12]{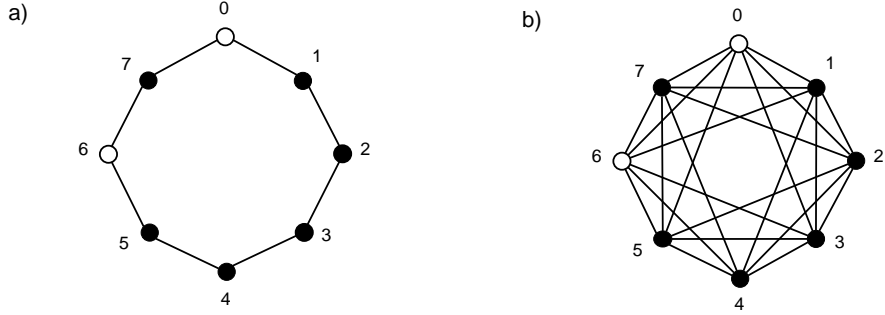}
	\caption{Web graphs: a) $W_8^1$, b) $W_8^3$.}  \label{fig:webs}
\end{figure}

\section{Integer programming formulations} \label{SSFORMU}

Legal sequences can be modeled as binary vectors as follows. For every $v \in V$ and $i = 1,\ldots,m$, let $y_{vi}$ be a binary value such
that $y_{vi} = 1$ if $v$ is chosen in step $i$. Also, for every $u \in V$ and $i = 1,\ldots,m$, let $x_{ui}$ be a binary value such that
$x_{ui} = 1$ if $u$ is available to be footprinted in step $i$ (i.e.$\!$ not footprinted by any of the chosen vertices in previous steps).
The following formulation computes the parameter $\grd(G;C)$:
\begin{align}
\max \sum_{i=1}^m \sum_{v \in V} y_{vi} & & \notag \\
\textrm{subject to} & & \notag \\
 & \sum_{v \in V} y_{vi} \leq 1, & \forall~~i = 1,\ldots,m \label{RESTR1}\\
 & \sum_{i=1}^m y_{vi} \leq 1, & \forall~~v \in V          \label{RESTR2}\\
 & y_{vi+1} \leq \sum_{u \in N\la v\ra} (x_{ui} - x_{ui+1}), & \forall~~v \in V,~ i = 1,\ldots,m-1 \label{RESTR3}\\
 & x_{ui} + \sum_{v \in N\la u\ra} y_{vi} \leq 1, & \forall~~u \in V,~ i = 1,\ldots,m \label{RESTR4}\\
 & x_{ui+1} \leq x_{ui}, & \forall~~u \in V,~i = 1,\ldots,m-1 \label{RESTR5}\\
 & x, y \in \{0,1\}^{n m}. & \notag
\end{align}
Constraints \eqref{RESTR1} ensure that at most one vertex is chosen in each step.
Constraints \eqref{RESTR2} guarantee that each vertex can be chosen no more than once.
Constraints \eqref{RESTR3} specify that $v$ can be chosen in the next step only if there is at least one non-footprinted vertex
of $N\la v\ra$ in the current one (that will become footprinted).
Constraints \eqref{RESTR4} force to footprint $u$ if some vertex of $N\la u\ra$ is chosen.
Finally, constraints \eqref{RESTR5} say that footprinted vertices will remain footprinted for the next steps.

Although this formulation works fine, there exist several integer solutions for each legal sequence.
Below, we present families of constraints that remove unnecessary solutions.
\begin{itemize}
\item A variable $x_{ui}$ can be set to zero even if it is not footprinted by any vertex. In order to forbid this situation, we consider these constraints:
\begin{align}
 & \sum_{v \in N\la u\ra} y_{v1} \geq 1 - x_{u1}, & \forall~~u \in V \label{RESTR6}\\
 & \sum_{v \in N\la u\ra} y_{vi+1} \geq x_{ui} - x_{ui+1}, & \forall~~u \in V,~i = 1,\ldots,m-1 \label{RESTR7}
\end{align}
\item It is allowed to choose no vertex in a step, leading to several symmetric solutions. They can be avoided by
replacing (\ref{RESTR1}) by
\begin{align}
 & \sum_{v \in V} y_{vi} = 1, & \forall~~i = 1,\ldots,{LB} \label{RESTR8}\\
 & \sum_{v \in V} y_{vi+1} \leq \sum_{v \in V} y_{vi}, & \forall~~i = {LB},\ldots,m-1 \label{RESTR9}
\end{align}
where $LB$ is a lower bound of $\grd(G;C)$.
These constraints force to assign vertices to the first $k$ steps when the solution represents a legal sequence of length $k$.
In addition, sequences of length smaller than $LB$ are removed.
\item We can impose that every integer solution represents a dominating sequence:
\begin{align}
 & \sum_{i=1}^m \sum_{v \in N\la u\ra} y_{vi} \geq 1, & \forall~~u \in V \label{RESTR10}
\end{align}
\end{itemize}

Here is a table of possible formulations. All of them include constraints \eqref{RESTR2}-\eqref{RESTR5}.
\begin{center}
\begin{tabular}{|c|c|c|c|c|c|}
\hline
 Form. & \eqref{RESTR1} & \eqref{RESTR6}-\eqref{RESTR7} & \eqref{RESTR8}-\eqref{RESTR9} & \eqref{RESTR10} & Solutions \\
\hline
 $F_1$ & \tick          &                               &                               &                 & 16253 \\
 $F_2$ & \tick          & \tick                         &                               &                 & 205 \\
 $F_3$ &                &                               & \tick                         &                 & 463 \\
 $F_4$ &                & \tick                         & \tick                         &                 & 43 \\
 $F_5$ & \tick          &                               &                               & \tick           & 1668 \\
 $F_6$ & \tick          & \tick                         &                               & \tick           & 124 \\
 $F_7$ &                &                               & \tick                         & \tick           & 68 \\
 $F_8$ &                & \tick                         & \tick                         & \tick           & 28 \\
\hline
\end{tabular}
\end{center}
There is a one-to-one correspondence between legal sequences of $G;C$ and solutions of $F_4$, and a one-to-one correspondence between dominating legal sequences and solutions of $F_8$.

The last column shows the number of integer solutions for the bull graph, i.e.$\!$ $V = \{1,\ldots,5\}$ and
$E = \{(1,2),(1,3),(2,3),(2,4),(3,5)\}$, with $C = V$, $m = 4$ and $LB = 1$.

Although one would expect that the fewer the number of solutions is, the smaller the size of the B\&B tree will be, it is known
that the addition of symmetry-breaking inequalities not always help to improve the optimization \cite{MARGOT}.
Therefore, we carried out computational experiments in order to determine which formulation performs better.
They are reported in Section \ref{SSCOMPU}.

\section{The polytope of legal sequences} \label{SSPOLY}

From now on, we suppose that $G$ has at least 3 vertices. We recall that $G$ is connected and $G;C$ is twin free.

The following easy result will be useful throughout the section:
\begin{lemma} \label{LEMITANV2}
Let $u \in V$.\\
(i)~ There exists $v \in N\la u\ra$ such that $|N\la v\ra| \geq 2$.\\
(ii)~ If $G;C$ is a clutter, $|N\la u\ra| \geq 2$.
\end{lemma}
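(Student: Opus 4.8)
For part (i), the plan is to reason by contradiction: suppose every $v \in N\langle u\rangle$ has $|N\langle v\rangle| = 1$ (note $N\langle v\rangle \neq \emptyset$ always, so the only alternative to $\geq 2$ is $=1$). I would first observe what it means for a vertex $v$ to satisfy $|N\langle v\rangle| = 1$. If $v \notin C$, then $N\langle v\rangle = N(v) = \{w\}$ for a single $w$, so $v$ is a leaf of $G$ with unique neighbour $w$. If $v \in C$, then $N\langle v\rangle = N[v] = \{v\}$, which forces $N(v) = \emptyset$, i.e.\ $v$ is isolated in $G$; since $G$ is connected with at least $3$ vertices this is impossible. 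Hence every $v \in N\langle u\rangle$ is a leaf of $G$. Now take any $v \in N\langle u\rangle$ (nonempty); since $v$ is a leaf with unique neighbour, and $u$ is adjacent to $v$ (because $v \in N\langle u\rangle$ means $v \in N(u)$ or $v = u$; if $v = u$ then $u$ itself would be a leaf or isolated, and I would handle that case: $u$ a leaf means its neighbour $w$ has $|N\langle w\rangle|\geq 2$ since $G$ is connected on $\geq 3$ vertices and $w$ has another neighbour, and $w \in N(u) \subseteq N\langle u\rangle$ — wait, is $w \in N\langle u\rangle$? yes, $w \in N(u)$), the unique neighbour of $v$ must be $u$. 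So all vertices in $N\langle u\rangle \setminus \{u\}$ are leaves attached to $u$. Combined with connectivity and $n \geq 3$, this means $u$ has at least two leaf-neighbours $v_1, v_2$, but then $N(v_1) = \{u\} = N(v_2)$, so $v_1, v_2$ are twins (as $v_1, v_2 \notin C$ forces $N\langle v_i\rangle = \{u\}$), contradicting that $G;C$ is twin free. The only remaining loose end is the case $N\langle u\rangle = \{u\}$, which as noted means $u$ is isolated — impossible. This completes part (i); the main obstacle is simply bookkeeping the case split on whether $u$ or its neighbours lie in $C$ and whether $v = u$ occurs.

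For part (ii), suppose $G;C$ is a clutter and $|N\langle u\rangle| \leq 1$, hence $N\langle u\rangle = \{w\}$ for some $w$. Being a clutter means $x \antes y$ for all distinct $x,y$, i.e.\ $N\langle y\rangle \not\subseteq N\langle x\rangle$ for all distinct $x, y$. Pick any vertex $z$ distinct from $u$ with $w \in N\langle z\rangle$ — such a $z$ exists: if $w \notin C$, then since $G$ is connected on $\geq 3$ vertices, $w$ has a neighbour $z \neq u$ (if $w$'s only neighbour were $u$, then... actually $w$ might equal $u$'s only neighbour but still have $u$ as its only neighbour too, making $u,w$ a component of size $2$ — ruled out by $n\geq 3$ and connectivity unless there's another neighbour); and if $w \in C$ then $w \in N[w] = N\langle w\rangle$, so take $z = w$ (and $z \neq u$ since $N\langle u\rangle = \{w\}$ with $u \notin C$ would give $N(u) = \{w\}$, and $u = w$ is impossible as then $N\langle u\rangle = N[u] \ni u$ but also the single element is $u$, forcing $u$ isolated). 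Then $N\langle u\rangle = \{w\} \subseteq N\langle z\rangle$, contradicting the clutter property $z \antes u$. Alternatively — and this is cleaner — part (ii) follows immediately from part (i): by (i) there is $v \in N\langle u\rangle$ with $|N\langle v\rangle| \geq 2$; if $|N\langle u\rangle| = 1$ then $N\langle u\rangle = \{v\}$, and since $v \in N\langle v\rangle$ cannot hold unless $v \in C$... hmm, this needs $v \in N\langle u\rangle \Rightarrow$ relationship; better: pick $w \in N\langle v\rangle$, and note $N\langle u\rangle = \{v\}$; I claim $v \antes u$ fails. Indeed $N\langle u \rangle = \{v\}$, and I need $v \in N\langle \text{(footprinter)} \rangle$; the direct clutter argument above is the safer route. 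I expect the main obstacle in (ii) to be exhibiting the vertex $z$ with $N\langle u\rangle \subseteq N\langle z\rangle$ cleanly across the $C$-membership cases, so I would lead with the derivation from (i) if it can be made to work and otherwise fall back on the explicit construction of $z$.
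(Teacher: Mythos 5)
Your proposal is correct and follows essentially the same route as the paper: both parts are proved by contradiction, using that neighborhoods are nonempty, that $G$ is connected with at least $3$ vertices, twin-freeness in part (i), and in part (ii) the observation that $N\la u\ra=\{w\}$ contained in some other vertex's neighborhood violates the clutter property (the paper phrases this as $\{u,w\}$ forming a $K_2$ component, you phrase it as exhibiting a witness $z$ with $N\la u\ra\subset N\la z\ra$ — the same containment argument). The only difference is cosmetic: in (i) you invoke connectivity first and twin-freeness last (two leaf twins), while the paper uses twin-freeness first to force $|N\la u\ra|=1$ and then connectivity; your aborted attempt to deduce (ii) from (i) is unnecessary since your direct argument is complete.
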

\begin{proof}
In first place, note that $N\la v\ra \neq \emptyset$ and $N\la v\ra\neq \{v\}$, for every vertex $v$, since $G$ is a connected graph with at least 3 vertices.\\
(i)~Suppose that $N\la v\ra = \{u\}$ for all $v \in N\la u\ra$. Since $G$ does not have twin vertices, $|N\la u\ra| = 1$.
Therefore, $N\la u\ra = \{v\}$ for some $v \neq u$, and so $G$ is a $K_2$, which is absurd.\\
(ii)~Suppose that $u \in V$ such that $N\la u\ra = \{v\}$, and so $(u,v)$ is an edge.
Since $G;C$ is a clutter, $v \notin N\la w\ra$ for all $w\in V\setminus\{u\}$.
Then, $(u,v)$ is a connected component of $G$, which is also absurd.
\end{proof}

Let $P_i$ be the convex hull of the set of binary feasible solutions in formulation $F_i$.
In this section, we study the facial structure of $P_1$. We choose it for two reasons.
On the one hand, every valid inequality of this polytope is still valid for $P_i$ with $i \geq 2$.
On the other hand, the dimension of $P_1$ just depends on the size of the instance, which is an interesting feature for polyhedral studies.
The same does not happen with the other polytopes.

For instance, if we consider $P_2$ (recall that in this polytope $x_{ui-1}=1$ and $x_{ui}=0$ hold only if someone is footprinting $u$
in step $i$), then the equality $x_{u1} + \sum_{v \in N\la u\ra} y_{v1} = 1$ is valid for all $u \in V$, due to
\eqref{RESTR4} and \eqref{RESTR6}.
In addition, the equality $x_{vi+1} + y_{wi+1} = x_{vi}$ is valid for all $i = 1,\ldots,m-1$ and all pairs $v, w \in V$ such that
$N\la v\ra = \{w\}$.
Indeed, in this case, $x_{vi}\leq x_{vi+1}+y_{wi+1}\leq 1$ by \eqref{RESTR7} and \eqref{RESTR4}, respectively. If $x_{vi}=1$, the equality is clear. If $x_{vi}=0$, then $x_{vi+1}=0$ by \eqref{RESTR5} and $y_{wi+1}=0$ because $w$ is the only possible footprinter of $v$ and must be
chosen at step $i$ or before.

To analyse $P_3$ (in this polytope, sequences of size $k$ always use the first $k$ steps), we introduce a new parameter.
Let $i(G;C,v)$ be the index of the largest step where $v$ can occur in some legal sequence of $G;C$. Clearly, for all $v \in V$,
$i(G;C,v)$ is at most $\grd(G;C)$ and, in particular, $i(G;C,v)$ coincides with it for some $v$. Thus, obtaining these parameters is as hard as
determining the Grundy domination number itself.
In $P_3$, the equality $y_{vi} = 0$ is valid for all $v \in V$ and all $i > i(G;C,v)$.
In addition, if $k\in \{1,\ldots,LB\}$ and $V_k = \{v \in V: |N\la v\ra|  = n-k+1\}$, 
then $x_{ui} = 0$ for all $u\in V_k$ and all $i \geq k$. Indeed, no matter are the $k\leq LB$ first chosen vertices, some of them  will footprint any vertex in $V_k$.
In Prop.~\ref{PROPDIM3}, we give the dimension of $P_3$ for the case $LB = 1$.

Regarding $P_5$ (in this polytope, legal sequences are also dominating), the inequality $\sum_{i=1}^m \sum_{v \in V} y_{vi} \geq \gd(G)$ is valid, where $\gd(G)$ is the classical domination number of $G$. If $\gd(G) = \grd(G;C)$ such an inequality becomes an equality.
In \cite{BRESAR2014}, the authors partially characterize the family of graphs where $\gd(G) = \grd(G;V)$. 
Full characterization of this family is still an open question for $C = V$, as well as for $C = \emptyset$ \cite{BRESAR2016}.

Polytopes associated with the remaining formulations are even harder to study.

\subsection{Dimension}

For the sake of readability, write $P = P_1$ and assume $V = \{1, \ldots, n\}$.
Also, define $(\initial)$ as the solution $(x,y)$ where $x_{vi} = 1$ and $y_{vi} = 0$ for all $v \in V$ and $i = 1,\ldots,m$.
Note that it always belongs to $P$.

Below, we define some operations in order to manipulate solutions. Let $(x,y)$ be an integer solution of $P$.
\begin{itemize}
\item For a given set $S \subset V$ and $i = 1,\ldots,m$, the operation \emph{footprint}, denoted by $(x',y') = \fp(S,i;x,y)$, means that
vertices from $S$ are footprinted in step $i$. That is:
\begin{itemize}
\item $x'_{v,j} = 0$ for all $v \in S$ and $j = i,\ldots,m$
\item $x'_{v,j} = x_{v,j}$ for all $v \in S$ and $j = 1,\ldots,i-1$
\item $x'_{u,j} = x_{u,j}$ for all $u \in V\setminus S$ and $j = 1,\ldots,m$
\item $y'_{u,j} = y_{u,j}$ for all $u \in V$ and $j = 1,\ldots,m$
\end{itemize}
In the case that we need to footprint just one vertex $v$, we write $\fp(v,i;x,y)$.
\item For a given vertex $v$ and $i = 1,\ldots,m$, the operation \emph{choose}, denoted by $(x',y') = \ch(v,i;x,y)$, means that $v$ is
chosen in step $i$, and simply set $y'_{v,i} = 1$ to the solution $\fp(N\la v\ra,i;x,y)$.
\end{itemize}
We remark that, each time an operation is applied, we must check if $(x',y')$ is an integer solution of $P$, since
it is generally not guaranteed. The same operations can be used to propose points of $P_3$.

\begin{prop} \label{PROPDIM}
The polytope $P$ is full dimensional.
\end{prop}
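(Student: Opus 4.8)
The plan is to exhibit $nm+1$ affinely independent points of $P$. Since $P$ lives in $\mathbb{R}^{2nm}$ (there are $nm$ variables $x_{ui}$ and $nm$ variables $y_{vi}$) but the $x$- and $y$-blocks interact, the cleanest route is the standard one: take the base point $(\initial)$ (the all-$x$-ones, all-$y$-zeros solution, which belongs to $P$ by the remark above) and produce, for every coordinate, a feasible integer solution that differs from $(\initial)$ in a controlled way, so that the resulting difference vectors are linearly independent. Equivalently, I will show that the only equation $\sum_{u,i}\alpha_{ui}x_{ui}+\sum_{v,i}\beta_{vi}y_{vi}=\gamma$ satisfied by all integer points of $P$ is the trivial one.

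First I would handle the $x$-coordinates. For a fixed vertex $u$ and step $i$, apply the operation $\fp(u,i;\initial)$, i.e. footprint $u$ at step $i$ without choosing anyone; this zeroes out exactly $x_{u,i},x_{u,i+1},\dots,x_{u,m}$ and changes nothing else. One must check this is feasible in $F_1$: constraints \eqref{RESTR1}, \eqref{RESTR2} are trivial since all $y=0$; \eqref{RESTR4} holds because the left side is $x_{ui}\le 1$; \eqref{RESTR5} holds since we only turned $x$'s from $1$ to $0$ monotonically in $i$; \eqref{RESTR3} holds because its right side is $\sum_{w\in N\la v\ra}(x_{wi}-x_{w,i+1})\ge 0$ while its left side is $y_{v,i+1}=0$. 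Plugging $(\initial)$ and $\fp(u,i;\initial)$ into the putative equation and subtracting gives $\sum_{j=i}^{m}\alpha_{u,j}=0$ for every $u$ and every $i$; taking successive differences in $i$ forces $\alpha_{u,j}=0$ for all $u,j$. So the equation reduces to $\sum_{v,i}\beta_{vi}y_{vi}=\gamma$, and evaluating at $(\initial)$ gives $\gamma=0$.

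Next the $y$-coordinates. For a fixed $v$ and step $i$, I want a feasible point with $y_{v,i}=1$ and all other $y$'s zero. The natural candidate is $\ch(v,i;\initial)$: footprint $N\la v\ra$ at step $i$ and set $y_{v,i}=1$. Feasibility: \eqref{RESTR1}, \eqref{RESTR2} are immediate; \eqref{RESTR4} for $u\in N\la v\ra$ now reads $x_{ui}+y_{vi}=0+1\le 1$ (the footprint set $x_{u,i}$ to $0$), and for $u\notin N\la v\ra$ it reads $x_{ui}+0\le1$; \eqref{RESTR5} holds by monotonicity; \eqref{RESTR3} for index $i-1$ reads $y_{v,i}\le\sum_{u\in N\la v\ra}(x_{u,i-1}-x_{u,i})=|N\la v\ra|\cdot 1=|N\la v\ra|\ge 1$ when $i\ge 2$, and for $i=1$ there is no such constraint; \eqref{RESTR3} for indices $j\ge i$ reads $y_{v,j+1}=0\le 0$. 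Hence $\ch(v,i;\initial)\in P$, and since all $\alpha$'s and $\gamma$ vanish, the equation evaluated here gives $\beta_{v,i}=0$. As $v,i$ were arbitrary, all $\beta_{v,i}=0$, proving the equation is trivial.

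The only real subtlety — the step I would treat as the main obstacle — is the boundary case $i=2$ (and $i=1$) in verifying \eqref{RESTR3} for the point $\ch(v,i;\initial)$: one needs $|N\la v\ra|\ge 1$, which is guaranteed because $G$ has no isolated vertex in $V\setminus C$ and, by the standing assumption $n\ge 3$ with $G$ connected, $N\la v\ra\neq\emptyset$ for every $v$ (this is exactly the opening observation in the proof of Lemma~\ref{LEMITANV2}). Everything else is a routine constraint-by-constraint check, and no appeal to twin-freeness or connectivity beyond nonemptiness of neighborhoods is needed for full-dimensionality of $P=P_1$.
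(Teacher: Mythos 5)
Your proof is correct and takes essentially the same route as the paper's: it uses the ``only trivial valid equation'' characterization of full dimension, comparing $(\initial)$ with the points $\fp(u,i;\initial)$ to force all $x$-coefficients (and the right-hand side) to zero and then with $\ch(v,i;\initial)$ to force all $y$-coefficients to zero, exactly the points used in Prop.~\ref{PROPDIM}. Only note that your opening sentence should promise $2nm+1$ affinely independent points rather than $nm+1$; this slip is immaterial, since the argument you actually carry out is the equation-based one and it is complete, including the feasibility checks that rely only on $N\la v\ra\neq\emptyset$.
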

\begin{proof}
Suppose that the following equation is valid for all points of $P$:
$$\sum_{i=1}^m \sum_{u \in V} \mu^x_{ui} x_{ui} + \sum_{i=1}^m \sum_{v \in V} \mu^y_{vi} y_{vi} = \mu_0$$
We prove that all of its coefficients are zero by giving pairs of points $(x^1,y^1), (x^2,y^2) \in P$ and using the fact
that $(\mu^x x^1 + \mu^y y^1) - (\mu^x x^2 + \mu^y y^2) = 0$ since both terms are equal to $\mu_0$.
\begin{enumerate}
\item Let $u \in V$ and consider the points $(x^1, y^1) = (\initial)$ and $(x^2, y^2) = \fp(u, m; \initial)$.
      Since the difference between both points lies just in $x_{um}$, we get $\mu^x_{um} = 0$.
\item Let $u \in V$ and $i = 1,\ldots,m-1$ and consider $(x^1, y^1) = \fp(u, i+1; \initial)$ and $(x^2, y^2) = \fp(u, i; \initial)$.
      We get $\mu^x_{ui} = 0$.
\item At this point $\mu^y y^1 - \mu^y y^2 = 0$ for any $(x^1,y^1), (x^2,y^2) \in P$ since $\mu^x = \bf 0$.
      Let $v \in V$ and $i = 1,\ldots,m$. Consider $(x^1, y^1) = (\initial)$ and $(x^2, y^2) = \ch(v, i; \initial)$.
      We get $\mu^y_{vi} = 0$.
\end{enumerate}
Therefore, $P$ is full dimensional.
\end{proof}

\begin{prop}  \label{PROPDIM3}
Let $LB = 1$ and $V_1 = \{v\in V: N \la v\ra = V\}$.
The dimension of $P_3$ is $m(n-|V_1|) + \sum_{v \in V} i(G;C,v)-1$ and a minimal system $\mathcal{M}$ of equalities is:
\begin{align*}
  & y_{vi} = 0 & \text{for all}~ v \in V ~\text{and}~ i = i(G;C,v) + 1, \ldots, m, \\
  & x_{ui} = 0 & \text{for all}~ u \in V_1 ~\text{and}~ i = 1, \ldots, m, \\
  & \sum_{v\in V} y_{v1}=1. &
\end{align*}
\end{prop}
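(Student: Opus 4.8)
The plan is to follow the template used for $P$ in Proposition~\ref{PROPDIM}, but now in the presence of the equality system $\mathcal{M}$. First I would verify that every equality in $\mathcal{M}$ is indeed valid for $P_3$: the constraint $y_{vi}=0$ for $i>i(G;C,v)$ holds because $i(G;C,v)$ is, by definition, the last step at which $v$ can occur in a legal sequence and in $P_3$ a legal sequence of length $k$ occupies exactly steps $1,\dots,k$; the constraint $x_{ui}=0$ for $u\in V_1$ follows from \eqref{RESTR4} and \eqref{RESTR8}--\eqref{RESTR9} (with $LB=1$), since some vertex is chosen in step~$1$ and, as $N\la u\ra=V$, that vertex footprints $u$; and $\sum_{v\in V}y_{v1}=1$ is exactly \eqref{RESTR8} for $i=1$. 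One must also check that the coefficient vectors of the equalities in $\mathcal{M}$ are linearly independent; this is immediate because the three groups involve disjoint sets of variables ($y_{vi}$ with $i>i(G;C,v)$, $x_{ui}$ with $u\in V_1$, and $y_{v1}$), and within the third group there is a single equation. The number of equalities is $\sum_{v\in V}(m-i(G;C,v)) + m|V_1| + 1$, and the ambient space has dimension $2nm$, so if $\mathcal{M}$ is a complete system the dimension of $P_3$ is $2nm - \sum_{v\in V}(m-i(G;C,v)) - m|V_1| - 1 = m(n-|V_1|) + \sum_{v\in V}i(G;C,v) - 1$, matching the claim.

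The core of the proof is showing $\mathcal{M}$ is \emph{complete}, i.e.\ any equation $\sum_{i,u}\mu^x_{ui}x_{ui} + \sum_{i,v}\mu^y_{vi}y_{vi} = \mu_0$ valid on $P_3$ is a linear combination of the equalities in $\mathcal{M}$. Equivalently, after subtracting suitable multiples of the $\mathcal{M}$-equalities we may assume $\mu^y_{vi}=0$ for all $i>i(G;C,v)$, $\mu^x_{ui}=0$ for all $u\in V_1$, and (absorbing the last equality) it then suffices to prove all remaining coefficients vanish; the constant $\mu_0$ is then pinned down by the reference point. I would produce, for each surviving coefficient, a pair of feasible points of $P_3$ differing only in that coordinate, built with the $\fp$ and $\ch$ operations as in Proposition~\ref{PROPDIM}, being careful that now every constructed point must genuinely lie in $P_3$ — in particular it must respect \eqref{RESTR8}--\eqref{RESTR9} (consecutively filled first steps) and keep $x_{ui}=1$ whenever $u$ has not yet been footprinted. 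Concretely: to kill $\mu^x_{ui}$ for $u\notin V_1$, I would start from a legal sequence that is long enough to reach step $i$ and, using Lemma~\ref{LEMITANV2}, is chosen so that $u$ is footprinted at step $i$ in one point and only at step $i{+}1$ (or never) in the other; to kill $\mu^y_{vi}$ for $i\le i(G;C,v)$, I would take a legal sequence of length exactly $i(G;C,v)$ placing $v$ at step $i$, versus the same sequence with $v$ replaced appropriately, both padded to occupy a common prefix of steps so that \eqref{RESTR8}--\eqref{RESTR9} hold for both.

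The main obstacle I anticipate is the bookkeeping needed to realise these coordinate-swapping point pairs \emph{inside} $P_3$ rather than in the more permissive $P=P_1$: in $P_3$ one cannot leave a step empty, so the two points in each pair must share a legal prefix of chosen vertices of the same length, and the step at which a given vertex $v$ is placed to expose $\mu^y_{vi}$ is constrained by the very definition of $i(G;C,v)$. Showing that a legal sequence attaining $v$ at step $i$ exists for every $i\le i(G;C,v)$ requires the (easy but necessary) observation that if $v$ can occur at some step $\ell$, it can occur at any earlier step — one simply truncates or reorders the prefix; I would isolate this as a small auxiliary claim. A secondary subtlety is that when $u\in V_1$ we must \emph{not} try to exhibit a point with $x_{u1}=1$ (none exists), which is exactly why those coordinates are forced to $0$ and are handled by $\mathcal{M}$ rather than by a point pair; dually, for $u\notin V_1$ there is always a legal sequence whose first chosen vertex misses $u$, so $x_{ui}$ is genuinely free for small $i$, and Lemma~\ref{LEMITANV2}(i) is what guarantees we can always extend far enough. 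Once all point pairs are in place the conclusion is the standard dimension count recorded above.
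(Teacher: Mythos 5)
Your outline (validity of $\mathcal{M}$, independence, the dimension count, and completeness via point pairs built with $\fp$/$\ch$, with the coefficients $\mu^y_{v1}$ handled through the equality $\sum_{v}y_{v1}=1$) is the same skeleton as the paper's proof, but two of the ``cautions'' you impose on yourself mischaracterize $P_3$ and each one blocks exactly the point pairs the argument needs. First, you require every constructed point to ``keep $x_{ui}=1$ whenever $u$ has not yet been footprinted.'' That is constraint \eqref{RESTR6}--\eqref{RESTR7}, which belongs to $F_2/F_4$ but \emph{not} to $F_3$; in $P_3$ the variable $x_{ui}$ may be set to $0$ with no footprinter, subject only to \eqref{RESTR4}--\eqref{RESTR5}. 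The paper kills $\mu^x_{ui}$ for $u\notin V_1$ precisely by exploiting this freedom: take $v\notin N\la u\ra$, compare $\ch(v,1;\initial)$ with $\fp(u,i;\ch(v,1;\initial))$, get $\sum_{j\ge i}\mu^x_{uj}=0$, and induct downward on $i$. Under your extra requirement such pairs need not exist (e.g.\ if $V\setminus N\la u\ra$ is a single vertex, $u$ cannot stay genuinely unfootprinted beyond step $1$, so no legal sequence ``reaches step $i$'' with $u$ untouched), and indeed with \eqref{RESTR6}--\eqref{RESTR7} added the polytope satisfies further equalities (as the paper notes for $P_2$), so the stated dimension could not be proved this way.

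Second, you assert that ``in $P_3$ one cannot leave a step empty, so the two points in each pair must share a legal prefix of chosen vertices of the same length.'' With $LB=1$, \eqref{RESTR8}--\eqref{RESTR9} only force step $1$ to be used and the used steps to form a prefix; points of different lengths are perfectly feasible. This matters for the $y$-coefficients with $i\ge 2$: your ``same sequence with $v$ replaced appropriately, padded to a common prefix'' can only produce relations of the form $\mu^y_{vi}=\mu^y_{wi}$ within a fixed step, which is not enough (such within-step equalities are also satisfied by non-trivial equations like a constant times $\sum_v y_{vi}$, which is \emph{not} valid on $P_3$ for $i\ge2$). The paper instead compares the point of a legal sequence $S_{vi}$ of length $i$ ending in $v$ with the point of $S_{vi}$ with its last element deleted (length $i-1$, still feasible because $i\ge2$), which directly gives $\mu^y_{vi}=0$ once $\mu^x=\mathbf{0}$; the auxiliary fact you want is simply that subsequences of legal sequences are legal (dropping vertices from $S_v$, not reordering). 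Your replacement idea is correct only at $i=1$, where it reproduces the paper's step showing all $\mu^y_{v1}$ are equal. So the proposal as written has a genuine gap: repair it by discarding the two extra constraints and using unfootprinted $\fp$-flips and length-$(i{-}1)$ truncations as above.
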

\begin{proof}
We have seen that the equalities from $\mathcal{M}$ are all valid for $P_3$ and it is straightforward to prove that they are mutually
independent each other.
Thus, the stated expression gives an upper bound for $dim(P_3)$. Suppose that the following equality holds for all points of $P_3$:
$$\sum_{u \in V\setminus V_1} \sum_{i=1}^m  \mu^x_{ui} x_{ui} + \sum_{v \in V} \sum_{i=1}^{i(G;C,v)} \mu^y_{vi} y_{vi} = \mu_0$$
To complete the proof, it suffices to show that all the coefficients are zero, except for $\mu^y_{v1}$, $v\in V$, which have to be all equal to each other. We use an approach similar to the one applied in Prop.~\ref{PROPDIM}.

Note that, for a given $v \in V$, there exists a legal sequence $S_v$ such that $v$ is chosen in step $i(G;C,v)$.
Moreover, for a given $i = 1,\ldots,i(G;C,v)$, there exists a legal sequence $S_{vi}$ such that $v$ is chosen in step $i$ and $v$ is the last
element. $S_{vi}$ can be obtained from $S_v$ by dropping vertices.
\begin{enumerate}
\item Let $u \in V\setminus V_1$ and $i = 1,\ldots,m$. There exists $v\notin N\la u\ra$, and so $v$ does not footprint $u$. Consider $(x^1,y^1)=\ch(v,1;\initial)$ and $(x^2, y^2) = \fp(u, i; x^1,y^1)$. We obtain $\sum_{j=i}^m \mu^x_{uj} = 0$. By assigning $i$ in the order
$i=m,m-1,\ldots,1$, we get $\mu^x_{ui} = 0$ for all $i$.
\item At this point, $\mu^x = \bf 0$.
Let $w \in V$. We prove that $\mu^y_{v1}=\mu^y_{w1}$ for all $v \in V\setminus \{w\}$ by taking the points
$\ch(v,1;\initial)$ and $\ch(w,1;\initial)$.
\item Let $v \in V$ and $i = 2, \ldots, i(G;C,v)$. Consider $(x^1, y^1)$ as the point describing $S_{vi}$ and $(x^2,y^2)$ as the point representing $S_{vi}$ without $v$ (its last element). Note that $(x^2,y^2)$ still satisfies \eqref{RESTR8} because $i\geq 2$.
Using $\mu^x = \bf 0$, we get $\mu^y_{vi} = 0$.
\end{enumerate}
\end{proof}

\subsection{Facet-defining inequalities}

We start by mentioning some basic properties about the general facial structure of $P$.
\begin{prop} \label{prop:twins}
Let $\pi^x x + \pi^y y\leq \pi_0$ be a valid inequality for $P$. Then:\\
(i)~$\pi_0 \geq 0$;\\
(ii)~$\sum_{v \in S} \sum_{k=1}^{j(v)} \pi^x_{vk} \leq \pi_0$ for all $S \subset V$ and $j:S\rightarrow \{1,\ldots,m\}$.\\
Moreover, if $\pi^x x + \pi^y y\leq \pi_0$ is facet-defining for $P$, then:\\
(iii)~$\pi^y_{vj} \geq 0$, if the inequality is different from $-y_{vj}\leq 0$, for all $v \in V$ and $j = 1,\ldots,m$.
\end{prop}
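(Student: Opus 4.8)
\textbf{Proof proposal for Proposition~\ref{prop:twins}.}

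For part (i), the plan is to exhibit a feasible point of $P$ at which the left-hand side is nonnegative, so that $\pi_0$ must dominate it. The obvious candidate is $(\initial)$: here $y\equiv 0$ and $x\equiv 1$, so the inequality reads $\sum_{i,u}\pi^x_{ui}\leq \pi_0$. This alone does not give $\pi_0\geq 0$, so instead I would apply repeatedly the footprint operation. Starting from $(\initial)$ and footprinting \emph{every} vertex in step $1$ (i.e. taking $\fp(V,1;\initial)$) yields the all-zero point $(x,y)=(\bf 0,\bf 0)$, which is feasible for $F_1$ since constraints \eqref{RESTR1}--\eqref{RESTR5} are all satisfied by the zero vector (each is of the form ``$\leq$'' with nonnegative/zero entries, and \eqref{RESTR3} reads $0\leq 0$). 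Evaluating the valid inequality at this point gives $0\leq\pi_0$, which is exactly (i).

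For part (ii), fix $S\subset V$ and $j:S\to\{1,\ldots,m\}$. The plan is to build the feasible point that has $x_{vk}=0$ precisely for $v\in S$ and $k\geq j(v)$, with all other $x$-entries equal to $1$ and $y\equiv 0$; this is obtained from $(\initial)$ by successively applying $\fp(v,j(v);\cdot)$ for each $v\in S$ (the order is irrelevant because the footprint operations on distinct vertices touch disjoint blocks of coordinates). One must check this is a genuine integer solution of $P$: constraints \eqref{RESTR2}, \eqref{RESTR4} hold trivially since $y\equiv 0$; \eqref{RESTR1} holds; \eqref{RESTR5} holds because within each column $v$ the $x$-values are a nonincreasing $0/1$ string ($1$ up to $j(v)-1$, then $0$); and \eqref{RESTR3} reads $0\leq\sum_{u\in N\la v\ra}(x_{ui}-x_{u,i+1})\geq 0$ since each summand is $x_{ui}-x_{u,i+1}\in\{0,1\}$ by \eqref{RESTR5}. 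At this point the left-hand side of the valid inequality equals $\sum_{v\in V}\sum_{k: x_{vk}=1}\pi^x_{vk}=\sum_{v\in V}\bigl(\sum_{k=1}^m\pi^x_{vk}\bigr) - \sum_{v\in S}\sum_{k=j(v)}^m\pi^x_{vk}$; comparing with $(\initial)$ (for which the LHS is $\sum_{v\in V}\sum_{k=1}^m\pi^x_{vk}$) does not immediately isolate the claimed sum, so instead I would argue directly: the point chosen has $\sum_{v\in S}\sum_{k=1}^{j(v)-1}\pi^x_{vk}$ plus the contribution of $V\setminus S$; subtracting a second feasible point in which additionally \emph{all} coordinates outside the first $j(v)-1$ entries of columns in $S$, and all columns outside $S$, are footprinted in step $1$ leaves exactly $\sum_{v\in S}\sum_{k=1}^{j(v)-1}\pi^x_{vk}\leq \pi_0$. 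Relabelling $j(v)-1\mapsto j(v)$ (legitimate since $j$ ranges over $\{1,\ldots,m\}$ and $j(v)=0$ corresponds to dropping $v$ from $S$) gives (ii). The cleanest route, which I would actually write, is: take $(x^1,y^1)$ to be the all-zero point \emph{except} $x^1_{vk}=1$ for $v\in S$, $k\leq j(v)$, and $(x^2,y^2)$ the all-zero point; feasibility of $(x^1,y^1)$ is checked as above; subtracting gives $\sum_{v\in S}\sum_{k=1}^{j(v)}\pi^x_{vk}=(\text{LHS at }(x^1,y^1))-0\leq\pi_0-0$ using part (i) for the second point.

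For part (iii), suppose the facet-defining inequality is not $-y_{vj}\leq 0$; I must show $\pi^y_{vj}\geq 0$. The plan: since the facet is proper, there is a feasible point $(x^*,y^*)$ on the face with $y^*_{vj}=0$ (if every point of the face had $y_{vj}=1$, then $y_{vj}=1$ would be valid, contradicting $(\initial)\in P$ which has $y_{vj}=0$; and since the inequality is not $-y_{vj}\leq 0$ itself, the face is not contained in $\{y_{vj}=0\}$... actually I need a point \emph{on} the face with $y^*_{vj}=0$). The argument I would give: take any point $(\bar x,\bar y)$ on the facet; if $\bar y_{vj}=1$, I produce another feasible point by ``un-choosing'' $v$ at step $j$ — set $y_{vj}:=0$ and restore the $x$-coordinates in column(s) of $N\la v\ra$ from step $j$ onward to their values before $v$ was footprinted there (being careful that vertices in $N\la v\ra$ may be footprinted by other chosen vertices; so I only restore those $x_{uk}$ for $u\in N\la v\ra$, $k\geq j$ that were footprinted \emph{solely} by $v$). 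This new point $(\bar x',\bar y')$ is feasible and satisfies $\bar y'_{vj}=0$ and $\bar x'\geq \bar x$ coordinatewise with $\bar y'=\bar y$ elsewhere, so its LHS value is $(\pi^x\bar x+\pi^y\bar y) - \pi^y_{vj} + \sum (\text{restored }x\text{-terms})\,\pi^x_{\bullet}$. Validity gives $\pi^x\bar x'+\pi^y\bar y'\leq\pi_0=\pi^x\bar x+\pi^y\bar y$, i.e. $\sum_{\text{restored}}\pi^x_{uk}\leq\pi^y_{vj}$. If I can find such a facet point together with one more feasible point certifying $\pi^y_{vj}\geq\sum_{\text{restored}}\pi^x_{uk}\geq 0$... this needs $\pi^x\geq 0$ on the relevant coordinates, which is \emph{not} given a priori. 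So the honest approach is different: among all feasible points, choose one on the facet, say $(\hat x,\hat y)$, with $\hat y_{vj}=1$ (if none exists, the face lies in $\{y_{vj}=0\}$, whence $-y_{vj}\leq 0$ is valid and, being a face of codimension $1$ containing our facet, must equal it — contradiction with the hypothesis). From $(\hat x,\hat y)$, apply the ``choose $v$ at step $j$ was redundant'' operation is not available; instead \emph{add} nothing but compare with the point obtained by footprinting away the distinctive coordinate. The key step — and the part I expect to be the main obstacle — is precisely this: showing that whenever $\pi^y_{vj}<0$ one can modify a facet point by deleting the choice of $v$ at step $j$ \emph{without decreasing} $\pi^x x$, so that the modified point strictly violates the inequality. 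The mechanism is Lemma~\ref{LEMITANV2}(i): there is $w\in N\la v\ra$ with $|N\la w\ra|\geq 2$, which gives enough room to reroute footprinters so the restored $x$-coordinates can be confined to ones whose $\pi^x$-coefficients are controlled by part (ii) (any $\sum_{k=1}^{j(w)}\pi^x_{wk}\leq\pi_0$, combined with $\pi_0\geq 0$). Making this rerouting precise, and handling the boundary case $j=m$ and the case where $v$ is chosen in some step after $j$ in $\hat y$, is the delicate bookkeeping; everything else is the routine feasibility checking of \eqref{RESTR1}--\eqref{RESTR5} for the constructed points.
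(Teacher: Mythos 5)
Your parts (i) and (ii) are correct and, in their final form, are exactly the paper's argument: the all-zero point lies in $P$ (giving $0\leq\pi_0$), and the point with $y\equiv 0$, $x_{vk}=1$ precisely for $v\in S$, $k\leq j(v)$, and $x=0$ elsewhere is feasible and evaluates the left-hand side to $\sum_{v\in S}\sum_{k=1}^{j(v)}\pi^x_{vk}$. (Your first construction via $\fp(v,j(v);\cdot)$ from $(\initial)$ was off by one and kept the columns outside $S$ at $1$; the ``cleanest route'' you settle on is the right point, so no harm done.)

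Part (iii), however, has a genuine gap, and it is precisely at the step you flag as ``the main obstacle.'' You assume that after un-choosing $v$ at step $j$ you must restore some $x$-coordinates of $N\la v\ra$ to keep feasibility, which drags in terms $\sum\pi^x_{uk}$ of uncontrolled sign and forces you into an unspecified ``rerouting'' via Lemma~\ref{LEMITANV2}(i) that you never complete; part (ii) cannot rescue this, since it only bounds sums of $\pi^x$-coefficients from above by $\pi_0$, not from below by $0$. The missing observation is that no restoration is needed: $P=P_1$ is the convex hull of solutions of $F_1$, which contains none of the constraints \eqref{RESTR6}--\eqref{RESTR7}, so nothing forces $x_{ui}=1$ when $u$ is unfootprinted. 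In every constraint \eqref{RESTR1}--\eqref{RESTR5} the variable $y_{vj}$ occurs only on the smaller side with nonnegative coefficient, so if $(x^1,y^1)$ is an integer point of $P$ with $y^1_{vj}=1$ and $y^2$ is obtained from $y^1$ by setting that single coordinate to $0$, then $(x^1,y^2)\in P$ with $x^1$ untouched. Taking $(x^1,y^1)$ on the facet with $y^1_{vj}=1$ (which exists by the argument you correctly sketch: otherwise the facet would be contained in, hence equal to, the facet $\{(x,y)\in P: y_{vj}=0\}$, contradicting the hypothesis), validity at $(x^1,y^2)$ together with $\pi^x x^1+\pi^y y^1=\pi_0$ gives $-\pi^y_{vj}=\pi^y(y^2-y^1)\leq 0$ immediately. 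This is the paper's one-line argument; your bookkeeping about which footprinters to restore, the case $j=m$, etc., is all unnecessary once this monotonicity in $y$ is noted.
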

\begin{proof}
(i)~ Since $({\bf 0}, {\bf 0}) \in P$, $0 = \pi^x {\bf 0} + \pi^y {\bf 0} \leq \pi_0$.\\
(ii)~~ Consider the point $(x,{\bf 0})$ such that:
$x_{sk} = 1$ for all $s \in S$ and $k = 1,\ldots,j(s)$;
$x_{sk} = 0$ for all $s \in S$ (such that $j(s) \neq m$) and $k = j(s)+1,\ldots,m$;
$x_{vk} = 0$ for all $v \in V \setminus S$ and $k = 1,\ldots,m$.
Clearly $(x,{\bf 0}) \in P$. Then, $\sum_{v \in S} \sum_{k=1}^{j(v)} \pi^x_{vk} = \pi^x x + \pi^y {\bf 0} \leq \pi_0$.\\
(iii)~Let $F$ be the facet defined by $\pi^x x + \pi^y y\leq \pi_0$.
Since the inequality is different from $-y_{vj}\leq 0$, there is a point $(x^1,y^1) \in F$ such that $y^1_{vj}=1$.
Let $y^2$ be the vector obtained from $y^1$ by setting $y^1_{vj}=0$.
Clearly $(x^1,y^2)\in P$. Then, $\pi^x x^1 + \pi^y y^2\leq \pi_0$.
Also, $\pi^x x^1 + \pi^y y^1= \pi_0$, which implies $\pi^y(y^2-y^1)= - \pi^y_{vj} \leq 0$.
\end{proof}

We use the following technique for proving sufficient conditions for a given inequality $\alpha^x x + \alpha^y y \leq \alpha_0$ to be
facet-defining of $P$.
Let $F$ be the face defined by the inequality. Hence, $\alpha^x x + \alpha^y y = \alpha_0$ for all $(x,y) \in F$.
If $\mu^x x + \mu^y y = \mu_0$ is another equation valid for all points of $F$, provide a real number $\beta$ and prove that this latter
equality can always be obtained from the product between $\beta$ and the former one.

Necessary conditions for $\alpha^x x + \alpha^y y \leq \alpha_0$ to be facet-defining are proved as follows: negate the condition, then
give a stronger inequality that dominates it or provide an equality (not equivalent to $\alpha^x x + \alpha^y y = \alpha_0$) that is
satisfied by all the points of $F$.

\begin{prop} \label{EASYFACETS}
The following inequalities define a facet of $P$:\\
(i)~ $y_{vi} \geq 0$ for all $v \in V$ and $i = 1,\ldots,m$\\
(ii)~ $x_{ui} \geq 0$ for all $u \in V$ if and only if $i = m$\\
(iii)~ Constraints \eqref{RESTR5} for each $u\in V$ and $i=1,\ldots,m-1$ if and only if for all $v \in V$, $N\la v\ra \neq \{u\}$
      (i.e.$\!$ every leaf of $G$ adjacent to $u$ must belong to $C$).
\end{prop}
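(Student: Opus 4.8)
The plan is to treat each of the three items with the same master technique announced just before the statement: assume an equation $\mu^x x + \mu^y y = \mu_0$ holds on the face $F$ defined by the inequality in question, and show it must be a scalar multiple of that inequality. Throughout I would lean on the solution manipulation toolkit---the base point $(\initial)$, the operations $\fp$ and $\ch$, and the fact stated there that differences of two points on $F$ annihilate the $\mu$-functional. For the necessity directions of (ii) and (iii) I would instead negate the hypothesis and exhibit either a dominating inequality or an extra equation satisfied on $F$.

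For (i), $y_{vi}\ge 0$: I would first find points of $F$ (i.e.\ with $y_{vi}=0$) that differ in a single coordinate, sweeping out all coefficients. Starting from $(\initial)$, which has $y_{vi}=0$ everywhere, footprinting single vertices at single steps in the order $i=m,m-1,\dots$ kills all $\mu^x_{uj}$ exactly as in Proposition \ref{PROPDIM}; these points all lie in $F$. Then for $w\ne v$ or $j\ne i$, comparing $(\initial)$ with $\ch(w,j;\initial)$ (still in $F$ since $y_{vi}$ stays $0$) forces $\mu^y_{wj}=0$; one must check $\ch(w,j;\initial)$ is genuinely a feasible point of $P$, which it is because choosing a single vertex in one step is always legal. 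What remains is $\mu^y_{vi}$, which is left free, giving $\beta=-\mu^y_{vi}$ and the multiple $-y_{vi}\le 0$. Lemma \ref{LEMITANV2} is the tool that guarantees enough room (e.g.\ that $N\la v\ra$ has a vertex of degree $\ge 2$) when one needs the chosen vertex to footprint something legal at a later step.

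For (ii), $x_{ui}\ge 0$: the ``if'' direction ($i=m$) is the nontrivial one. On the face $x_{um}=0$ I would again run the single-coordinate sweep, but now only over points with $x_{um}=0$: all $\mu^x$ coefficients except possibly $\mu^x_{um}$ vanish (the difference $\fp(u,m;\initial)$ vs $(\initial)$ is precisely the one I am \emph{not} allowed to use, which is exactly why $\mu^x_{um}$ survives), and all $\mu^y$ coefficients vanish via $\ch$ operations that keep $x_{um}=0$---here footprinting $N\la v\ra$ never re-raises $x_{um}$, so feasibility is fine. For the ``only if'' direction, if $i<m$ then $x_{ui}\ge 0$ is dominated: constraint \eqref{RESTR5} gives $x_{ui}\ge x_{u,i+1}\ge\cdots\ge x_{um}\ge 0$, so it is not facet-defining (it is implied by a chain of \eqref{RESTR5} plus $x_{um}\ge 0$, and the face would be contained in the proper face $x_{u,i+1}\le x_{ui}$ intersected appropriately)---I would phrase this as: the inequality $x_{u,i}\ge 0$ is the sum of $x_{u,i}-x_{u,i+1}\ge 0,\dots,x_{u,m-1}-x_{u,m}\ge 0$ and $x_{u,m}\ge 0$, hence cannot be a facet since $P$ is full-dimensional and none of those summands is a nonnegative multiple of it.

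For (iii), constraint \eqref{RESTR5}, $x_{u,i+1}-x_{u,i}\le 0$: the ``only if'' direction is the clean obstruction---if some $v$ has $N\la v\ra=\{u\}$ (a leaf $v\notin C$ attached to $u$), then on the face $x_{u,i+1}=x_{u,i}$ one shows the extra equation $x_{u,i+1}+y_{v,i+1}=x_{u,i}$ holds (this is exactly the $P_2$-style identity derived earlier in the section, valid here on $F$ because $v$ is the only possible footprinter of $u$), and this equation is not equivalent to \eqref{RESTR5}, so \eqref{RESTR5} is not a facet. For the ``if'' direction, assuming no vertex has $N\la v\ra=\{u\}$, I build points on $F$ (where $x_{u,i}=x_{u,i+1}$): take $(\initial)$-type points but footprint $u$ simultaneously at steps $i$ and $i+1$ (or at neither), keeping the two coordinates equal, and sweep. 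The master equation gets $\mu^x_{wj}=0$ for $(w,j)\notin\{(u,i),(u,i+1)\}$, the $\mu^y$ coefficients vanish via $\ch$ operations arranged not to separate $x_{u,i}$ from $x_{u,i+1}$---and crucially, to kill $\mu^y_{v',i+1}$ for a vertex $v'\in N\la u\ra$ I need to \emph{choose} $v'$ at step $i+1$ on a point where $u$ is still available at step $i$; the hypothesis $N\la v'\ra\neq\{u\}$ (so $v'$ footprints someone other than $u$, using Lemma \ref{LEMITANV2}) is what makes such a point feasible while respecting \eqref{RESTR3}. The sweep leaves $\mu^x_{u,i}$ and $\mu^x_{u,i+1}$, and one final comparison (footprint $u$ at step $i$ only is \emph{infeasible} by \eqref{RESTR5}, but footprint at $i+1$ only is feasible and lies off $F$—no; rather, compare a point with $u$ footprinted from step $i$ onward against one with $u$ footprinted from step $i+1$ onward—both on $F$? the first is not on $F$) forces $\mu^x_{u,i}=-\mu^x_{u,i+1}$, yielding the desired multiple.

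The main obstacle I anticipate is the ``if'' direction of (iii): unlike (i) and (ii), the feasibility checks for the $\ch$ operations are delicate, because choosing a vertex $v'\in N\la u\ra$ at step $i+1$ interacts with constraint \eqref{RESTR3} (there must be a genuinely newly-footprinted vertex) and must be arranged so that $x_{u,i}$ and $x_{u,i+1}$ stay tied together throughout the construction. Making the hypothesis ``$N\la v\ra\neq\{u\}$ for all $v$'' do precisely the work of producing a footprint target other than $u$---this is where Lemma \ref{LEMITANV2}(i) and twin-freeness are essential---is the crux, and getting the last two surviving coefficients $\mu^x_{u,i},\mu^x_{u,i+1}$ to be negatives of each other requires one carefully chosen pair of points on $F$ rather than the routine single-coordinate sweep.
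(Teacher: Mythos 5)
Your handling of (i), (ii) and of both necessity directions is essentially the paper's own argument and is sound: the footprint sweep plus $\ch$ comparisons for (i), the sweep anchored at points with $x_{um}=0$ for (ii) with $i=m$, the domination of $x_{ui}\geq 0$ ($i<m$) by the chain of \eqref{RESTR5} together with $x_{um}\geq 0$, and, for (iii), the extra equation on the face when some $v$ has $N\la v\ra=\{u\}$ (that equation is just constraint \eqref{RESTR3} for $v$, which on $F$ forces $y_{v,i+1}=0$; your parenthetical justification ``$v$ is the only possible footprinter of $u$'' has the roles reversed, but the conclusion stands and matches the paper, which phrases it as domination by \eqref{RESTR3}).

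The genuine gap is in the sufficiency direction of (iii), which you yourself flag as the crux but do not resolve. First, the points you propose for killing $\mu^y_{v',i+1}$ with $v'\in N\la u\ra$ --- ``choose $v'$ at step $i+1$ on a point where $u$ is still available at step $i$'' --- do not lie on $F$: choosing such a $v'$ at step $i+1$ footprints $u$ there, so $x_{ui}=1\neq 0=x_{u,i+1}$, and the difference trick cannot be applied. The hypothesis $N\la v'\ra\neq\{u\}$ actually plays the opposite role: one first takes the base point $\fp(u,i;\initial)$, which lies on $F$ because $x_{ui}=x_{u,i+1}=0$, and then $\ch(v',i';\fp(u,i;\initial))$ is feasible and stays on $F$ precisely because $v'$ has a vertex of $N\la v'\ra\setminus\{u\}$ to footprint at step $i'$ even though $u$ is already footprinted. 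Second, the ``one carefully chosen pair'' needed for $\mu^x_{u,i+1}=-\mu^x_{ui}$ is left undetermined, and your candidate comparisons are wrong: you assert that the point with $u$ footprinted from step $i$ onward is not on $F$, but it is (both coordinates vanish), whereas the point with $u$ footprinted from step $i+1$ onward is the one off $F$. The correct pair is $\fp(u,i;\initial)$ versus $\fp(u,i+2;\initial)$ (or $(\initial)$ when $i+2>m$): both lie on $F$, they differ only in the coordinates $x_{ui}$ and $x_{u,i+1}$, and subtracting gives $\mu^x_{ui}+\mu^x_{u,i+1}=0$. Without these two repairs the sufficiency proof of (iii) does not go through.
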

\begin{proof}
\noindent (i)~ Here, $F = \{ (x,y) \in P : y_{vi} = 0\}$. Let $\beta = \mu^y_{vi}$.
Then, we have to prove that $\mu^x = \bf 0$, and $\mu^y_{v'i'} = 0$ for all $(v',i') \neq (v,i)$.
Items 1 and 2 given in the proof of dimension (Prop.~\ref{PROPDIM}) show again that $\mu^x = \bf 0$. At this point
$\mu^y y^1 - \mu^y y^2 = 0$ for any $(x^1,y^1), (x^2,y^2) \in P$.
Let $v' \in V$ and $i' = 1,\ldots,m$ such that $(v',i') \neq (v,i)$ and consider $(x^1, y^1) = (\initial)$ and
$(x^2, y^2) = \ch(v', i'; \initial)$. We get $\mu^y_{v'i'} = 0$.\\

\noindent (ii)~ Suppose that $i = m$. Here, $F = \{ (x,y) \in P : x_{um} = 0\}$. Let $\beta = \mu^x_{um}$.
Then, we have to prove these cases:
1) $\mu^x_{u'i'} = 0$ for all $(u',i') \neq (u,m)$, and
2) $\mu^y_{v'i'} = 0$ for all $(v',i')$.
Consider the point $(x^0 ,y^0) = \fp(u, m; \initial)$.
For each case, we write the condition and the points involved to prove it:
\begin{enumerate}
\item $\mu^x_{u'i'} = 0$ for all $(u',i') \neq (u,m)$.\\
			Let $u' \in V\setminus\{u\}$. Consider the points $(x^1, y^1) = (x^0, y^0)$ and $(x^2, y^2) = \fp(u', m; x^0, y^0)$.
      We get $\mu^x_{u'm} = 0$.
			Now, let $u' \in V$ and $i' = 1,\ldots,m-1$ and consider $(x^1, y^1) = \fp(u', i'+1; x^0, y^0)$ and $(x^2, y^2) = \fp(u', i'; x^0, y^0)$.
      We get $\mu^x_{u'i'} = 0$.
\item $\mu^y_{v'i'} = 0$ for all $(v',i')$.\\
			Let $v' \in V$ and $i' = 1,\ldots,m$. Consider $(x^1, y^1) = (x^0, y^0)$ and $(x^2, y^2) = \ch(v', i'; x^0, y^0)$.
      By using equalities obtained in item 1, we get $\mu^y_{v'i'} = 0$.
\end{enumerate}
To see that the inequality is not facet-defining when $i < m$, note that it is dominated
by constraints \eqref{RESTR5}.\\

\noindent (iii)~ Here, $F = \{ (x,y) \in P : x_{ui+1} - x_{ui} = 0\}$. Let $\beta = -\mu^x_{ui}$.
Then, we have to prove these cases:
1) $\mu^x_{ui+1} = -\mu^x_{ui}$,
2) $\mu^x_{u'i'} = 0$ for all $(u',i') \notin \{(u,i),(u,i+1)\}$, and
3) $\mu^y_{v'i'} = 0$ for all $(v',i')$.
\begin{enumerate}
\item $\mu^x_{ui+1} = -\mu^x_{ui}$.\\
			Consider the points $(x^1, y^1) = \fp(u, i; \initial)$ and $(x^2, y^2) = \fp(u, i+2; \initial)$ if $i+2 \leq m$ or
			$(x^2, y^2) = (\initial)$ otherwise. Then, $\mu^x x^1 + \mu^y y^1 - \mu^x x^2 - \mu^y y^2 = - \mu^x_{ui+1} - \mu^x_{ui} = 0$.
\item $\mu^x_{u'i'} = 0$ for all $(u',i') \notin \{(u,i),(u,i+1)\}$.\\
			Let $u' \in V$ and $i' = 1,\ldots,m$ such that $(u',i') \notin \{(u,i),(u,i+1)\}$. Consider the points
			$(x^1, y^1) = \fp(u', i'; \initial)$ and $(x^2, y^2) = \fp(u', i'+1; \initial)$ if $i < m$ or $(x^2, y^2) = (\initial)$ otherwise.
			We get $\mu^x_{u'i'} = 0$.
\item $\mu^y_{v'i'} = 0$ for all $(v',i')$.\\
			Let $v' \in V$ and $i' = 1,\ldots,m$. Consider $(x^1, y^1) = \fp(u, i; \initial)$ and $(x^2, y^2) = \ch(v', i'; x^1, y^1)$.
			By using equalities obtained in item 2, we get $\mu^y_{v'i'} = 0$.
			Note that, even in the case that $u \in N\la v'\ra$ and $i' > i$, we have $(x^2, y^2) \in F$ since $N\la v'\ra\neq \{u\}$ by
			hypothesis and, therefore, $v'$ has a vertex to footprint from $N\la v'\ra \setminus \{u\}$ in step $i'$.
\end{enumerate}
To see that the inequality does not define a facet when $u$ is adjacent to a leaf $v\in V\setminus C$, note that it is dominated
by $x_{ui+1}+y_{vi+1}\leq x_{ui}$, which is constraint \eqref{RESTR3} for $v$.
\end{proof}

\begin{cor} \label{EASYFACETSCLUTTER}
If $G;C$ is a clutter, constraints \eqref{RESTR5} always define facets of $P$.
\end{cor}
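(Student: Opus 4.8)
The plan is to derive this immediately from Proposition~\ref{EASYFACETS}(iii), whose criterion is that for every pair $u,v \in V$ one has $N\la v\ra \neq \{u\}$; equivalently, that no leaf of $G$ adjacent to $u$ lies in $V\setminus C$. So the whole task reduces to checking that a clutter instance cannot contain a vertex $v$ whose neighborhood is a singleton.

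First I would invoke Lemma~\ref{LEMITANV2}(ii): since $G;C$ is a clutter, $|N\la v\ra| \geq 2$ for every $v \in V$. In particular $N\la v\ra$ cannot equal $\{u\}$ for any $u$, because that set has cardinality one. Hence the hypothesis of Proposition~\ref{EASYFACETS}(iii) is met for every choice of $u \in V$ and every $i = 1,\ldots,m-1$, and therefore each constraint \eqref{RESTR5} defines a facet of $P$. There is essentially no obstacle here; the content of the statement is entirely absorbed by Lemma~\ref{LEMITANV2}(ii), which rules out the pathological leaf configuration that would otherwise allow \eqref{RESTR5} to be dominated by an instance of \eqref{RESTR3}.
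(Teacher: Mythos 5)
Your proof is correct and matches the paper's argument: the paper's proof of this corollary is exactly the one-line observation that Lemma~\ref{LEMITANV2}(ii) rules out $N\la v\ra=\{u\}$, so the condition of Proposition~\ref{EASYFACETS}(iii) holds for every $u$ and $i$. You have merely spelled out the same reasoning in full.
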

\begin{proof}
Straightforward from Lemma \ref{LEMITANV2}.(ii).
\end{proof}

Constraints~\eqref{RESTR1} can be generalized as follows.
Define $V^\subset\doteq \{u\in V: v \noantes u ~~\forall v\in V\}$ and $V^\supset\doteq \{w\in V: w \noantes v ~~\forall v\in V\}$.
Note that $v\in V^\supset$ is equivalent to $N\la v\ra=V$ (in other words, $V^\supset$ is the set of universal vertices of
$G$ belonging to $C$). Note also that $v,v'\in V^\subset$ or $v,v'\in V^\supset$ imply $N\la v\ra =N\la v'\ra$. As there are no twin vertices, we must have $|V^\subset|\leq 1$ and $|V^\supset|\leq 1$.

\begin{prop} \label{RESTR1STRONGER}
Let $i = 1,\ldots,m$.
If $V^\subset = \emptyset$ or $i = m$, and $V^\supset = \emptyset$, constraint~\eqref{RESTR1} for $i$ defines a facet of $P$.
Otherwise, the following stronger inequalities are facet-defining for $P$:
\begin{itemize}
\item If $V^\subset = \{u\}$, $i < m$ and $V^\supset = \emptyset$ then
\begin{equation} \label{RESTR1STRONGER1}
\sum_{v \in V} y_{vi} + \sum_{j=i+1}^m y_{uj} \leq 1.
\end{equation}
\item If $V^\subset = \emptyset$ or $i = m$, and $V^\supset = \{w\}$ then
\begin{equation} \label{RESTR1STRONGER2}
x_{wi} + \sum_{j=1}^{i-1} y_{wj} + \sum_{v \in V} y_{vi} \leq 1.
\end{equation}
\item If $V^\subset = \{u\}$, $i < m$ and $V^\supset = \{w\}$ then
\begin{equation} \label{RESTR1STRONGER3}
x_{wm} + \sum_{j=1}^{i-1}y_{wj} + \sum_{v \in V} y_{vi} + \sum_{j=i+1}^m y_{uj}\leq 1.
\end{equation}
\end{itemize}
where, in \eqref{RESTR1STRONGER2} and \eqref{RESTR1STRONGER3}, $\sum_{j=1}^{i-1}y_{wj}$ vanishes when $i = 1$.
\end{prop}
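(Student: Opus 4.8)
The plan is to note first that the hypotheses of the four items exhaust all triples $(V^\subset,V^\supset,i)$ with $1\le i\le m$, and that when $i=m$ or one of $V^\subset,V^\supset$ is empty the stronger inequalities degenerate to \eqref{RESTR1} (so the four situations genuinely partition the cases). For validity of the three strengthenings the ingredients are purely combinatorial facts about the ``$\antes$'' relation: if $u\in V^\subset$ then $N\la u\ra\subseteq N\la v\ra$ for every $v$, so once any $v^*$ is chosen at step $i$ the vertex $u$ can never footprint anything afterwards and $\sum_{j>i}y_{uj}=0$; if $w\in V^\supset$ then $N\la w\ra=V$, hence $w$ is a universal vertex of $C$ and $w\in N\la v\ra$ for all $v$, so after $w$ is chosen nothing else can be chosen and $w$ is already footprinted, whereas before $w$ is chosen \eqref{RESTR4} for $u=w$ reads $x_{wi}+\sum_{v\in V}y_{vi}\le1$. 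Combining these with \eqref{RESTR1}, \eqref{RESTR2}, \eqref{RESTR5} (and, for \eqref{RESTR1STRONGER3}, with the observation that any chosen vertex forces $x_{wm}=0$) gives validity case by case.

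For the facet proofs I would use the paper's standard technique: write the relevant inequality as $\alpha^x x+\alpha^y y\le\alpha_0$ (with $\alpha_0=1$), assume $\mu^x x+\mu^y y=\mu_0$ holds on the face $F$ it defines, and show $(\mu^x,\mu^y,\mu_0)=\beta(\alpha^x,\alpha^y,\alpha_0)$ for a suitable $\beta$. The toolkit is the point $(\initial)$ together with the points $\ch(v,i;\initial)$, which are always feasible (choosing $v$ at step $i$ and footprinting exactly $N\la v\ra$ satisfies \eqref{RESTR3} for $i\ge2$ and \eqref{RESTR4} at step $i$, because $\{u':v\in N\la u'\ra\}\subseteq N\la v\ra$), plus extra $\fp(\cdot,j;\cdot)$ operations and two-choice legal sequences; each $\ch(v,i;\initial)$ lies on $F$, since exactly one vertex occupies step $i$ and the extra terms of the strong inequalities vanish there (e.g. $x_{wi}=x_{wm}=0$ and $\sum_{j<i}y_{wj}=0$ at $\ch(v,i;\initial)$, as $w\in N\la v\ra$).

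I would then proceed in this order. (1) Show $\mu^x_{uj}=0$ for all coordinates except the ``special'' one --- namely $x_{wi}$ in \eqref{RESTR1STRONGER2} and $x_{wm}$ in \eqref{RESTR1STRONGER3}: for $u\notin V^\supset$ pick $\ch(v_0,i;\initial)$ with $u\notin N\la v_0\ra$ (possible because $u$ is not a universal vertex of $C$) and footprint $u$ downward from step $m$; for $u=w$ footprint $w$ at steps $<i$ inside $\ch(v_0,i;\initial)$, and at steps $>i$ inside $(\initial)$; in \eqref{RESTR1STRONGER3} also obtain $\mu^x_{wj}=0$ for $i\le j\le m-1$ by comparing $\ch(u,j+1;\initial)$ with $\fp(w,j;\ch(u,j+1;\initial))$, both in $F$ and differing only in $x_{wj}$. (2) Get $\mu_0=\beta$ from $\ch(v,i;\initial)\in F$, and the special $\mu^x$-coefficient $=\beta$ in \eqref{RESTR1STRONGER2}--\eqref{RESTR1STRONGER3} by comparing $(\initial)\in F$ with a base point. (3) Prove $\mu^y_{vi}=\beta$ for every $v\in V$ by comparing $\ch(v,i;\initial)$ with $\ch(v',i;\initial)$ (legitimate now that $\mu^x$ is $0$ off the special coordinate, which is $0$ in both points). (4) Identify the ``chain'' coefficients as $\beta$: $\mu^y_{wj}$ for $j<i$ by comparing $\ch(w,i;\initial)$ with $\ch(w,j;\initial)$ (choosing $w$ footprints all of $V$, so the $x$-differences sit at steps $<i$ and already vanish); and $\mu^y_{uj}$ for $j>i$ from the comparisons among $\ch(u,l;\initial)$, $l\ge i$, which after step~(1) telescope to $\mu^y_{u,i+1}=\dots=\mu^y_{um}=\beta$ (and in \eqref{RESTR1STRONGER3} reconfirm $\mu^x_{wm}=\beta$). (5) Finally $\mu^y_{vj}=0$ for every remaining pair $(v,j)$ with $j\ne i$, obtained by appending the choice of $v$ at step $j$ to a suitable base point --- the two-choice sequence ``$v$ at step $j$, $v'$ at step $i$'' when $j<i$, or $\ch(v',i;\initial)$ followed by $v$ at step $j$ when $j>i$ --- always taking $v'$ with $N\la v'\ra\setminus N\la v\ra\ne\emptyset$ so the sequence is legal and stays on $F$ (here one uses that $w$, being universal, lies in $N\la v'\ra$, so no special coordinate is disturbed).

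The hard part is step~(1) for \eqref{RESTR1STRONGER3}: the obvious comparisons among the points $\ch(u,l;\initial)$ yield only the coupled relations $\mu^y_{u,l}-\mu^y_{u,l+1}=\mu^x_{wl}$ and $\mu^y_{ui}-\mu^y_{um}=\sum_{j=i}^{m-1}\mu^x_{wj}$, which by themselves do not separate the $x$-coefficients near step $m$ from the $y$-coefficients; the decoupling rests on the extra observation that $\fp(w,j;\ch(u,j+1;\initial))$ differs from $\ch(u,j+1;\initial)$ in $x_{wj}$ only and is still feasible and on $F$ (nothing is chosen at step $j$, so footprinting $w$ there is harmless), which pins $\mu^x_{wj}=0$ for $i\le j\le m-1$. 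Apart from this, the burden is the routine but long uniform verification that every constructed point is feasible --- \eqref{RESTR3} at step $i$ when $i\ge2$, and \eqref{RESTR4} around steps $i$, $j$, $m$ --- and still lies on $F$ (no second vertex at step $i$; $u$ never chosen after step $i$ once step $i$ is occupied; the degenerate behaviour when $i=1$ or $i=m$), checked across all four cases.
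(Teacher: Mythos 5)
Your proposal follows essentially the same route as the paper's own proof: the same face (with the appropriate terms vanishing), the same canonical points $(\initial)$, $\ch(\cdot)$, $\fp(\cdot)$, and essentially the same order of deductions; your steps (1)--(5) correspond to the paper's items 1--8, so as far as the published argument goes, you have reproduced it.

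One remark about the step you yourself single out as the crux. The claim that $(x^2,y^2)=\fp(w,j;\ch(u,j+1;\initial))$ is feasible ``because nothing is chosen at step $j$'' is not true in general: constraint \eqref{RESTR3} for $u$ at step $j+1$ needs a vertex of $N\la u\ra$ that is still unfootprinted at step $j$, and after the extra footprint of $w$ this fails exactly when $N\la u\ra=\{w\}$. The paper's proof uses the same pair of points (item~2, case $k=m$, $i'\geq i$) and carries the same hidden requirement, so your argument is neither weaker nor stronger than the published one. Moreover, in that corner case no choice of points can repair the step, because the statement itself fails there: take $V=\{u,w,z_1,z_2\}$ with edges $uw,wz_1,wz_2,z_1z_2$ and $C=\{w\}$, so that $V^\subset=\{u\}$, $V^\supset=\{w\}$, $N\la u\ra=\{w\}$, $m=4$, and pick any $i<m$. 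Every integer point on the face of \eqref{RESTR1STRONGER3} then satisfies the extra equality $x_{wj}-x_{w,j+1}=y_{u,j+1}$ for each $j=i,\ldots,m-1$: if $y_{u,j+1}=1$ this is forced by \eqref{RESTR3}, while if $x_{wj}=1>x_{w,j+1}$ and $y_{u,j+1}=0$ then, since $N\la w\ra=V$, constraint \eqref{RESTR4} shows nothing is chosen up to step $j\geq i$, and \eqref{RESTR3} forbids choosing $u$ at any step after $j+1$, so the point cannot lie on the face. Hence \eqref{RESTR1STRONGER3} is not facet-defining for such instances, and both your proof and the paper's are correct only under the implicit additional assumption $N\la u\ra\neq\{w\}$ (which also covers the case $i'=i-1$ omitted from the paper's item~2). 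A very minor further point: in your step (5), for $j>i$ the auxiliary vertex $v'$ must satisfy $N\la v\ra\setminus N\la v'\ra\neq\emptyset$ (i.e.\ $v'\antes v$), not the condition you wrote, but this is exactly what $v\notin V^\subset$ provides, as in the paper's items 4--5.
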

\begin{proof}
Let $(x,y)$ be an integer point of $P$.
The validity of \eqref{RESTR1STRONGER1} is a direct consequence of $v \noantes u$ for all $v \in V$.
For \eqref{RESTR1STRONGER2}, we have $\sum_{j=1}^{i-1} y_{wj} + \sum_{v \in V} y_{vi} \leq 1$ as a consequence of
$w \noantes v$ for all $v \in V$. If this inequality holds strictly, \eqref{RESTR1STRONGER2} is trivially satisfied because $x_{wi}\leq 1$.
If $\sum_{j=1}^{i-1} y_{wj} + \sum_{v \in V} y_{vi} = 1$, a vertex was chosen in some step $j=1,\ldots,i$. As $N\la w\ra = V$, $w$ is
footprinted by that vertex and $x_{wi}=0$. So, \eqref{RESTR1STRONGER2} is valid. A similar reasoning can be made for \eqref{RESTR1STRONGER3}.

Let $k = i$ if $V^\subset\neq\emptyset$ and $k = m$ otherwise, and $F = \{ (x,y) \in P : x_{wk} + \sum_{j=1}^{i-1} y_{wj} +
\sum_{v \in V} y_{vi} + \sum_{j=i+1}^m y_{uj} = 1\}$. Here, $x_{wk} + \sum_{j=1}^{i-1} y_{wj}$ vanishes if $V^\supset = \emptyset$,
$\sum_{j=1}^{i-1} y_{wj}$ vanishes if $i=1$, and $\sum_{j=i+1}^m y_{uj}$ vanishes if $V^\subset = \emptyset$ or $i = m$.
The points lying in $F$ are such that:
\emph{a}) some vertex is chosen at step $i$ or
\emph{b}) if $V^\subset=\{u\}$, $u$ is chosen in some step from $i+1,\ldots,m$; c) if $V^\supset=\{w\}$, either $w$ is not footprinted at step $k$ or $w$ is chosen in some step from $1,\ldots,i-1$. 

Let $v \in V\setminus\{u\}$ and $\beta = \mu^y_{vi}$. We have to prove these cases:
1) $\mu^x_{u'i'} = 0$ for all $u'\in V\setminus V^\supset$, $i'=1,\ldots,m$,
2-3) if $V^\supset=\{w\}$, then $\mu^x_{wi'} = 0$ for all $i'\neq k$, and $\mu^x_{wk}=\mu^y_{vi}$,
4-5) $\mu^y_{v'i'} = 0$ for all $v' \in V\setminus V^\supset$ and $i' < i$ and for all $v' \in V\setminus V^\subset$ and $i'> i$,
6) $\mu^y_{v'i} = \mu^y_{vi}$ for all $v' \in V \setminus \{v\}$,
7) if $V^\supset=\{w\}$, $\mu^y_{wj} = \mu^y_{vi}$ for all $j<i$,
8) if $V^\subset=\{u\}$, $\mu^y_{uj} = \mu^y_{vi}$ for all $j>i$.
\begin{enumerate}
\item $\mu^x_{u'i'} = 0$ for all $u'\in V\setminus V^\supset$, $i' = 1,\ldots,m$.\\
			As $u'\notin V^\supset$ (or equivalently $N\la u'\ra\neq V$), there exists $v' \notin N\la u'\ra$.
			Consider the points $(x^1, y^1) = \fp(u', i'+1; \ch(v',i; \initial))$ if $i' < m$ or $(x^1, y^1) = \ch(v',i; \initial)$ otherwise,
			and $(x^2, y^2) = \fp(u', i'; x^1, y^1)$. We get $\mu^x_{u'i'} = 0$.
\item $\mu^x_{wi'} = 0$ for all $i' = 1,\ldots,m$, $i'\neq k$ (when $V^\supset = \{w\}$).\\
			First, suppose $k=m$. Hence, $V^\subset=\{u\}$ and $i'<m$. Pick a $v \in V$ such that $v=w$ if $i'\leq i-2$ and $v=u$ if $i'\geq i$.
			Consider the points $(x^1, y^1) = \ch(v,i'+1; \initial))$ and $(x^2, y^2) = \fp(w, i'; x^1, y^1)$. We get $\mu^x_{wi'} = 0$.
			Now, assume $k=i$. If $i'<i$, the previous two points show $\mu^x_{wi'} = 0$. If $i'>i$, consider the points
			$(x^1, y^1) = \fp(w, i'; \initial)$, and $(x^2, y^2) = \fp(w, i'+1; \initial)$ if $i'<m$ or $(x^2, y^2) = (\initial)$ otherwise.
			Again, we get $\mu^x_{wi'} = 0$.
\item $\mu^x_{wk} = \mu^y_{vi}$ (when $V^\supset = \{w\}$).\\
			 The points $(x^1, y^1) = \ch(v, i; \initial)$ and $(x^2, y^2) = (\initial)$ belong to $F$. In particular, $x^1_{wk}=x^2_{vi}=0$ and $x^1_{vi}=x^2_{wk}=1$. By items 1 and 2, we get $\mu^y_{vi}=\mu^x_{wk}$.
\item $\mu^y_{v'i'} = 0$ for all $v' \in V\setminus V^\supset$ and $i' < i$. \\
			Since $v'\notin V$, there is $v''$  such that $v'\antes v''$. 
			As $i'<i$, the points $(x^1, y^1) = \ch(v'',i; \initial)$ and $(x^2, y^2) = \ch(v',i'; x^1, y^1)$ belong to $F$. Then, by item 1, $\mu^y_{v'i'} = 0$.
\item $\mu^y_{v'i'} = 0$ for all $v' \in V\setminus V^\subset$ and $i'> i$.\\
			Since $v'\notin V^\subset$, there is $v''$ such that $v''\antes v'$. 
			As $i'>i$, the points $(x^1, y^1) = \ch(v'',i; \initial)$ and $(x^2, y^2) = \ch(v',i'; x^1, y^1)$ are in $F$. In particular, if $V^\supset\neq\emptyset$, i.e. $V^\supset=\{w\}$, then $x^1_{wk}=x^2_{wk}=0$. Therefore, items~1-2 lead to $\mu^y_{v'i'} = 0$. 
\item $\mu^y_{v'i} = \mu^y_{vi}$ for all $v' \in V \setminus \{v\}$.\\
			The points $(x^1, y^1) = \ch(v',i; \initial)$, $(x^2, y^2) = \ch(v,i; \initial)$, both in $F$, and items~1-2. Note that $x^1_{wk}=x^2_{wk}=0$ 
			allow us to obtain $\mu^y_{v'i} = \mu^y_{vi}$. 
\item $\mu^y_{wj} = \mu^y_{vi}$ for all $j=1,\ldots,i-1$ (when $V^\supset = \{w\}$).\\
			Similar to item 6 with points $(x^1, y^1) = \ch(w,j; \initial)$ and $(x^2, y^2) = \ch(v,i; \initial)$.
\item $\mu^y_{uj} = \mu^y_{vi}$ for all $j=i+1,\ldots,m$ (when $V^\subset = \{u\}$).\\
			Similar to item 6 with points $(x^1, y^1) = \ch(u,j; \initial)$ and $(x^2, y^2) = \ch(v,i; \initial)$.
\end{enumerate}
\end{proof}

\begin{cor} \label{RESTR1STRONGERCLUTTER}
If $G;C$ is a clutter, constraints \eqref{RESTR1} define facets of $P$.
\end{cor}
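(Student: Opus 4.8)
The plan is to derive this as a direct consequence of Proposition~\ref{RESTR1STRONGER}. The key observation is that when $G;C$ is a clutter, we have $u \antes v$ for all distinct vertices $u,v$, which by definition means that no neighborhood is contained in another nontrivial one. I would first check what the sets $V^\subset$ and $V^\supset$ look like under the clutter hypothesis. Recall $V^\subset = \{u \in V : v \noantes u ~\forall v \in V\}$ and $V^\supset = \{w \in V : w \noantes v ~\forall v \in V\}$. If $G;C$ is a clutter and $|V| = n \geq 3$, then for any vertex $u$ and any $v \neq u$ we have $v \antes u$, so $u$ cannot lie in $V^\subset$; similarly $w \antes v$ for all $v \neq w$ rules out $w \in V^\supset$. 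Hence both $V^\subset$ and $V^\supset$ are empty.

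With $V^\subset = V^\supset = \emptyset$, the first case of Proposition~\ref{RESTR1STRONGER} applies for every $i = 1,\ldots,m$: the hypothesis ``$V^\subset = \emptyset$ or $i = m$, and $V^\supset = \emptyset$'' is satisfied because $V^\subset = \emptyset$ and $V^\supset = \emptyset$. Therefore constraint~\eqref{RESTR1} for each $i$ defines a facet of $P$, which is exactly the claim. So the proof is essentially the single sentence: since $G;C$ is a clutter, $V^\subset=V^\supset=\emptyset$, and the result follows immediately from Proposition~\ref{RESTR1STRONGER}.

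I do not anticipate any real obstacle here; this corollary is purely a specialization. The only point worth spelling out carefully is the verification that the clutter property forces $V^\subset=V^\supset=\emptyset$ (as opposed to being singletons), and this is immediate from the definitions together with the standing assumption $n \geq 3$ (which guarantees there is at least one $v \neq u$ to witness $v \antes u$). One could alternatively invoke the remark already made in the text that ``As there are no twin vertices, we must have $|V^\subset|\leq 1$ and $|V^\supset|\leq 1$'' and then note that a clutter rules out even the singleton case, but the direct argument is cleaner.

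\begin{proof}
Since $G;C$ is a clutter and $n\geq 3$, for every $u\in V$ there exists $v\neq u$, and $v\antes u$ holds; hence $u\notin V^\subset$. Likewise, $w\antes v$ for all $v\neq w$ gives $w\notin V^\supset$. Thus $V^\subset=V^\supset=\emptyset$, and the first case of Proposition~\ref{RESTR1STRONGER} (with the condition ``$V^\subset=\emptyset$ or $i=m$, and $V^\supset=\emptyset$'' trivially satisfied) shows that constraint~\eqref{RESTR1} defines a facet of $P$ for each $i=1,\ldots,m$.
\end{proof}
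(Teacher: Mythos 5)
Your proof is correct and matches the paper's intent exactly: the corollary is stated as an immediate specialization of Proposition~\ref{RESTR1STRONGER}, and your observation that the clutter property (with $n\geq 3$) forces $V^\subset=V^\supset=\emptyset$, so that the first case of that proposition applies for every $i$, is precisely the argument. Nothing is missing.
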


Constraints (\ref{RESTR3}) are a particular case of the following ones:

\begin{prop} \label{GENERALINEQ1}
Let $t \geq 1$ and $W \doteq \{w_1, \ldots, w_t\} \subset V$ satisfying the property $w_1 \noantes w_j$ for all
$j = 2,\ldots,t$. Then, for all $i = 1,\ldots,m-1$, the following is a valid inequality:
\begin{equation} \label{NOVA0}
  \sum_{w \in W} y_{w i+1} \leq \sum_{u \in N\la w_1\ra} (x_{ui} - x_{ui+1}).
\end{equation}
Moreover, it is facet-defining if, and only if, every $v\in V \setminus W$ satisfies $w_1 \antes v$
($W$ is \emph{maximal} with respect to the property).
\end{prop}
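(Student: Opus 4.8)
The plan is to first verify inequality \eqref{NOVA0}. Let $(x,y)$ be an integer point of $P$. By \eqref{RESTR1} at most one vertex is chosen in step $i+1$; if that vertex is some $w\in W$, then by \eqref{RESTR3} applied to $w$ we have $y_{w,i+1}\le\sum_{u\in N\la w\ra}(x_{ui}-x_{u,i+1})$. Since $w_1\noantes w$ means $N\la w\ra\subset N\la w_1\ra$, and $x_{ui}-x_{u,i+1}\ge 0$ by \eqref{RESTR5}, the right-hand side is bounded by $\sum_{u\in N\la w_1\ra}(x_{ui}-x_{u,i+1})$. If no $w\in W$ is chosen in step $i+1$ the left-hand side is $0$ and the right-hand side is nonnegative, so \eqref{NOVA0} holds in all cases.

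\textbf{Necessity.} Suppose some $v\in V\setminus W$ satisfies $w_1\noantes v$, i.e.\ $N\la v\ra\subset N\la w_1\ra$. Then the stronger inequality obtained by adding $y_{v,i+1}$ to the left-hand side, namely $\sum_{w\in W}y_{w,i+1}+y_{v,i+1}\le\sum_{u\in N\la w_1\ra}(x_{ui}-x_{u,i+1})$, is also valid by the same argument (it is \eqref{NOVA0} for the set $W\cup\{v\}$, which still satisfies the hypothesis). This dominates \eqref{NOVA0}, so the latter cannot be facet-defining. Hence maximality is necessary.

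\textbf{Sufficiency.} Assume $W$ is maximal, i.e.\ $w_1\antes v$ for every $v\in V\setminus W$. Let $F$ be the face defined by \eqref{NOVA0}, and suppose $\mu^x x+\mu^y y=\mu_0$ holds on all of $F$. Following the technique of the section, I will produce pairs of points in $F$ whose differences force the coefficients. The natural base point is $(x^0,y^0)=\ch(w_1,i;\initial)$: choosing $w_1$ at step $i$ footprints all of $N\la w_1\ra$, so $x^0_{u,i}=x^0_{u,i+1}=0$ for $u\in N\la w_1\ra$ while $x^0_{u,i}=x^0_{u,i+1}=1$ for $u\notin N\la w_1\ra$; thus the right-hand side of \eqref{NOVA0} is $0$ at $(x^0,y^0)$, and since nothing is chosen at step $i+1$ the left-hand side is $0$ too, so $(x^0,y^0)\in F$. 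The key structural points in $F$ are: (a) $(x^0,y^0)$ itself; (b) for $u\notin N\la w_1\ra$, the point $\fp(u,i+1;x^0,y^0)$, which creates the gap $x_{u,i}-x_{u,i+1}=1$ while putting nothing on the left — this is \emph{not} in $F$, so instead one should use $\ch(w,i+1;\cdot)$ for $w\in W$, which is legal precisely because $N\la w\ra\subset N\la w_1\ra$ was already footprinted, wait — rather, pick $w\in W$ with $N\la w\ra\not\subset$ already-footprinted set; since at $(x^0,y^0)$ all of $N\la w_1\ra\supset N\la w\ra$ is footprinted, $w$ cannot be chosen at step $i+1$. The correct move is: take the base point $(\initial)$ (with nothing footprinted), which is in $F$ since both sides of \eqref{NOVA0} vanish, and use it to kill all coefficients $\mu^x_{u,j}$ with $j\neq i$ and $\mu^x_{u,i}$ for $u\notin N\la w_1\ra$ via the footprint operations of Prop.~\ref{PROPDIM} (these stay in $F$ as long as we never create a gap at level $i\to i+1$ on $N\la w_1\ra$ without a matching $y_{w,i+1}$); kill $\mu^y_{v,j}$ for $j\neq i+1$ and for $v\notin W$ at $j=i+1$ using $w_1\antes v$ to keep $\ch(v,i+1;\cdot)$ inside $F$ when $v\notin W$; then the surviving relation is $\sum_{u\in N\la w_1\ra}\mu^x_{u,i}x_{u,i}+\sum_{w\in W}\mu^y_{w,i+1}y_{w,i+1}+(\text{step-}i\text{ terms})=\mu_0$, and comparing $\ch(w,i+1;\cdot)$ against $(\initial)$ — where choosing $w$ at step $i+1$ requires footprinting $N\la w\ra$, hence requires all of $N\la w\ra$ present at step $i$ — forces $\mu^y_{w,i+1}=-\sum_{u\in N\la w\ra}\mu^x_{u,i}$; a further comparison of $w$ against $w'$ in $W$ together with the containments $N\la w\ra,N\la w'\ra\subset N\la w_1\ra$ forces $\mu^x_{u,i}$ to be a constant $-\beta$ on $N\la w_1\ra$ and $\mu^y_{w,i+1}=|N\la w_1\ra|$-independent, consistent with $\mu=\beta\cdot(\text{\eqref{NOVA0}})$.

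\textbf{Main obstacle.} The delicate point is that the face $F$ is genuinely restrictive: the footprint and choose operations used freely in Prop.~\ref{PROPDIM} routinely leave $F$ (e.g.\ footprinting a single $u\in N\la w_1\ra$ at step $i+1$ violates \eqref{NOVA0} unless compensated). So the real work is organizing the sequence of test points so that every intermediate solution stays in $F$ — this is why the hypothesis $w_1\antes v$ for $v\notin W$ is used (it guarantees $\ch(v,i+1;\cdot)$ footprints something outside $N\la w_1\ra$, keeping the inequality tight-or-slack correctly), and why Lemma~\ref{LEMITANV2}(i) is needed to find, for each $u\in N\la w_1\ra$, a vertex in $N\la u\ra$ of neighborhood size $\ge 2$ so that the step-$i$ configuration can be perturbed without collapsing. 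I expect the bookkeeping for the coefficients $\mu^x_{u,i}$ with $u\in N\la w_1\ra$ and their link to $\mu^y_{w,i+1}$ to be the crux; everything else follows the template of Propositions~\ref{PROPDIM} and~\ref{RESTR1STRONGER}.
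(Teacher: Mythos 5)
Your validity argument and your necessity argument are fine and essentially coincide with the paper's. The gap is in the sufficiency direction, and it starts with a misreading of the support of \eqref{NOVA0}: written with everything on the left it is $\sum_{w\in W}y_{w\,i+1}-\sum_{u\in N\la w_1\ra}x_{ui}+\sum_{u\in N\la w_1\ra}x_{u\,i+1}\leq 0$, so on the face one must end up with $\mu^x_{u\,i+1}=+\beta\neq 0$ for every $u\in N\la w_1\ra$, not $0$. Your plan to ``kill all coefficients $\mu^x_{u,j}$ with $j\neq i$'' is therefore unachievable, and your own caveat (never create a gap at level $i\to i+1$ on $N\la w_1\ra$ without a matching $y_{w\,i+1}$) forbids exactly the footprint pairs you would need to attempt it. The error propagates: for $v\in W$ and $j\leq i$, comparing $\ch(v,j;\initial)$ with $(\initial)$ gives $\mu^y_{vj}=\sum_{u\in N\la v\ra}(\mu^x_{ui}+\mu^x_{u\,i+1})$, which vanishes only thanks to the pairwise cancellation $\mu^x_{ui}=-\mu^x_{u\,i+1}$; with your ``surviving relation'' (no $x_{u\,i+1}$ terms) it would instead force $\mu^y_{vj}=-|N\la v\ra|\,\beta$, an inconsistency.

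The second concrete failure is the comparison you lean on at the end: $\ch(w,i+1;\initial)$ is not on the face whenever $|N\la w\ra|\geq 2$, because the right-hand side of \eqref{NOVA0} jumps to $|N\la w\ra|$ while the left-hand side is $1$; and even formally it cannot yield $\mu^y_{w\,i+1}=-\sum_{u\in N\la w\ra}\mu^x_{ui}$, since choosing at step $i+1$ leaves the step-$i$ variables untouched. You brushed against the real issue when you observed that after $\ch(w_1,i;\initial)$ no $w\in W$ can be chosen at step $i+1$, but you never resolved it. The missing idea is the preparation step: footprint $N\la w_1\ra\setminus\{u\}$ at step $i$ first, so that choosing any $w_j\in W$ at step $i+1$ creates exactly one unit of gap and stays on the face. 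Then $\ch(w_j,i+1;\fp(N\la w_1\ra\setminus\{u_j\},i;\initial))$ versus $\ch(w_1,i+1;\fp(N\la w_1\ra\setminus\{u_j\},i;\initial))$ gives $\mu^y_{w_j\,i+1}=\mu^y_{w_1\,i+1}$; $\ch(w_1,i+1;\fp(N\la w_1\ra\setminus\{u\},i;\initial))$ versus $\fp(N\la w_1\ra,i;\initial)$ gives $\mu^x_{ui}=-\mu^y_{w_1\,i+1}$; and the in-face pair $\fp(u,i;\initial)$, $\fp(u,i+2;\initial)$ (or $(\initial)$ if $i+2>m$) gives $\mu^x_{ui}+\mu^x_{u\,i+1}=0$, hence $\mu^x_{u\,i+1}=\mu^y_{w_1\,i+1}$. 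Without these three exchanges the link between $\mu^x_{u,i}$, $\mu^x_{u,i+1}$ and $\mu^y_{w,i+1}$ --- which you yourself flag as the crux --- is not established, so the ``if'' direction remains unproved. (A minor further slip: for $u\notin N\la w_1\ra$ the point $\fp(u,i+1;\ch(w_1,i;\initial))$ \emph{does} lie in the face, since such a $u$ does not appear in the right-hand side of \eqref{NOVA0}.)
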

\begin{proof}
First, we prove validity.
If the r.h.s. is one or more, the inequality is dominated by a constraint \eqref{RESTR1}.
Now, suppose that the r.h.s. is 0. Therefore $x_{ui} = x_{ui+1}$ for all $u \in N\la w \ra$ for every $w \in W$.
Thus, none of the vertices of $W$ can be a footprinter in the step $i + 1$, and the inequality is valid.

We now prove that it is facet-defining if and only if $W$ is maximal. If there is $v'\in V\setminus W$ such that
$w_1 \noantes v'$, we can add $y_{v'i+1}$ to the l.h.s. of \eqref{NOVA0} to strengthen the inequality.
In other words, the equality $y_{v'i+1} = 0$ is valid in the face defined by \eqref{NOVA0}.

Now, assume that $W$ is maximal.
Let $F = \{ (x,y) \in P : \sum_{w \in W} y_{w i+1} - \sum_{u \in N\la w_1\ra} x_{ui} + \sum_{u \in N\la w_1\ra} x_{ui+1} = 0\}$. 
First, we characterize the points in $F$:
\emph{a}) if no vertex from $W$ is chosen in step $i+1$ then every vertex from $N\la w_1\ra$ must not be footprinted in step $i+1$ or
\emph{b}) if a vertex $w_j \in W$ is chosen in step $i+1$ then exactly one vertex from $N\la w_j\ra$ must be footprinted in this step
while the remaining ones must have been footprinted in a previous step, also every vertex from $N\la w_1\ra \setminus N\la w_j\ra$ must not be
footprinted in step $i+1$. 

Let $\beta = \mu^y_{w_1 i+1}$. Below, we enumerate the conditions needed to prove facet-definition and the points involved to obtain them:
\begin{enumerate}
\item $\mu^y_{w_j i+1} = \mu^y_{w_1 i+1}$ for all $j = 2,\ldots,t$.\\
			Let $u_j$ be some vertex of $N\la w_j\ra$. Consider
			$(x^1, y^1) = \ch(w_j, i+1; \fp(N\la w_1\ra \setminus \{u_j\}, i; \initial))$ and
			$(x^2, y^2) = \ch(w_1, i+1; \fp(N\la w_1\ra \setminus \{u_j\}, i; \initial))$.
\item $\mu^x_{ui} = -\mu^y_{w_1 i+1}$ for all $u \in N\la w_1\ra$.\\
			We use $(x^1, y^1) = \ch(w_1, i+1; \fp(N\la w_1\ra \setminus \{u\}, i; \initial))$ and
			$(x^2, y^2) = \fp(N\la w_1\ra, i; \initial)$.
\item $\mu^x_{u i+1} = \mu^y_{w_1 i+1}$ for all $u \in N\la w_1\ra$.\\
			Consider $(x^1, y^1) = \fp(u, i; \initial)$, $(x^2, y^2) = \fp(u, i+2; \initial)$ if $i+2 \leq m$ or $(x^2, y^2) = (\initial)$ otherwise.
			We get $\mu^x_{ui} + \mu^x_{ui+1} = 0$. Then, apply equality of item 2.
\item $\mu^x_{u'i'} = 0$ for all $(u',i') \notin \{(u,i), (u,i+1) : u \in N\la w_1\ra\}$.\\
			Consider the points $(x^1, y^1) = \fp(u',i';\initial)$, and $(x^2, y^2) = \fp(u',i'+1;\initial)$ if $i < m$ or $(x^2, y^2) = (\initial)$ otherwise.
\item $\mu^y_{v'i'} = 0$ for all $(v',i') \notin \{(w,i+1) : w \in W\}$.\\
			If $v'\notin W$, the hyphotesis ensures that $w_1 \antes v'$.
			Then, besides $(x^1, y^1) = \fp(N\la w_1\ra, i; \initial)$, the point $(x^2, y^2) = \ch(v',i';x^1,y^1)$ also belongs to $F$
			since there exists a vertex from $N\la v'\ra \setminus N\la w_1 \ra$ that can be footprinted in step $i'$.
			By item 4, we get $\mu^y_{v'i'} = 0$.
			If $v'\in W$, then $i'\neq i+1$. In this case, take the points $(x^1, y^1) = (\initial)$ and $(x^2, y^2) = \ch(v',i'; \initial)$.
			We get $\mu^y_{v'i'} - \sum_{u\in N\la v'\ra} \sum_{j=i'}^m \mu^x_{uj} = 0$. By item 4, we get $\mu^y_{v'i'} = 0$ if $i'>i+1$.
			Otherwise, we obtain $\mu^y_{v'i'} - \sum_{u\in N\la v'\ra} \mu^x_{ui} - \sum_{u\in N\la v'\ra} \mu^x_{ui+1} = 0$.
			Since $w_1 \noantes v'$, we can apply items 2 and 3 to get $\mu^y_{w_1 i+1} + \mu^y_{w_1 i+1}=0$, and so $\mu^y_{v'i'}=0$.
\end{enumerate}
\end{proof}
\begin{cor} \label{GENERALINEQ1CLUTTER}
If $G;C$ is a clutter, constraints \eqref{RESTR3} always define facets of $P$.
\end{cor}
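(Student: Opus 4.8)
The plan is to obtain the corollary as an immediate specialization of Proposition~\ref{GENERALINEQ1}. First I would observe that, for a fixed vertex $v \in V$ and index $i \in \{1,\ldots,m-1\}$, constraint~\eqref{RESTR3} is exactly the inequality~\eqref{NOVA0} in the degenerate case $t = 1$ with $W = \{w_1\} = \{v\}$: the left-hand side is the single term $y_{vi+1}$ and the right-hand side is the sum over $N\la v\ra$ of $x_{ui} - x_{ui+1}$, matching~\eqref{NOVA0} verbatim. In this case the hypothesis ``$w_1 \noantes w_j$ for all $j = 2,\ldots,t$'' of Proposition~\ref{GENERALINEQ1} is vacuously true, so the proposition applies to every $v$ and says that~\eqref{RESTR3} defines a facet of $P$ if and only if $W = \{v\}$ is maximal, that is, if and only if $v \antes v'$ for all $v' \in V \setminus \{v\}$.

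It then only remains to verify that this maximality condition always holds when $G;C$ is a clutter. But that is immediate from the definition of a clutter instance, which requires $u \antes v$ for every pair of distinct vertices; in particular $v \antes v'$ for all $v' \neq v$. Hence $\{v\}$ is maximal, Proposition~\ref{GENERALINEQ1} gives that~\eqref{RESTR3} is facet-defining, and since $v$ and $i$ were arbitrary the statement follows for all constraints~\eqref{RESTR3}.

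I do not expect a genuine obstacle here: the argument is essentially a one-line reduction to the ``if'' direction of Proposition~\ref{GENERALINEQ1}. The only point requiring a little care is to confirm that~\eqref{NOVA0} really collapses to~\eqref{RESTR3} at $t=1$ with no side condition being quietly dropped; once that identification is made, the clutter hypothesis does the rest by forcing the singleton $\{v\}$ to be maximal.
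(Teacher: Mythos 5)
Your argument is correct and is exactly the reduction the paper intends: taking $t=1$, $W=\{v\}$ in Proposition~\ref{GENERALINEQ1} makes \eqref{NOVA0} coincide with \eqref{RESTR3}, and the clutter hypothesis ($u \antes v$ for all distinct vertices) gives the maximality condition, which is why the paper states the corollary without further proof.
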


The following proposition tells when constraints \eqref{RESTR4} define facets.

\begin{prop} \label{Prop-RESTR4}
Constraint \eqref{RESTR4} defines a facet of $P$ if and only if $i = 1$ or for every $v \in N\la u\ra$ there exists
$w \in N\la u\ra \setminus \{v\}$ such that $v \antes w$.
\end{prop}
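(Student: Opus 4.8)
The plan is to prove the two implications separately, following the technique described before Proposition~\ref{EASYFACETS}. Set $F=\{(x,y)\in P:\ x_{ui}+\sum_{v\in N\la u\ra}y_{vi}=1\}$ and let $\alpha$ be the coefficient vector of \eqref{RESTR4}. A point of $P$ lies in $F$ exactly when either $x_{ui}=1$ and no vertex of $N\la u\ra$ is chosen at step $i$, or $x_{ui}=0$ and exactly one vertex of $N\la u\ra$ is chosen at step $i$. Throughout I would use the symmetry $v\in N\la u\ra\iff u\in N\la v\ra$, the fact that choosing $w$ at step $j$ forces $x_{zk}=0$ for every $z\in N\la w\ra$ and $k\geq j$ (by \eqref{RESTR4}--\eqref{RESTR5}), and the fact that in $P=P_1$ a vertex may be footprinted without a cause (constraints \eqref{RESTR6}--\eqref{RESTR7} are absent), so that the operations $\fp$ and $\ch$ applied to $(\initial)$ yield points of $P$ as long as \eqref{RESTR3} is respected.

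For \textbf{necessity}, I would negate the condition: assume $i\geq 2$ and that some $v\in N\la u\ra$ satisfies $N\la w\ra\subseteq N\la v\ra$ for all $w\in N\la u\ra\setminus\{v\}$. I claim the equality $y_{v,i-1}=0$ holds on $F$. Indeed, if $(x,y)\in F$ had $y_{v,i-1}=1$, then $x_{ui}=0$ (apply \eqref{RESTR4} at step $i-1$ to $u$, using $v\in N\la u\ra$, and then \eqref{RESTR5}); hence some $w^\ast\in N\la u\ra\setminus\{v\}$ is chosen at step $i$, and by \eqref{RESTR3} some $z^\ast\in N\la w^\ast\ra\subseteq N\la v\ra$ has $x_{z^\ast,i-1}=1$ — contradicting $x_{z^\ast,i-1}=0$, which holds by \eqref{RESTR4} at step $i-1$ since the chosen vertex $v$ lies in $N\la z^\ast\ra$. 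As $y_{v,i-1}=0$ is linearly independent from $\alpha^x x+\alpha^y y=\alpha_0$ and $P$ is full-dimensional (Prop.~\ref{PROPDIM}), we get $\dim F\leq\dim P-2$, so \eqref{RESTR4} is not facet-defining.

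For \textbf{sufficiency}, assume $i=1$ or the stated condition holds; note that when $i\geq2$ the condition forces $|N\la u\ra|\geq2$. Suppose $\mu^x x+\mu^y y=\mu_0$ holds on $F$; I must show $\mu=\beta\alpha$ with $\beta\doteq\mu^x_{ui}$. (a) Using pairs $\fp(u',i';\cdot)$ versus $\fp(u',i'+1;\cdot)$ (or $(\initial)$ when $i'=m$) that differ only in $x_{u'i'}$ and both lie in $F$, I would kill $\mu^x_{u'i'}$ for all $(u',i')\ne(u,i)$: starting from $(\initial)\in F$ when $u'\ne u$ or $i'>i$, and from $\ch(w,i;\initial)\in F$ with $w\in N\la u\ra$ and $|N\la w\ra|\geq2$ (Lemma~\ref{LEMITANV2}(i)) when $i'<i$, the latter choice keeping \eqref{RESTR3} valid at step $i$. (b) Since flipping a single $y$-coordinate from $1$ to $0$ never leaves $P$, and also leaves $F$ whenever that coordinate is not $y_{vi}$ with $v\in N\la u\ra$, I obtain $\mu^y_{v'i'}=0$ for every $(v',i')$ with $v'\notin N\la u\ra$ or $i'\ne i$, provided I exhibit a point of $F$ with $y_{v'i'}=1$. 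Such a point is immediate when $v'\notin N\la u\ra$ (keep $x_{ui}=1$) or when $i'\geq i$ (delay the drop of $x_{u\cdot}$ past step $i$); the delicate case is $v'\in N\la u\ra$ with $i'<i$, where choosing $v'$ forces $x_{ui}=0$, so membership in $F$ demands a further vertex $w\in N\la u\ra\setminus\{v'\}$ chosen at step $i$, and by \eqref{RESTR3} such a $w$ can footprint a new vertex at step $i$ only if $v'\antes w$ — precisely the hypothesis. (c) Finally, comparing $\ch(w,i;\initial)$ with $\ch(w',i;\initial)$ and with $(\initial)$ for $w,w'\in N\la u\ra$, and using that $u\in N\la w\ra$ for every such $w$ so that the only surviving $\mu^x$-coordinate cancels, I get $\mu^y_{wi}=\mu^x_{ui}=\mu^y_{w'i}$ for all $w,w'\in N\la u\ra$; evaluating at $(\initial)\in F$ then gives $\mu_0=\mu^x_{ui}=\beta$, so $\mu=\beta\alpha$.

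The main obstacle is step~(b) of the sufficiency part: constructing, for \emph{every} $v'\in N\la u\ra$ and \emph{every} $i'<i$, a feasible point of $F$ in which $v'$ is chosen at step $i'$. This forces a second chosen vertex $w\in N\la u\ra\setminus\{v'\}$ at step $i$ that must footprint a new vertex there, which is possible exactly when $v'\antes w$ for some such $w$ — the same obstruction, read in reverse, that underlies the necessity argument. The remaining verifications (feasibility of each constructed point under \eqref{RESTR1}--\eqref{RESTR5}, via the $\fp$/$\ch$ bookkeeping) are routine.
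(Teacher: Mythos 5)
Your proposal is correct and follows essentially the same route as the paper: the same face characterization, the same use of Lemma \ref{LEMITANV2}(i) to handle $\mu^x_{ui'}$ for $i'<i$, the same key point $\ch(w,i;\ch(v',i';\initial))$ built from the hypothesis $v'\antes w$ for the delicate case $v'\in N\la u\ra$, $i'<i$, and the same necessity argument via the extra valid equality $y_{v,i-1}=0$ on the face. The only differences are cosmetic (flipping a single $y$-coordinate instead of comparing with $(\initial)$, and direct pairing instead of telescoping for the $x$-coefficients).
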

\begin{proof}
Consider the face $F = \{ (x,y) \in P : x_{ui} + \sum_{v \in N\la u\ra} y_{vi} = 1\}$.
First, we characterize those points in $F$:
\emph{a}) $u$ is not footprinted in steps $1,\ldots,i$ or
\emph{b}) some vertex from $N\la u\ra$ is chosen in step $i$.

Let $\beta = \mu^x_{ui}$. Below, we enumerate the conditions needed to prove facet-definition and the points involved to obtain them:
\begin{enumerate}
\item $\mu^x_{u'i'} = 0$ for all $(u',i') \neq (u,i)$.\\
			Let $u'\in V\setminus \{u\}$ and $i'=1,\ldots,m$. The points $(x^1,y^1) = \fp(u',i'; \initial)$ and
			$(x^2,y^2) = \fp(u',i'+1;\initial)$ if $i' < m$ or $(x^2, y^2) = (\initial)$ otherwise, show that $\mu^x_{u'i'} = 0$.
			The same points can be used to show that $\mu^x_{ui'} = 0$ when $i' > i$.
			Now, let $i' < i$. Take a vertex $v\in N\la u \ra$ such that $|N\la v \ra| \geq 2$.
			Such a $v$ always exists by Lemma \ref{LEMITANV2}.(i).
			The points $(x^1,y^1) = \ch(v,i;\initial)$ and $(x^2,y^2) = \fp(u,i'; x^1,y^1)$ show that $\sum_{j=i'}^{i-1} \mu^x_{uj} = 0$.
			By assigning $i'$ in the order $i-1, i-2, \ldots$, we obtain $\mu^x_{ui-1} = 0$, $\mu^x_{ui-2} = 0$, $\ldots$.
			Note that $(x^2, y^2) \in F$ since $v$ has a vertex to footprint from $N\la v\ra \setminus \{u\}$ in step $i$.
\item $\mu^y_{vi} = \mu^x_{ui}$ for all $v \in N\la u\ra$.\\
			To get $\mu^y_{vi} = \mu^x_{ui}$ we can use points $(x^1,y^1) = \ch(v,i;\initial)$,
			$(x^2,y^2) = (\initial)$ and apply item 1.
\item $\mu^y_{v'i'} = 0$ for all $(v',i')$ such that $v' \notin N\la u\ra$ or $i' \neq i$.\\
			Let $v'\in V\setminus N\la u\ra$. Hence, $u$ can not be footprinted by $v'$.
			We obtain $\mu^y_{v'i'} = 0$ with points $(x^1,y^1) = \ch(v',i';\initial)$ and $(x^2,y^2) = (\initial)$ and by applying item 1.
			Now, let $v' \in N\la u\ra$. If $i' > i$, the same points can be used to show that $\mu^y_{v'i'} = 0$.
			If $i' < i$, by hypothesis there exists $w \in N\la u\ra\setminus\{v'\}$ such that $v' \antes w$.
			Consider $(x^1,y^1) = \ch(w,i;\initial)$ and $(x^2,y^2) = \ch(v',i';x^1, y^1)$. We obtain $\mu^y_{v'i'} = 0$.
			Note that $(x^2, y^2) \in F$ since $w$ has a vertex to footprint in step $i$ not footprinted by $v'$. 
\end{enumerate}
Now, we prove the \emph{only if} part. Suppose that $i \geq 2$ and there exists $v \in N\la u\ra$ such that, for all
$w \in N\la u\ra \setminus \{v\}$, we have $v \noantes w$. We prove that $F$ is not a facet of $P$
by showing that every $(x,y) \in F$ satisfies $y_{vi'} = 0$ for any $i' < i$.
If $x_{ui} = 1$ or $y_{vi} = 1$, then $y_{vi'} = 0$.
If $y_{wi} = 1$ for some $w \in N\la u\ra \setminus \{v\}$, some vertex from $N\la w\ra$ must not be footprinted at step $i-1$ and
then $y_{vi'} = 0$.
\end{proof}
\begin{cor} \label{Prop-RESTR4CLUTTER}
If $G;C$ is a clutter, constraints \eqref{RESTR4} always define facets of $P$.
\end{cor}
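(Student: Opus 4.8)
The plan is to obtain this immediately from Proposition~\ref{Prop-RESTR4}, which already records precisely when constraint~\eqref{RESTR4} is facet-defining: namely when $i=1$, or when every $v\in N\la u\ra$ admits some $w\in N\la u\ra\setminus\{v\}$ with $v\antes w$. Since the case $i=1$ holds unconditionally, the only thing I need to verify is that, under the clutter hypothesis, this combinatorial condition holds automatically for every $u\in V$ and every $i\geq 2$.

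First I would invoke Lemma~\ref{LEMITANV2}.(ii): as $G;C$ is a clutter, $|N\la u\ra|\geq 2$ for every $u\in V$. Hence, given any $v\in N\la u\ra$, the set $N\la u\ra\setminus\{v\}$ is nonempty, so we may choose some $w$ in it. Since $G;C$ is a clutter and $v\neq w$, the defining property of a clutter gives $v\antes w$, i.e.\ $N\la w\ra\setminus N\la v\ra\neq\emptyset$. This is exactly the hypothesis required by Proposition~\ref{Prop-RESTR4} in the case $i\geq 2$, and therefore \eqref{RESTR4} defines a facet of $P$ for every $i=1,\ldots,m$.

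There is no real obstacle here: the argument is a two-line reduction to the already-proved Proposition~\ref{Prop-RESTR4}. The only subtlety worth flagging is that one genuinely needs $N\la u\ra$ to contain a second vertex apart from the chosen $v$ (otherwise the ``for every $v$'' quantifier could fail to produce a valid $w$), and this is supplied precisely by Lemma~\ref{LEMITANV2}.(ii), which in turn is where connectedness and the absence of a $K_2$ component of $G$ were used.
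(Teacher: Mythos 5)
Your proposal is correct and is exactly the paper's argument: the published proof is just ``Straightforward from Lemma~\ref{LEMITANV2}.(ii)'', and you have filled in the same two steps (clutterness gives $|N\la u\ra|\geq 2$, hence a $w\neq v$ exists in $N\la u\ra$, and the clutter property gives $v\antes w$), then applied Prop.~\ref{Prop-RESTR4}. Nothing to add.
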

\begin{proof}
Straightforward from Lemma \ref{LEMITANV2}.(ii).
\end{proof}

Given a non-empty set of vertices $U \subset V$ and a positive integer $r \leq |U|$, let $N^r\la U\ra$ denote the subset of vertices with exactly $r$ neighbors in $U$, i.e. $$N^r\la U\ra \doteq \{v \in V : |N\la v\ra \cap U| = r\}.$$
Now, we present a new family of valid inequalities. They always dominate constraints \eqref{RESTR2} and, sometimes, constraints \eqref{RESTR4}.

\begin{thm} \label{VERYGENERALINEQ}
Let $i \in \{2,\ldots,m\}$ and $k \in \{1,\ldots,i\}$.
Let $U \subset V$ be a non-empty set with $p$ vertices, $N \subset N^p\la U\ra$ ($N$ possibly empty) and
$W \doteq \{w_1, \ldots, w_t\} \subset N^p\la U\ra \setminus N$ be a non-empty set of vertices, such that:\\
\indent \indent H1) $w_r \noantes w_{r+1}$ for all $r = 1,\ldots,t-1$,\\
\indent \indent H2) $w_t \noantes v$ for all $v \in N$.\\
Let $j_1, \ldots, j_{t+1} \in \{1,\ldots,i\}$ such that $j_1 = 1$, $j_{t+1} = i$ and $j_r \leq j_{r+1}$ for all $r = 1,\ldots,t$.
Then, the following $(i,k,U,N,W,j_1,\ldots,j_{t+1})$-inequality is valid:
\begin{equation} \label{SUPERNOVA}
\sum_{u \in U} x_{ui} + \sum_{v \in N} y_{vi} + \sum_{r=1}^t \sum_{j=j_r}^{j_{r+1}} y_{w_r j} + \sum_{v \in N^p\la U\ra} (p-1) y_{vk} + \sum_{q=1}^{p-1} \sum_{v \in N^q\la U\ra} q y_{vk} \leq p.
\end{equation}
\end{thm}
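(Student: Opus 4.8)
The plan is to prove validity by a case analysis on the combined ``activity'' at step $k$, i.e.\ on the quantity
$Y_k \doteq \sum_{v\in V} y_{vk}$, which is $0$ or $1$ by \eqref{RESTR1}. The right-hand side is $p = |U|$, and the first term $\sum_{u\in U} x_{ui}$ is at most $p$; all other terms on the left are non-negative, so the inequality can only be violated when most of $U$ is still available at step $i$. First I would dispose of the case $Y_k = 0$: then the last two sums vanish, and what remains to show is
$\sum_{u\in U} x_{ui} + \sum_{v\in N} y_{vi} + \sum_{r=1}^t \sum_{j=j_r}^{j_{r+1}} y_{w_r j}\le p$.
Here I would argue that among the $y$-terms at most one can be $1$ (the chosen vertices occupy distinct steps by \eqref{RESTR2}, and a telescoping/interval argument using $j_1=1$, $j_{t+1}=i$, $j_r\le j_{r+1}$ together with H1 and H2 shows that if any such vertex is chosen it footprints a vertex of $U$ by the following step, since each $w_r$ and each $v\in N$ lies in $N^p\langle U\rangle$ and so has \emph{all} of $U$ in its neighborhood); hence if one of these $y$-variables equals $1$ then at least one $x_{ui}$ is $0$, and the bound holds, while if all are $0$ the bound is just $\sum_{u\in U} x_{ui}\le p$.

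The substantive case is $Y_k = 1$, say vertex $z$ is chosen at step $k$. Let $q \doteq |N\langle z\rangle \cap U|\in\{0,1,\dots,p\}$, so $z \in N^q\langle U\rangle$ (if $q=0$, $z$ contributes nothing new and we fall back to the previous analysis with the observation that the terms at $k$ vanish). If $q\ge 1$, then the last two sums of \eqref{SUPERNOVA} contribute exactly $p-1$ when $q=p$ and exactly $q$ when $1\le q\le p-1$. In either sub-case I would show that $z$ being chosen at step $k\le i$ footprints those $q$ vertices of $U$ at step $k$, so by \eqref{RESTR5} they remain footprinted, giving $\sum_{u\in U} x_{ui}\le p-q$. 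Combining: when $1\le q\le p-1$ we get left side $\le (p-q) + q + (\text{at most }1\text{ from the }w_r/N\text{ block at step }i) \le p$ — here I need the extra $+1$ slack, which comes from the fact that when $q\le p-1$ there is still a vertex of $U$ available to be footprinted, so at most one more of the interval-$y$ terms or $N$-terms can fire; and when $q=p$, the $w_r$ and $N$ blocks must all be $0$ because every such vertex has all of $U$ in its neighborhood and $U$ is already fully footprinted at step $k\le i$, so they cannot be legal footprinters after step $k$, giving left side $\le 0 + p - 1 + 0 < p$. Wait — I should be careful: when $k<i$ it is conceivable a $w_r$ with $j_r\le k$ was chosen \emph{before} $z$; the ordering hypotheses H1–H2 and $j_1=1$ are exactly what rule out two of the interval terms being simultaneously active, so I would isolate a short lemma: at most one of the variables $\{y_{w_r j}: j_r\le j\le j_{r+1}\}\cup\{y_{vi}:v\in N\}$ equals $1$ in any feasible point, proved by noting that each such chosen vertex footprints the whole of $U$ on its chosen step or would fail to be a legal footprinter after $U$ is exhausted.

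The main obstacle I anticipate is the bookkeeping in the $Y_k=1$, $1\le q\le p-1$ case: one must simultaneously account for the $q$ units ``charged'' at step $k$, the $p-q$ surviving $x_{ui}$'s, and the single remaining unit of slack in the $w_r/N$ block, and verify no double counting occurs when the vertex chosen at step $k$ is itself one of the $w_r$ (so $k\in[j_r,j_{r+1}]$). I would handle this by treating ``$z\in W\cup N$'' and ``$z\notin W\cup N$'' as separate branches, in the first branch noting that the coefficient $p-1$ (if $z\in N^p\langle U\rangle$, i.e.\ $q=p$) or $q$ (if $z\in N^q\langle U\rangle$) already absorbs $z$'s own contribution to the interval sum, so that the ``at most one'' lemma need only be applied to the \emph{other} vertices. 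Once validity is established in all branches, the proof is complete; unlike the preceding propositions, no facet-defining claim is made here, so there is nothing further to verify.
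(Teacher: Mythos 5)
Your ingredients are essentially the paper's (the nesting/window argument showing that at most one of the $N$-terms and interval $W$-terms can be active, and the observation that every vertex of $N^p\la U\ra$ has all of $U$ in its neighborhood), but the final accounting in your main case does not close. When a vertex $z\in N^q\la U\ra$ with $1\le q\le p-1$ is chosen at step $k$, you bound the left-hand side by $(p-q)+q+(\text{at most }1)$, which is $p+1$, and the extra unit you then invoke --- ``at most one more of the interval-$y$ terms or $N$-terms can fire'' --- is exactly the $+1$ already counted, so it buys nothing; moreover the assertion that ``there is still a vertex of $U$ available to be footprinted'' is not guaranteed, since vertices of $U$ outside $N\la z\ra$ may have been footprinted at other steps. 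The missing idea is the coupling the paper uses: if any term of the $N$/$W$ block equals $1$, the corresponding vertex lies in $N^p\la U\ra$ and is chosen at some step $\le i$, so by \eqref{RESTR4} and \eqref{RESTR5} \emph{all} of $U$ is footprinted at step $i$ and $\sum_{u\in U}x_{ui}=0$, giving $0+1+q\le p$; conversely, if some $u\in U$ has $x_{ui}=1$, then no vertex of $N^p\la U\ra$ can have been chosen in steps $1,\ldots,i$, so the $N$-sum, the $W$-sum and the $(p-1)$-weighted sum all vanish and only the $q$-weighted term (bounded by the number of footprinted vertices of $U$) survives. Without this dichotomy the $1\le q\le p-1$ branch fails.

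A related slip occurs in your $q=p$ branch: you claim the $W$/$N$ block ``must all be $0$'' and conclude a strict bound $<p$. That is false --- not only may $z$ itself be some $w_r$ chosen inside its window (which you flag), but a different $w_{r'}$ may have been legally chosen \emph{before} step $k$ (footprinting vertices outside $U$), so the block can equal $1$ and the left-hand side can equal exactly $p$. Validity still holds, again only because the block firing forces $\sum_{u\in U}x_{ui}=0$ and the block total is at most $1$. The paper sidesteps all of these branches by splitting on $s=p-\sum_{u\in U}x_{ui}$ (the case $s<p$ kills every term with a vertex of $N^p\la U\ra$, the case $s=p$ kills the $x$-sum) rather than on the activity at step $k$; if you keep your split by what is chosen at step $k$, you must state and use the implication ``block fires $\Rightarrow$ $\sum_{u\in U}x_{ui}=0$'' explicitly in each sub-case.
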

\begin{proof}
Let $(x,y)$ be a feasible integer point of $P$.
Define
$\Sigma_N \doteq \sum_{v \in N} y_{vi}$,
$\Sigma_W \doteq \sum_{r=1}^t \sum_{j=j_r}^{j_{r+1}}$ $y_{w_r j}$,
$\Sigma_p \doteq \sum_{v \in N^p\la U\ra} (p-1) y_{vk}$,
$\Sigma_q \doteq \sum_{q=1}^{p-1} \sum_{v \in N^q\la U\ra} q y_{vk}$ and
$s \doteq p - \sum_{u \in U} x_{ui}$ ($s$ is the number of vertices $u \in U$ such that $x_{ui} = 0$).
We prove $\Sigma_N + \Sigma_W + \Sigma_p + \Sigma_q \leq s$.

First, note that, if a vertex $v \in N^q\la U \ra$ is chosen in a step from $1,\ldots,i$, then $v$ footprints $q$ vertices from $U$
implying that $1 \leq q \leq s$. Since no more than one vertex can be chosen in step $k$ and $N^1\la U\ra$, $N^2\la U\ra$, $\ldots$,
$N^p\la U\ra$ are disjoint sets of vertices, $\Sigma_q\leq s$ and $\Sigma_p + \Sigma_q \leq p-1$.

Suppose that $s < p$. Then, some vertex from $U$ is not footprinted in step $i$.
Therefore, no vertex from $N^p \la U\ra$ can be chosen at steps $1,\ldots,i$, implying that $\Sigma_N = \Sigma_W = \Sigma_p = 0$.
As $\Sigma_q \leq s$, the inequality is valid.

Now, suppose that $s = p$. Then, all vertices from $U$ are footprinted along steps $1,\ldots,i$. If some vertex from $N$ is chosen at step $i$,
i.e.~$\Sigma_N = 1$, then it is not possible to choose any vertex from $W$ in a step previous to $i$ due to hypotheses H1-H2,
implying that $\Sigma_W = 0$. On the other hand, if $\Sigma_N = 0$ and $y_{w_r j} = 1$ for some $r\in \{1,\ldots,t\}$ and
$j \in \{j_r, \ldots, j_{r+1}\}$, H1 guarantees that it is not possible to choose vertices from $w_1,\ldots,w_{r-1}$ in steps $1,\ldots,j$
nor vertices from $w_{r+1},\ldots,w_t$ in steps $j,\ldots,i$. Therefore, $\Sigma_N + \Sigma_W \leq 1$.
Since $\Sigma_p + \Sigma_q \leq p-1 = s-1$, validity follows.
\end{proof}

When a constraint \eqref{RESTR4} is not facet-defining, it is dominated by an inequality \eqref{SUPERNOVA}. Indeed, let $v'\in N\la u\ra$
such that $v' \noantes w$, for all $w\in N\la u\ra \setminus \{v'\}$. The proof of Prop.~\ref{Prop-RESTR4} suggests that \eqref{RESTR4} can
be strengthened as $x_{ui} + \sum_{v\in N\la u\ra} y_{vi} + \sum_{i'=1}^{i-1} y_{v'i'} \leq 1$.
This is exactly the $(i,i,\{u\},N\la u\ra \setminus \{v'\},\{v'\},1,i)$-inequality.
On the other hand, constraint \eqref{RESTR2} for $v$ is dominated by the $(m,m,\{u\},\emptyset,\{v\},1,m)$-inequality which is
$x_{um}+\sum_{j=1}^m y_{vj} \leq 1$.

Two types of $(i,k,U,N,W,j_1,\ldots,j_{t+1})$-inequalities deserve special attention.
These subfamilies of inequalities are ``naturally'' facet-defining on clutters, as we will see below: 
\begin{itemize}
\item \emph{Type I}: Let $i \in \{2,\ldots,m\}$, $u \in V$ and $w \in N\la u\ra$.
The $(i,i,\{u\},\emptyset,\{w\},1,i)$-inequality is
$$x_{ui} + \sum_{j=1}^i y_{w j} \leq 1.$$
\item \emph{Type II}: Let $i \in \{2,\ldots,m\}$, $k \in \{1,\ldots,i\}$, $u_1, u_2 \in V$ such that $u_1 \neq u_2$,
and $w \in N^{\cap} \doteq N\la u_1\ra \cap N\la u_2\ra$. The $(i,k,\{u_1,u_2\},\emptyset,\{w\},1,i)$-inequality is
$$x_{u_1 i} + x_{u_2 i} + \sum_{j=1}^i y_{wj} + \sum_{v \in N^{\cup}} y_{vk} \leq 2,$$
where $N^{\cup} \doteq N\la u_1\ra \cup N\la u_2\ra$.
Note that the coefficient of $y_{wk}$ is 2 in the l.h.s.
\end{itemize}

To approach Type I inequalities, we first characterize when inequalities \eqref{SUPERNOVA} with $p = 1$ are facet-defining.

\begin{prop}  \label{GENERALINEQ2}
Let $u \in V$ and consider an $(i,i,\{u\},N,W,j_1,\ldots,j_{t+1})$-inequality, i.e.
\begin{equation} \label{NOVA1}
x_{ui} + \sum_{v \in N} y_{vi} + \sum_{r=1}^t \sum_{j=j_r}^{j_{r+1}} y_{w_r j} \leq 1
\end{equation}
satisfying the hypotheses of Theorem \ref{VERYGENERALINEQ}. Then, it defines a facet of $P$ if, and only if:\\
\indent \indent H3) $N\neq \emptyset$ or $N\la w_t\ra \neq \{u\}$ (i.e.$\!$ if $w_t$ is a leaf of $G$ then $w_t \in C$),\\
\indent \indent H4) for every $v \in N\la u\ra \setminus (N \cup W)$, the sets $R^\supset(v)\doteq\{r = 1,\ldots,t : w_r \antes v\}$ and
$R^\subset(v)\doteq\{r = 1,\ldots,t : v \antes w_r\}$ have non-empty intersection, and $j_{r}<j_{r+1}$ for some
$r\in R^\supset(v)\cap R^\subset(v)$.
\end{prop}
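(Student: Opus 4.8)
The plan is to adapt the technique laid out just before the proposition: exhibit $\beta=\mu^x_{ui}$ and show, using pairs of points in the face $F=\{(x,y)\in P:\eqref{NOVA1}\text{ holds with equality}\}$, that any valid equation $\mu^x x+\mu^y y=\mu_0$ on $F$ must be a multiple of \eqref{NOVA1}. First I would record the structure of the points of $F$: either (a) $u$ is never footprinted in steps $1,\dots,i$ and no vertex of $N\la u\ra$ that appears in the l.h.s.\ is chosen early, or (b) exactly one of the "active" vertices---some $v\in N$ at step $i$, or some $w_r\in W$ at a step $j\in\{j_r,\dots,j_{r+1}\}$---is chosen, and in case $w_r$ is chosen this forces (via H1--H2) that none of $w_1,\dots,w_{r-1}$ is chosen in steps $1,\dots,j$ and none of $w_{r+1},\dots,w_t$ in steps $j,\dots,i$. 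The base points will be $\fp(N\la u\ra,i-1;\initial)$-type configurations (so that $u$ is footprinted exactly at the right step and nothing else is disturbed) together with $\ch(\cdot,\cdot;\cdot)$ applied on top of them.

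The coefficient-killing proceeds in the familiar order. Step 1: show $\mu^x_{u'i'}=0$ for all $(u',i')\neq(u,i)$; for $u'\neq u$ use $\fp(u',i';\cdot)$ vs.\ $\fp(u',i'+1;\cdot)$ starting from a point of $F$ where $u$ is already footprinted at step $i$; for $u'=u$ and $i'>i$ the same works, and for $i'<i$ one slides a footprint of $u$ down from step $i-1$ while keeping some $w$ or $v$ chosen so as to stay in $F$ (here one needs a vertex of $N\la u\ra$ with neighborhood of size $\geq 2$, i.e.\ Lemma~\ref{LEMITANV2}.(i), exactly as in Prop.~\ref{Prop-RESTR4}). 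Step 2: using base points in branch (a) versus branch (b), deduce $\mu^y_{vi}=\mu^x_{ui}$ for each $v\in N$ and, telescoping along the chains H1, $\mu^y_{w_r j}=\mu^x_{ui}$ for every $r$ and every $j\in\{j_r,\dots,j_{r+1}\}$; the overlapping-interval condition $j_r\le j_{r+1}$ guarantees the chains of equalities actually connect. Step 3: kill $\mu^y_{v'i'}$ for every remaining pair, i.e.\ whenever $v'\notin N\cup W$ or (for $v'\in N$) $i'\neq i$ or (for $v'=w_r$) $i'\notin\{j_r,\dots,j_{r+1}\}$: if $v'\notin N\la u\ra$ then choosing $v'$ cannot footprint $u$, so a point of $F$ stays in $F$ after $\ch(v',i';\cdot)$ and Step~1 gives $\mu^y_{v'i'}=0$; if $v'\in N\la u\ra\setminus(N\cup W)$, hypothesis H4 furnishes an index $r$ with $w_r\antes v'$ and $v'\antes w_r$ and a strict interval $j_r<j_{r+1}$, which is precisely what lets us build a point of $F$ in which $v'$ is chosen at step $i'$ alongside $w_r$ chosen at an adjacent step, forcing $\mu^y_{v'i'}=0$; the leaf subcase is handled by H3 (if $N\neq\emptyset$ pick a vertex of $N$ to keep the face nonempty, otherwise $N\la w_t\ra\neq\{u\}$ lets $w_t$ footprint something besides $u$). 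Finally one checks $\mu_0=\mu^x_{ui}$ by evaluating at one explicit point of $F$.

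For the "only if" direction I would argue the contrapositive for each of H3 and H4, exactly in the spirit of the last paragraph of Prop.~\ref{Prop-RESTR4}'s proof. If H3 fails, i.e.\ $N=\emptyset$ and $N\la w_t\ra=\{u\}$, then on $F$ the equality $x_{ui}=y_{w_t i}$-type relation or a forced-zero relation holds, so the face is contained in a proper face and \eqref{NOVA1} is not facet-defining (indeed it is then dominated by a constraint \eqref{RESTR3} for the leaf $w_t$). If H4 fails---say some $v\in N\la u\ra\setminus(N\cup W)$ has $R^\supset(v)\cap R^\subset(v)=\emptyset$, or the intersection is nonempty but every index in it has $j_r=j_{r+1}$---I would show that every $(x,y)\in F$ satisfies an extra equation, most naturally $y_{v i'}=0$ for a suitable step $i'$, by case analysis on which active vertex (if any) is chosen; whenever an active vertex is chosen, the freeness condition on $v$ versus that vertex, combined with the legality of the sequence, forces $v$ out of that step, so \eqref{NOVA1} can be strengthened by appending a $y_{v i'}$ term and hence is not a facet.

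The main obstacle I anticipate is Step 3 together with the sharpness of H4: one must see clearly why precisely the combinatorial condition "$R^\supset(v)\cap R^\subset(v)\neq\emptyset$ with a strict interval in it" is both necessary and sufficient to construct a point of $F$ witnessing $y_{v i'}$ free for every $i'<i$. The subtlety is that a candidate vertex $v$ may be comparable ($\antes$ in one direction only) with some $w_r$'s and incomparable with others, and the chosen $w_r$ accompanying $v$ in the witness point must simultaneously (i) have a private footprint at its own step so the sequence stays legal, (ii) leave $v$ with a private footprint at step $i'$ (this needs $v\antes w_r$), (iii) not block $v$ being chosen, i.e.\ $w_r\antes v$ so that $v$ still footprints something at $i'$ even after $w_r$ acted, and (iv) sit at a step distinct from $i'$, which is exactly where the strictness $j_r<j_{r+1}$ is used. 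Getting the quantifiers in H4 to line up exactly with the existence of such a $w_r$, and checking that no extra valid equation survives on $F$ under H3--H4, is the delicate part; the rest is the now-routine point-pushing with $\fp$ and $\ch$.
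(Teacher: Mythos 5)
Your plan follows the paper's own point-pushing template, but as written it has concrete holes on both directions. In the sufficiency part, Step 1 for the case $u'=u$, $i'<i$ cannot be justified by Lemma~\ref{LEMITANV2}.(i) ``exactly as in Prop.~\ref{Prop-RESTR4}'': there the face is $x_{ui}+\sum_{v\in N\la u\ra}y_{vi}=1$, so \emph{any} vertex of $N\la u\ra$ with a second neighbour may be chosen at step $i$; here the auxiliary chosen vertex must itself contribute to the l.h.s.\ of \eqref{NOVA1} (i.e.\ be some $v\in N$ or $w_t$, at step $i$) and must still have a footprint other than $u$ once $u$ has been footprinted at the earlier step $i'$. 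A vertex supplied by Lemma~\ref{LEMITANV2}.(i) need not lie in $N\cup W$, so the resulting point can leave $F$; what is actually needed is precisely H3 (with Remark~\ref{rem:GENERALINEQ2}(\ref{it:+1}) giving $N\la w_t\ra\neq\{u\}$ when $N\neq\emptyset$), which you relegate to a vague ``leaf subcase'' in Step 3. Moreover, Step 3 only supplies constructions for $v'\notin N\la u\ra$ and $v'\in N\la u\ra\setminus(N\cup W)$; you still have to kill $\mu^y_{v'i'}$ for $v'\in N$ with $i'<i$ (this uses H2 of Theorem~\ref{VERYGENERALINEQ} plus twin-freeness so that $w_t$, chosen at step $i$, keeps a private footprint), and, most importantly, $\mu^y_{w_r j}$ for steps $j\leq i$ outside $[j_r,j_{r+1}]$. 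That last case is where the chain really enters: the witness pairs $w_r$ chosen at $j$ with the adjacent chain element $w_{r-1}$ chosen at $j_r$ (or $w_{r+1}$ at $j_{r+1}$), using that the nesting $N\la w_r\ra\subset N\la w_{r-1}\ra$ is strict; your sketch never addresses it.

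In the converse, your treatment of a failing H3 is incorrect as stated. With $N=\emptyset$ and $N\la w_t\ra=\{u\}$, inequality \eqref{NOVA1} is \emph{not} dominated by a single constraint \eqref{RESTR3}, and neither of your candidate extra relations holds on $F$: both $(\initial)$ (where $x_{ui}=1$ and all $y$ vanish) and $\ch(w_t,i;\initial)$ (where $y_{w_t i}=1$) lie on $F$, so neither an equation of the form $x_{ui}=y_{w_t i}$ nor a forced zero $y_{w_t j}=0$ is available. The paper instead normalizes via Remark~\ref{rem:GENERALINEQ2}(\ref{it:dif}) so that $j_t<j_{t+1}$ and writes \eqref{NOVA1} as the \emph{sum} of the telescoped constraints \eqref{RESTR3} for $w_t$ over steps $j_t,\ldots,i-1$ and the shorter $(j_t,j_t,\{u\},\emptyset,W,j_1,\ldots,j_t,j_t)$-inequality; tightness of \eqref{NOVA1} then forces both summands to be tight, giving an independent valid equation on $F$. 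Your plan for a failing H4 (an extra equation $y_{vi'}=0$ on $F$ for a suitably chosen step, obtained by case analysis on which active vertex is chosen) is sound and essentially matches the paper's strengthening argument, but it still requires identifying the correct step in each sub-case ($i$ when $R^\supset(v)=\emptyset$, step $1$ when $R^\subset(v)=\emptyset$, and $j_s$ in the remaining sub-cases), which is exactly the quantifier bookkeeping you flag as delicate.
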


\begin{rem} \label{rem:GENERALINEQ2}
We give some useful observations regarding Prop.~\ref{GENERALINEQ2} before proving it:
\begin{enumerate}
	\item \label{it:+1} If $N\neq \emptyset$, say $v \in N$, then condition H2 of Theorem \ref{VERYGENERALINEQ} implies that
	  $N\la w_t\ra \neq \{u\}$ (since $w_t$ and $v$ cannot be twins);
	\item \label{it:ext} By condition H1 of Theorem \ref{VERYGENERALINEQ}, $r\in R^\supset(v)$, $r'\geq r$ imply $r'\in R^\supset(v)$,
	                                 and $r\in R^\subset(v)$, $r'\leq r$ imply $r'\in R^\subset(v)$. 
	      So, for some $s,s'$, we have $R^\supset(v)=\{s,\ldots,t\}$ and $R^\subset(v)=\{1,\ldots,s'\}$, thus giving
				$R^\supset(v)\cap R^\subset(v)=\{s,\ldots,s'\}$ if $s \leq s'$ and $R^\supset(v)\cap R^\subset(v)=\emptyset$ if $s > s'$;
	\item \label{it:max} 
	      $R^\supset(v)\cap R^\subset(v)\neq \emptyset$ is equivalent to ask that the nested sequence
	            $N\la w_1 \ra \supset N\la w_2 \ra \cdots \supset N\la w_t \ra$
	      cannot be extended by $N\la v\ra$:
	      if $R^\supset(v)=\emptyset$, then $v \noantes w_t$;
	      if $R^\subset(v)=\emptyset$, then $w_1 \noantes v$;
	      if $R^\supset(v) \neq \emptyset$, $R^\subset(v) \neq \emptyset$ but $R^\supset(v)\cap R^\subset(v)=\emptyset$, there is $r^*$ such that
				$w_{r^*+1} \noantes v$ and $v \noantes w_{r^*}$;
	\item \label{it:dif} W.l.o.g., we can assume that $j_t<j_{t+1}$. If this is not the case, we do the following. Let $r$ be the largest index
	      such that $j_r < j_{r+1}$. Consider the inequality (\ref{NOVA1})' obtained by using $W'=W\setminus \{w_{r+1},\ldots,w_t\}$,
				$N'=N\cup \{w_{r+1},\ldots,w_t\}$ and $j'_i = j_i$ for all $i = 1,\ldots,r+1$. Note that (\ref{NOVA1}) and (\ref{NOVA1})'
				are equal, and $j'_{t'}<j'_{t'+1}$ where $t' = r$.
\end{enumerate}
\end{rem}

\begin{proof} (of Prop.~\ref{GENERALINEQ2}).
Note that $N \subset N\la u\ra$ and $W \subset N\la u\ra \setminus N$.
The points of the face
$F = \{ (x,y) \in P : x_{ui} + \sum_{v \in N} y_{vi} + \sum_{r=1}^t \sum_{j=j_r}^{j_{r+1}} y_{w_r j} = 1\}$
are characterized as follows:
\emph{a}) $u$ is not footprinted in steps $1,\ldots,i$ or
\emph{b}) some vertex from $N$ is chosen in step $i$ or
\emph{c}) $w_r$ is chosen in some step between $j_r$ and $j_{r+1}$, for some $r\in \{1,\ldots,t\}$.

Let $\beta = \mu^x_{ui}$. Below, we enumerate the conditions needed to prove facet-definition and the points involved to obtain them:
\begin{enumerate}
\item $\mu^x_{u'i'} = 0$ for all $(u',i') \neq (u,i)$.\\
			Let $u'\in V\setminus \{u\}$ and $i'=1,\ldots,m$. The points $(x^1,y^1) = \fp(u',i'; \initial)$, and
			$(x^2,y^2) = \fp(u',i'+1;\initial)$ if $i' < m$ or $(x^2,y^2) = (\initial)$ otherwise, show that $\mu^x_{u'i'} = 0$.
			The same points can be used to show that $\mu^x_{ui'} = 0$ when $i' > i$.
			Now, let $i' < i$. The points $(x^1,y^1) = \ch(w_t,i;\initial)$ and $(x^2,y^2) = \fp(u,i'; x^1,y^1)$ show that
			$\sum_{j=i'}^{i-1} \mu^x_{uj} = 0$.
			By assigning $i'$ in the order $i-1, i-2, \ldots$, we obtain $\mu^x_{u i-1} = 0$, $\mu^x_{u i-2} = 0$, $\ldots$.
			Note that $(x^2, y^2) \in F$ since $w_t$ has a vertex different from $u$ to footprint in step $i$ by hypothesis H3 and
			Remark \ref{rem:GENERALINEQ2}(\ref{it:+1}).
\item $\mu^y_{vi} = \mu^x_{ui}$ for all $v \in N$.\\
			We use points $(x^1,y^1) = \ch(v,i;\initial)$, $(x^2,y^2) = (\initial)$ and apply item 1.
\item $\mu^y_{w_r j} = \mu^x_{ui}$ for all $r = 1,\ldots,t$ and $j \in \{j_r, \ldots, j_{r+1}\}$.\\
			We use points $(x^1,y^1) = \ch(w_r,j;\initial)$, $(x^2,y^2) = (\initial)$ and apply item 1.
\item $\mu^y_{v'i'} = 0$ for all $(v',i')$ such that $v' \in V\setminus N\la u\ra$ or $i' > i$.\\
			Consider $(x^1,y^1) = \ch(v',i';\initial)$, $(x^2,y^2) = (\initial)$ and apply item 1.
\item $\mu^y_{v'i'} = 0$ for all $v' \in N$ and $i' < i$.\\
			Consider $(x^1,y^1) = \ch(w_t,i;\initial)$, $(x^2,y^2) = \ch(v',i';x^1,y^1)$ and apply item 1. Note that $(x^2,y^2) \in F$ because $w_t$ has a
			vertex to footprint in step $i$ by hypothesis H2 of Theorem \ref{VERYGENERALINEQ} and the fact that $v'$ and $w_t$ are not twins.
\item $\mu^y_{v'i'} = 0$ for all $v' \in N\la u\ra \setminus (N\cup W$) and $i' \leq i$.\\
			By hypothesis H4 there is $r$ such that $w_r \antes v'$, $v' \antes w_r$ and $j_{r}<j_{r+1}$. Then, there is $j\in \{j_r,j_{r+1}\}\setminus \{i'\}$. Consider $(x^1,y^1) = \ch(w_r,j;\initial)$ and $(x^2,y^2) = \ch(v',i';x^1,y^1)$. Note that $(x^2,y^2) \in F$ since there is no inclusion relation between $N\la w_r\ra$ and $N\la v'\ra$. Using item 1, we get $\mu^y_{v'i'} = 0$.
\item $\mu^y_{w_r j} = 0$ for all $r = 1,\ldots,t$ and $j \leq i$ such that $j \notin \{j_r, \ldots, j_{r+1}\}$.\\
			Suppose that $j < j_r$. Then, $r \geq 2$. Consider $(x^1,y^1) = \ch(w_{r-1},j_r;\initial)$ and
			$(x^2,y^2) = \ch(w_r,j;x^1, y^1)$. Note that $(x^2,y^2) \in F$ since $N\la w_r\ra \subset N\la w_{r-1} \ra$ and this inclusion is
			strict ($w_r$ and $w_{r-1}$ are not twins) allowing $w_{r-1}$ to be a footprinter at step $j_r$. These points and item 1 lead to $\mu^y_{w_r j} = 0$.
			Now, suppose that $j > j_{r+1}$. Then, $r < t$. A similar argument can be used to obtain the same result with points
			$(x^1,y^1) = \ch(w_{r+1},j_{r+1};\initial)$ and $(x^2,y^2) = \ch(w_r,j;x^1, y^1)$.
\end{enumerate}
For the converse, consider \eqref{NOVA1} after applying Remark \ref{rem:GENERALINEQ2}(\ref{it:dif}).
Then, $j_t < j_{t+1}$. We have the following cases:
\begin{itemize}
\item H3 does not hold, i.e.$\!$ $N=\emptyset$ and $N\la w_t\ra=\{u\}$. Inequality \eqref{NOVA1} is the sum of constraints \eqref{RESTR3}
for $w_t$ and $j_t,\ldots,i-1$, which gives $\sum_{j=j_t+1}^i y_{w_tj} + x_{ui}\leq x_{uj_t}$, together with the $(j_t, j_t, \{u\}, \emptyset, W, j_1, \ldots, j_t, j_t)$-inequality (note that $i$ and $j_{t+1}$ are set to $j_t$).
\item H4 does not hold. Let $v\in N\la u\ra \setminus (N\cup W)$. If $R^\supset(v)=\emptyset$, the stronger 
$(i, i, \{u\},$ $N\cup \{v\}, W, j_1, \ldots, j_{t+1})$-inequality is valid.
Analogously, if $R^\subset(v)=\emptyset$, the stronger $(i, i, \{u\}, N, \{v,w_1,\ldots,w_t\}, j_1, j_1, \ldots, j_{t+1})$-inequality is valid.
If $R^\supset(v)\neq\emptyset$, $R^\subset(v)\neq\emptyset$ but $R^\supset(v)\cap R^\subset(v)=\emptyset$,
Remark \ref{rem:GENERALINEQ2}(\ref{it:max}) ensures the existence of $r^*$ such that the
$(i, i, \{u\}, N,$ $\{w_1,\ldots,w_{r^*},v,w_{r^*+1},\ldots,w_t\},$ $j_1, \ldots , j_{r^*},$ $j_{r^*}, j_{r^*+1}, \ldots, j_{t+1})$-inequality
is a (dominating) valid inequality.
Finally, assume that $R^\supset(v)\cap R^\subset(v)$ is a non-empty subinterval $\{s,\ldots,s'\}$ with $1\leq s\leq s'\leq t$.
Since H4 does not hold, $j_s=j_{s+1}=\ldots=j_{s'}=j_{s'+1}$. 
It follows by Remark \ref{rem:GENERALINEQ2}\eqref{it:ext} that $\{1,\ldots,s-1\}\cap R^\supset(v)=\emptyset$ and $\{s'+1,\ldots,t\}\cap R^\subset(v)=\emptyset$. So, $N\la w_1\ra \supset \cdots \supset N\la w_{s-1}\ra \supset N\la v\ra\supset N\la w_{s'+1}\ra \supset \cdots N\la w_{t}\ra$.
If we add $y_{vj_s}$ to the l.h.s.$\!$ of \eqref{NOVA1}, the inequality remains valid and becomes stronger.
Indeed, by choosing $v$ at step $j_s$, clearly we cannot choose any $w_s$, $w_{s+1}$, $\ldots$, $w_{s'}$ in step $j_s$,  
nor vertices $w_1,\ldots,w_{s-1}$ in steps $1,2,\ldots,j_s$ nor vertices $w_{s'+1},\ldots,w_{t}$ in steps $j_s,\ldots,i$ nor vertices from $N$ in step $i$. 
Besides that, $v$ footprints $u$. 
\end{itemize}
\end{proof}

\begin{cor} \label{FAM1CLUTTER}
If $G;C$ is a clutter, Type I inequalities define facets of $P$.
\end{cor}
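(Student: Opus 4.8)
The plan is to derive the result directly from Proposition~\ref{GENERALINEQ2}, observing that a Type~I inequality $x_{ui}+\sum_{j=1}^{i}y_{wj}\leq 1$ is precisely an $(i,i,\{u\},N,W,j_1,\ldots,j_{t+1})$-inequality with the choices $t=1$, $W=\{w_1\}=\{w\}$, $N=\emptyset$, $j_1=1$ and $j_2=i$. So it suffices to check that the hypotheses of Theorem~\ref{VERYGENERALINEQ} (guaranteeing we are looking at a genuine inequality of that family) and the two extra conditions H3 and H4 of Proposition~\ref{GENERALINEQ2} all hold automatically when $G;C$ is a clutter.

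First I would dispatch H1 and H2: since $t=1$ there are no indices $r$ with $1\le r\le t-1$, so H1 is vacuous, and since $N=\emptyset$, H2 is vacuous as well. Next, for H3 I would invoke Lemma~\ref{LEMITANV2}(ii): in a clutter every vertex, in particular $w_t=w$, satisfies $|N\la w\ra|\ge 2$, hence $N\la w\ra\neq\{u\}$, which is exactly the statement of H3 when $N=\emptyset$. Finally, for H4, take any $v\in N\la u\ra\setminus(N\cup W)=N\la u\ra\setminus\{w\}$. Because $G;C$ is a clutter, every ordered pair of distinct vertices is comparable under $\antes$ in both directions, so $w\antes v$ and $v\antes w$; therefore $R^\supset(v)=\{r=1:\, w_r\antes v\}=\{1\}$ and $R^\subset(v)=\{r=1:\, v\antes w_r\}=\{1\}$, whence $R^\supset(v)\cap R^\subset(v)=\{1\}\neq\emptyset$. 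Moreover the required strictness $j_r<j_{r+1}$ for some $r$ in this intersection holds with $r=1$, since $j_1=1<i=j_2$ using $i\ge 2$. Thus H4 is satisfied, and Proposition~\ref{GENERALINEQ2} yields that the Type~I inequality defines a facet of $P$.

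There is essentially no serious obstacle here: the corollary is a clean specialization, and the only points demanding a word of care are (a) confirming that the degenerate parameters $t=1$, $N=\emptyset$ make H1--H2 trivial, (b) recording that $w\in N\la u\ra$ together with the clutter hypothesis forces $N\la w\ra\neq\{u\}$ via Lemma~\ref{LEMITANV2}(ii), and (c) noting that $i\ge2$ is what makes the interval $[j_1,j_2]=[1,i]$ nondegenerate so that the strictness clause of H4 is met. Assembling these observations gives the proof in a few lines.
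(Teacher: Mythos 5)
Your proof is correct and follows essentially the same route as the paper: specialize Proposition~\ref{GENERALINEQ2} with $t=1$, $W=\{w\}$, $N=\emptyset$, get H3 from Lemma~\ref{LEMITANV2}(ii) and H4 from the clutter property giving $R^\supset(v)=R^\subset(v)=\{1\}$. Your extra remarks on the vacuity of H1--H2 and the strictness $j_1=1<i=j_2$ (using $i\ge 2$) are details the paper leaves implicit, but the argument is the same.
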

\begin{proof}
Using $N=\emptyset$, $t=1$ and $w_t=w$ in Prop.~\ref{GENERALINEQ2},
H3 holds due to Lemma \ref{LEMITANV2}.(ii) and H4 holds
since $R^\supset(v) = R^\subset(v) = \{1\}$ for every $v \in N\la u\ra \setminus \{w\}$.
\end{proof}

We now focus on Type II.
To address each vertex from $\{u_1,u_2\}$ with respect to the other, for $r \in \{1, 2\}$, let $\bar{u_r}$ denote $u_{3-r}$, i.e. $\bar{u_1} = u_2$ and $\bar{u_2} = u_1$.

\begin{prop} \label{FAM2}
A Type II inequality with $k = i$ defines a facet if, and only if:\\
\indent H1) for every $v \in N^{\cap} \setminus \{w\}$, $v \antes w$ and $w \antes v$,\\
\indent H2) for each $r \in \{1, 2\}$, there is $v_r \in N\la u_r\ra \setminus N\la \bar{u_r}\ra$ such that $w \antes v_r$. \\
A Type II inequality with $k=1$ defines a facet if, and only if, H1 holds and:\\
\indent H3) for every $v \in N^{\cup} \setminus N^{\cap}$ such that $w \noantes v$, there exists
$x^v \in (N^{\cup} \setminus N^{\cap}) \setminus \{v\}$ such that 
$|N\la v\ra \cap N\la x^v\ra \cap \{u_1,u_2\}|=1$ and $x^v \antes v$,\\
\indent H4) for each $r \in \{1, 2\}$, there is $z_r \in N\la u_r\ra \setminus N\la \bar{u_r}\ra$ such that $N\la w\ra \setminus (\{\bar{u_r}\} \cup N\la z_r\ra) \neq \emptyset$.\\
A Type II inequality with $1<k<i$ defines a facet if, and only if, H1 and H3 hold and:\\
\indent H5) there exists $v^* \in N^{\cup} \setminus \{w\}$ such that $w \antes v^*$ and $N\la w\ra \setminus (\{u_1,u_2\} \cup N\la v^* \ra)\neq\emptyset$,\\
\indent H6) for each $r \in \{1, 2\}$, one of the following conditions holds:
\begin{itemize}
\item there are $\tilde v_r\in N\la u_r\ra\setminus N\la \bar{u_r}\ra$ and $\tilde v'_r\in N\la\bar{u_r}\ra\setminus \{w\}$ such that $w\antes \tilde v_r$ and $w\antes \tilde v'_r$, or
\item there is $\tilde z_r\in N\la u_r\ra \setminus N\la \bar{u_r}\ra$ such that $N\la w\ra \setminus (\{\bar{u_r}\} \cup N\la \tilde z_r \ra)\neq\emptyset$.
\end{itemize}
\end{prop}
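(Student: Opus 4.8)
The plan is to follow the by-now-standard scheme of this section. Write a Type II inequality as $\alpha^x x+\alpha^y y\le 2$ and let $F$ be the face it defines; the first task is an explicit description of $F$. Expanding \eqref{SUPERNOVA} with $p=2$, $U=\{u_1,u_2\}$, $N=\emptyset$, $W=\{w\}$, the left-hand side is $x_{u_1i}+x_{u_2i}+\sum_{j=1}^i y_{wj}+\sum_{v\in N^{\cup}}y_{vk}$ with the coefficient of $y_{wk}$ equal to $2$. Organising by the value of $x_{u_1i}+x_{u_2i}$, a point $(x,y)$ lies in $F$ exactly when one of the following holds: (c) neither $u_1$ nor $u_2$ is footprinted through step $i$ (so $x_{u_1i}=x_{u_2i}=1$ and no vertex of $N^{\cup}$ is chosen in steps $1,\dots,i$); (b) precisely one of them, say $u_r$, is not footprinted through step $i$ (so $x_{u_ri}=1$, $x_{\bar{u_r}i}=0$), $w$ is not chosen in steps $1,\dots,i$, and some vertex of $N\la\bar{u_r}\ra\setminus N\la u_r\ra$ is chosen at step $k$; (a) both $u_1,u_2$ are footprinted through step $i$, $w$ is chosen at some step $\le i$, and exactly one vertex of $N^{\cup}$ is chosen at step $k$ (possibly $w$ itself, which then accounts for both events). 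Getting this trichotomy right — and, in case (b), tracking which vertex must footprint $\bar{u_r}$ at step $k$ — is the backbone of every subsequent point construction.

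For sufficiency, suppose $\mu^x x+\mu^y y=\mu_0$ holds on $F$ and fix $\beta$ to be the value that the ``unit'' coefficients should take (it is convenient to anchor $\beta$ at a genuinely free $y$-coordinate, as in Prop.~\ref{GENERALINEQ2}, with a mild adjustment for $k=1$ where the step-$k$ and step-$1$ structures coincide). The argument then runs through the usual chain of claims, each realised by a pair of points of $F$ built from $(\initial)$ with $\fp$ and $\ch$: (1) $\mu^x_{u'i'}=0$ whenever $(u',i')\notin\{(u_1,i),(u_2,i)\}$, using footprint moves inside case (c); (2) $\mu^y_{vj}=0$ for every $(v,j)$ outside $\{(w,1),\dots,(w,i)\}\cup\{(v,k):v\in N^{\cup}\}$, by choosing $v$ at step $j$ on top of a point already realising the bound; and (3) the surviving coefficients are pinned down, $\mu^x_{u_1i}=\mu^x_{u_2i}=\mu^y_{wj}=\beta$ for $j\ne k$, $\mu^y_{wk}=2\beta$, $\mu^y_{vk}=\beta$ for $v\in N^{\cup}\setminus\{w\}$, and $\mu_0=2\beta$. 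Step (3) consumes the whole hypothesis list: H1 ($v\antes w$ and $w\antes v$ for $v\in N^{\cap}\setminus\{w\}$) is precisely what lets one swap $w$ for such a $v$ as the vertex chosen at step $k$ while staying in $F$, equating $\mu^y_{vk}$ with half of $\mu^y_{wk}$; the existence clauses H2, H4, H6 provide, for each $r$, a vertex of $N\la u_r\ra\setminus N\la\bar{u_r}\ra$ that $w$ (or the chosen step-$k$ vertex) can footprint at the relevant step without touching the forbidden set, which is exactly what makes the two-coordinate moves that recover $\mu^x_{u_ri}$ legal members of $F$; and H3, H5 play the analogous role for $k=1$ and $1<k<i$ respectively, ensuring that the intermediate vertex chosen at step $k$ (or the vertex $v^*$) still footprints something new. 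The three regimes $k=i$, $k=1$, $1<k<i$ run in parallel and differ only in which steps are tight: for $k=i$ the step-$k$ and step-$i$ events merge, so fewer auxiliary vertices are needed, whereas for $1<k<i$ both structures must be respected simultaneously, which is why H5--H6 appear only there.

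For necessity I would negate each hypothesis in turn and exhibit, as in the converse halves of Propositions~\ref{Prop-RESTR4} and \ref{GENERALINEQ2}, either a strictly stronger valid inequality or an extra equation satisfied by all of $F$. If H1 fails for some $v\in N^{\cap}\setminus\{w\}$ with $v\noantes w$ (so $N\la w\ra\subset N\la v\ra$), then every point of $F$ satisfies $y_{vk}=0$, since choosing $v$ at step $k$ footprints everything $w$ would and exceeds the bound; hence $F$ is a proper sub-face. If H2 (resp.\ H4, H6) fails for some $r$, the sufficiency proof shows that the $x_{u_ri}$-part of the inequality can be absorbed, producing a dominating $(i,k,U,N,W,\dots)$-inequality with $N$ or $W$ enlarged by the offending vertex, exactly as constructed from Theorem~\ref{VERYGENERALINEQ}; likewise H3 (resp.\ H5) failing is absorbed into the vertex chosen at step $k$. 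In every case the dominating inequality is again of the shape \eqref{SUPERNOVA}.

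The step I expect to be the real obstacle is (3) in the sufficiency direction, specifically recovering $\mu^x_{u_1i}=\mu^x_{u_2i}$ and tying them to $\beta$: because on $F$ one can never flip $x_{u_1i}$ or $x_{u_2i}$ in isolation (any such flip moves between cases (c), (b), (a)), one is forced to use two-coordinate moves that simultaneously toggle $x_{u_ri}$ and the compensating footprinter chosen at step $k$, and the bookkeeping of which vertex can serve as that compensator — under each regime of $k$, and compatibly with the coefficient $2$ on $y_{wk}$ — is exactly what makes the hypothesis list as long as it is. Everything else is routine once the trichotomous description of $F$ is in hand.
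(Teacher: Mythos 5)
Your face description and the broad scheme coincide with the paper's, but there are two genuine gaps. In the sufficiency direction, your step (1) --- that $\mu^x_{u'i'}=0$ for every $(u',i')\notin\{(u_1,i),(u_2,i)\}$ follows from ``footprint moves inside case (c)'' --- is false as stated, and it hides the hardest part of the argument. For $u'\in\{u_1,u_2\}$ and $i'\le i$ you cannot footprint $u'$ at step $i'$ and stay on $F$ without compensation: the point leaves case (c), and a vertex of $N^{\cup}$ (or $w$ itself) must be chosen at suitable steps to keep the left-hand side at $2$. Killing $\mu^x_{u_r i'}$ for $i'<k$ requires first choosing $w$ at step $k$, and killing $\mu^x_{\bar{u_r}\, i'}$ for $k\le i'<i$ is precisely where H4, H5 and H6 are consumed: when some $z\in N\la u_r\ra\setminus N\la \bar{u_r}\ra$ satisfies $N\la w\ra\setminus(\{\bar{u_r}\}\cup N\la z\ra)\neq\emptyset$ one can pair $\ch(w,i'+1;\ch(z,k;\initial))$ with its $\fp(\bar{u_r},i';\cdot)$ variant, but when no such $z$ exists one needs the vertices $\tilde v_r,\tilde v'_r$ of H6 and $v^*$ of H5 and a chain of identities relating $\mu^y_{\tilde v'_r k}$, $\mu^y_{\tilde v_r k}$ and the partial sums $\sum_j\mu^x_{\bar{u_r} j}$, $\sum_j\mu^x_{u_r j}$. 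Your sketch contains none of this; you locate all use of the hypotheses in your step (3), i.e.\ on the $y$-coefficients and on $\mu^x_{u_1 i}=\mu^x_{u_2 i}$, which, once the intermediate $x$-coefficients have been eliminated, is the comparatively easy part.

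The necessity direction is also not a matter of ``absorbing'' the negated hypothesis into a dominating inequality of the shape \eqref{SUPERNOVA}. When H2, H4 or (in one subcase) H6 fails, one must show that every point of $F$ satisfies an extra \emph{equation}, namely constraint \eqref{RESTR4} for $\bar{u_r}$ and step $k$ at equality; when H5 fails with $N\la w\ra\subset N\la v^*\ra\cup\{u_1,u_2\}$ one must show that the Type II inequality with $i$ replaced by $k$ is tight on all of $F$, and in the remaining H6 case that a $(k,k,\{\bar{u_r}\},N\la\bar{u_r}\ra\setminus\{w\},\{w\},1,k)$-inequality is tight on $F$ --- each needing its own argument, absent from your proposal. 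Finally, your H1 argument is wrong as stated: if $v\in N^{\cap}\setminus\{w\}$ with $v\noantes w$, it is not true that $y_{vk}=0$ on $F$ when $k>1$ (choose $w$ at step $1$ and $v$ at step $k$; since $v$ and $w$ are not twins this is legal, and the resulting point lies on $F$). The correct extra equation is $y_{v1}=0$, and the symmetric failure $w\noantes v$, which yields $y_{vi}=0$, must be treated as well.
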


\begin{rem} \label{rem:GENERALINEQ3}
The following observations will be useful in the proof of Prop.~\ref{FAM2}:
\begin{enumerate}
\item If $v \in N^{\cup} \setminus N^{\cap}$ then $v \antes w$ since $w \in N^{\cap}$;
\item In any of the three cases enumerated in Prop.~\ref{FAM2}, $u_1 \antes u_2$ and $u_2 \antes u_1$ (due to H2, H4 or H6);
\item The condition $N\la w\ra \setminus (\{u_1,u_2\} \cup N\la v^* \ra)\neq\emptyset$ in H5 is equivalent to ask
$v^* \antes w$ when $v^*\in N^\cap\setminus \{w\}$, or
$N\la w\ra \setminus (\{\bar{u_r}\} \cup N\la v^* \ra)\neq\emptyset$ when $v^*\in N\la u_r\ra\setminus N\la \bar{u_r}\ra$ for
some $r \in \{1, 2\}$.
\end{enumerate}
\end{rem}
\begin{proof}  (of Prop.~\ref{FAM2}).
The points of the face $F = \{(x,y) \in P : x_{u_1 i} + x_{u_2 i} + \sum_{j=1}^i y_{wj} + \sum_{v \in N^{\cup}} y_{vk} = 2\}$
are characterized as follows:
\emph{a}) neither $u_1$ nor $u_2$ are footprinted in steps $1,\ldots,i$ or
\emph{b}) a vertex from $N\la u_1\ra \setminus N\la u_2\ra$ is chosen in step $k$ and $u_2$ is not footprinted in steps $1,\ldots,i$ or
\emph{c}) the same as (b) with $u_1$ and $u_2$ interchanged or
\emph{d}) $w$ is chosen in a step from $1,\ldots,i$ and a vertex from $N^{\cup}$ (including $w$) is chosen in step $k$.

Let $\beta = \mu^x_{u_1 i}$. Below, we enumerate the conditions needed to prove facet-definition and the points involved to obtain them:
\begin{enumerate}
\item $\mu^x_{u'i'} = 0$ for all $(u',i')$ such that $u'\in V\setminus \{u_1,u_2\}$ or $i' \in \{1,\ldots,m\}\setminus \{k,\ldots,i\}$.\\
			Let $u'\in V$ and $i'=1,\ldots,m$. If $u'\notin \{u_1,u_2\}$ or $i'>i$, these points are in $F$: $(x^1,y^1) = \fp(u',i'; \initial)$ and $(x^2,y^2) = \fp(u',i'+1;\initial)$, if $i' < m$, or $(x^2, y^2) = (\initial)$, otherwise. They show that $\mu^x_{u'i'} = 0$.
			If $u' = u_r$ and $i'<k$, for some $r \in \{1,2\}$, the points $(x^1,y^1) = \ch(w,k; \initial)$ and $(x^2,y^2) = \fp(u',i'; x^1,y^1)$
			lead to $\sum_{j=i'}^{k-1} \mu^x_{u' j} = 0$. By assigning $i'$ in the order $k-1, k-2, \ldots$, we obtain $\mu^x_{u' k-1} = 0$, $\mu^x_{u' k-2} = 0$, $\ldots$
			Note that $(x^2, y^2) \in F$ since $w$ can footprint $\bar{u_r}$ at step $k$.
\item $\mu^x_{\bar{u_r} i'} = 0$ for all $r \in \{1,2\}$ and $k \leq i' < i$.\\
			We consider two cases:  
			\begin{enumerate}
			\item there is $z\in N\la u_r \ra \setminus N\la \bar{u_r} \ra$ such that  $N\la w\ra \setminus (\{\bar{u_r} \} \cup N\la z\ra)\neq\emptyset$. We have that $z \antes w$ and $\bar{u_r} $ can be footprinted before $w$ is chosen. 
			We get $\mu^x_{\bar{u_r}  i'} = 0$ from points $(x^1,y^1) = \ch(w,i'+1;\ch(z,k;\initial))$ and $(x^2,y^2) = \fp(\bar{u_r} ,i'; x^1,y^1)$.  
			\item for all $z\in N\la u_r \ra \setminus N\la \bar{u_r} \ra$, $N\la w\ra \subset \{\bar{u_r} \} \cup N\la z\ra$.
			By H4, $k\geq 2$.
			Consider $\tilde v_r$ and $\tilde v'_r$ from H6 (since the second statement from H6 does not hold).
			Then, the points $(x^1,y^1)=\ch(\tilde v'_r,k;\ch(w,1;\initial))$ and $(x^2,y^2)=\ch(\tilde v_r,k;\ch(w,1;\initial))$ together with item 1 prove that $\mu^y_{\tilde v'_r k}=\mu^y_{\tilde v_r k}$. First, suppose that $N^\cap\neq \{w\}$.
In virtue of H1, we can assume that $\tilde v'_r\in N^\cap\setminus\{w\}$. By H1 and Rem.~\ref{rem:GENERALINEQ3}(1), the points $(x^1,y^1)=\ch(\tilde v'_r,k;\ch(w,i'+1;\initial))$ and $(x^2,y^2)=\ch(\tilde v_r,k;\ch(w,i'+1;\initial))$ belong to $F$
respectively, and then show that 
			\begin{equation}\label{eq:aux-caseb}
			\mu^y_{\tilde v'_r k}=\mu^y_{\tilde v_r k}+\sum_{j=k}^{i'}\mu^x_{\bar{u_r} j}. 
			\end{equation}
			Now, suppose that $N^\cap= \{w\}$. Then, $\tilde v'_r \in N\la \bar{u_r} \ra\setminus N\la u_r \ra$. By Rem.~\ref{rem:GENERALINEQ3}(1), again the same two points are in $F$, but now lead to
			\begin{equation}\label{eq:aux-casebb}
			\mu^y_{\tilde v'_r k}+\sum_{j=1}^{i'}\mu^x_{u_r j}=\mu^y_{\tilde v_r k}+\sum_{j=1}^{i'}\mu^x_{\bar{u_r} j}.
			\end{equation}
			Consider $v^*$ from H5. The condition of case (b) implies that $v^* \notin N\la u_r \ra \setminus N\la \bar{u_r} \ra$ (otherwise, it would contradict H5). As $N^\cap= \{w\}$, $v^*$ must belong to $N\la \bar{u_r} \ra \setminus N\la u_r \ra$.
			From points $(x^1,y^1) = \ch(w,j+1;\ch(v^*,k;\initial))$ and $(x^2,y^2) = \fp(u_r,j; x^1,y^1)$, we get $\mu^x_{u_r j} = 0$ for all $j=k,\ldots,i-1$. Also, by item 1, $\sum_{j=1}^{k-1}\mu^x_{u_r j} = \sum_{j=1}^{k-1}\mu^x_{\bar{u_r} j} = 0$.
Hence, \eqref{eq:aux-casebb} becomes \eqref{eq:aux-caseb}.\\
			Finally, by taking $i'$ in the order $k,k+1,\ldots,i-1$ and using $\mu^y_{\tilde v'_r k}=\mu^y_{\tilde v_r k}$ lead to $\mu^x_{\bar{u_r} i'}=0$ for all $i'=k,\ldots,i-1$.
			\end{enumerate}
\item $\mu^x_{u_2 i} = \mu^x_{u_1 i}$.\\
			Consider $(x^1,y^1) = \ch(v_1,k;\initial)$ and $(x^2,y^2) = \ch(v_2,k;\initial)$ where $v_1 \in N\la u_1\ra \setminus N\la u_2\ra$ and $v_2 \in N\la u_2\ra \setminus N\la u_1\ra$ (they exist due to Rem.~\ref{rem:GENERALINEQ3}(2)).
			After applying items 1-2, we get $\mu^x_{u_2 i} + \mu^y_{v_1 k} = \mu^x_{u_1 i} + \mu^y_{v_2 k}$.
			Hence, it is enough to prove $\mu^y_{v_1 k} = \mu^y_{v_2 k}$.
			If $k < i$, points $(x^1,y^1) = \ch(w,k+1;\ch(v_1,k;\initial))$ and $(x^2,y^2) = \ch(w,k+1;\ch(v_2,k;\initial))$
			belong to $F$ since $v_1 \antes w$ and $v_2 \antes w$ by Rem.~\ref{rem:GENERALINEQ3}(1).
			We get an equality where $\mu^x_{u_1 i}$ and $\mu^x_{u_2 i}$ are absent. Then, we apply items 1-2.
			If $k = i$, we can use points $(x^1,y^1) = \ch(v_1,i;\ch(w,i-1;\initial))$ and $(x^2,y^2) = \ch(v_2,i;\ch(w,i-1;\initial))$,
			which belong to $F$ by H2, and apply item 1.
\item $\mu^y_{v'i'} = 0$ for all $(v',i')$ such that $v' \in V \setminus N^{\cup}$ or $i' > i$.\\
			Consider $(x^1,y^1) = \ch(v',i';\initial)$, $(x^2,y^2) = (\initial)$ and apply item 1.
\item $\mu^y_{vk} = \mu^x_{u_1 i}$ for all $v \in N^{\cup} \setminus N^{\cap}$.\\
			Consider $(x^1,y^1) = \ch(v,k;\initial)$, $(x^2,y^2) = (\initial)$ and apply items 1-2.
			In the case $v \in N\la u_2\ra \setminus N\la u_1\ra$, we obtain $\mu^y_{vk} = \mu^x_{u_2 i}$, so we also apply item 3.
\item $\mu^y_{wk} = 2 \mu^x_{u_1 i}$.\\
			Consider $(x^1,y^1) = \ch(w,k;\initial)$, $(x^2,y^2) = (\initial)$ and apply items 1, 2 and 3.
\item $\mu^y_{wj} = \mu^x_{u_1 i}$ for all $j \in \{1,\ldots,i\} \setminus \{k\}$.\\
			Consider the point $(x^1,y^1) = \ch(v,k;\ch(w,j;\initial))$ where $v$ is given as follows:
			\begin{itemize}
			\item Case $k = i$: Here $j < i$. Let $v = v_2$ be the vertex given by H2. Then, $w \antes v_2$.
			\item Case $j < k < i$: Consider the vertex $v = v^*$ from H5 satisfying $w \antes v^*$.
			\item Case $k < j \leq i$: Consider some $v \in N^{\cup} \setminus N^{\cap}$ by Rem.~\ref{rem:GENERALINEQ3}(2).
			Note that $v \antes w$ by Rem.~\ref{rem:GENERALINEQ3}(1).
			\end{itemize}
			Take $(x^2,y^2) = (\initial)$ and apply items 1, 2 and 3.
			We get $\mu^y_{wj} + \mu^y_{vk} = 2 \mu^x_{u_1 i}$. Then, apply item 5.
\item $\mu^y_{vk} = \mu^x_{u_1 i}$ for all $v \in N^{\cap} \setminus \{w\}$.\\
			Take $j \in \{1,\ldots,i\} \setminus \{k\}$. H1 guarantees that $v$ and $w$ can be chosen at steps $k$ and $j$
			respectively, and the points $(x^1,y^1) = \ch(v,k;\ch(w,j;\initial))$ and $(x^2,y^2) = \ch(w,k;\initial)$ show that
			$\mu^y_{vk} + \mu^y_{wj} = \mu^y_{wk}$ by items 1-2. Then, apply items 6 and 7.
\item $\mu^y_{v'i'} = 0$ for all $v' \in N^{\cup} \setminus \{w\}$ and $i' \in \{1,\ldots,i\} \setminus \{k\}$.\\
			First, consider the case $v' \in N^{\cup} \setminus N^{\cap}$, $i' > k$ and $w \noantes v'$. In virtue of H3, there exists
			$x^{v'}$ such that $x^{v'} \antes v'$ and $N\la v'\ra \cap N\la x^{v'}\ra \cap \{u_r,\bar{u_r}\} = \{u_r\}$ for some $r$.
			Hence, neither $v'$ nor $x^{v'}$ is adjacent to $\bar{u_r}$.
			Therefore, $v'$ and $x^{v'}$ can be chosen at steps $i'$ and $k$ respectively, and $\bar{u_r}$ is not footprinted in $1,\ldots,i$.
			The points $(x^1,y^1) = \ch(x^{v'},k;\initial)$ and $(x^2,y^2) = \ch(v',i';x^1,y^1)$ lead to $\mu^y_{v'i'} = 0$ by items 1-2.
			For any other case, the points $(x^1,y^1) = \ch(w,k;\initial)$ and $(x^2,y^2) = \ch(v',i';x^1,y^1)$ lead to $\mu^y_{v'i'} = 0$
			again by items 1-2.
			It only remains to prove that $(x^2,y^2)$ lies on $F$.
			If $v' \in N^{\cap} \setminus \{w\}$, H1 ensures that $v'$ and $w$ can be chosen at steps $i'$ and $k$ respectively.
			Now consider $v' \in N^{\cup} \setminus N^{\cap}$.
			If $i' < k$, note that $v' \antes w$ by Rem.~\ref{rem:GENERALINEQ3}(1), so we are done.
			Now consider $i' > k$. Therefore, $w \antes v'$ and, again, $(x^2,y^2) \in F$.
\end{enumerate}
We now prove the converse.
If H1 does not hold, there is $v \in N^{\cap} \setminus \{w\}$ such that $v \noantes w$ or $w \noantes v$. In the first case,
any solution $(x,y) \in F$ satisfies $y_{v1} = 0$.
Similarly, in the case that $w \noantes v$, any solution $(x,y) \in F$ satisfies $y_{vi} = 0$.

To prove necessity for the other hypotheses, we use the following auxiliary result:
\begin{claim}\label{claim}
Suppose that, for some $r \in \{1,2\}$, every $v \in N\la u_r\ra \setminus N\la \bar{u_r}\ra$ satisfies:
(i) $w \noantes v$, if $k=i$;
(ii) $N\la w\ra \subset \{\bar{u_r}\}\cup N\la v\ra$, if $k=1$; or
(iii) $w \noantes v$ and $N\la w\ra \subset \{\bar{u_r}\}\cup N\la v\ra$, if $1<k<i$.
Then, every $(x,y)\in F$ satisfies constraint \eqref{RESTR4} for $\bar{u_r}$ and $k$ as equality,
i.e.~$x_{\bar{u_r} k} + \sum_{z \in N\la \bar{u_r}\ra} y_{zk} = 1$.
\end{claim}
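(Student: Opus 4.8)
The plan is to establish, for every $(x,y)\in F$, the inequality $x_{\bar{u_r}k}+\sum_{z\in N\la\bar{u_r}\ra}y_{zk}\ge 1$ opposite to~\eqref{RESTR4} for $\bar{u_r}$ and $k$; combined with~\eqref{RESTR4} this yields the asserted equality. Since $F$ is a face of the convex hull of integer points, it is enough to argue on integer points of $F$, so I would assume for contradiction that some integer $(x,y)\in F$ has $x_{\bar{u_r}k}=0$ and $y_{zk}=0$ for every $z\in N\la\bar{u_r}\ra$. The first move is to note that $k\le i$ together with the monotonicity~\eqref{RESTR5} gives $x_{\bar{u_r}i}=0$, so the equation defining $F$ collapses to $x_{u_r i}+\sum_{j=1}^{i}y_{wj}+\sum_{v\in N^{\cup}}y_{vk}=2$, where I also use $w\in N^{\cap}\subseteq N\la u_r\ra\cap N\la\bar{u_r}\ra$.

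Next I would perform a short bookkeeping step on this equation (equivalently, read off what it says from the characterization (a)--(d) of the points of $F$). Recalling that $x_{u_r i}=1$ would imply, by~\eqref{RESTR4} and~\eqref{RESTR5}, that no vertex of $N\la u_r\ra$ is chosen in steps $1,\dots,i$, I would deduce in order: $y_{wk}=0$ (otherwise $w\in N\la\bar{u_r}\ra$ is chosen at step $k$, against the hypothesis); some vertex of $N^{\cup}$ is chosen at step $k$ (otherwise the equation forces $x_{u_r i}=1$, which prevents $w\in N\la u_r\ra$ from being chosen and makes $\sum_{j\le i}y_{wj}=0$); that vertex, call it $z$, lies in $N^{\cup}\setminus N\la\bar{u_r}\ra=N\la u_r\ra\setminus N\la\bar{u_r}\ra$; and finally $x_{u_r i}=0$ and $w$ is chosen at some step $j$ with $j\le i$ and $j\ne k$ (the alternative $x_{u_r i}=1$ being impossible since $z\in N\la u_r\ra$ is already chosen at $k\le i$). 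Hence everything reduces to a single configuration: a vertex $z\in N\la u_r\ra\setminus N\la\bar{u_r}\ra$ is chosen at step $k$ and $w$ is chosen at another step $j\le i$.

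The heart of the argument is to show this configuration is impossible, which is exactly where hypotheses (i), (ii), (iii) on $z$ are used. If $j<k$ (this covers $k=i$), then $k\ge 2$, so legality~\eqref{RESTR3} applied to $z$ at step $k$ gives some $u^*\in N\la z\ra$ with $x_{u^*,k-1}=1$ and $x_{u^*,k}=0$; the hypothesis $w\noantes z$, i.e.\ $N\la z\ra\subseteq N\la w\ra$ (from (i) when $k=i$, from the first part of (iii) when $1<k<i$), places $u^*\in N\la w\ra$, hence $w\in N\la u^*\ra$, so~\eqref{RESTR4} at $(u^*,j)$ with $y_{wj}=1$ forces $x_{u^*,j}=0$, contradicting $x_{u^*,k-1}=1$ through~\eqref{RESTR5} since $j\le k-1$. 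If $j>k$ (this covers $k=1$), then $j\ge 2$, so~\eqref{RESTR3} applied to $w$ at step $j$ gives some $u^*\in N\la w\ra$ with $x_{u^*,j-1}=1$ and $x_{u^*,j}=0$; the hypothesis $N\la w\ra\subseteq\{\bar{u_r}\}\cup N\la z\ra$ (from (ii) when $k=1$, from the second part of (iii) when $1<k<i$) gives either $u^*=\bar{u_r}$, impossible since $x_{\bar{u_r}k}=0$ and $k\le j-1$, or $u^*\in N\la z\ra$, in which case $z\in N\la u^*\ra$ is chosen at step $k$ and~\eqref{RESTR4} at $(u^*,k)$ forces $x_{u^*,k}=0$, again contradicting $x_{u^*,j-1}=1$ through~\eqref{RESTR5}. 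The case $j=k$ is ruled out by~\eqref{RESTR1}. Along the way I would use that the relation $a\in N\la b\ra$ is symmetric for distinct $a,b$ (with $a\in N\la a\ra$ only when $a\in C$, which settles the degenerate case $u^*=z$).

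The step I expect to be the main obstacle is this last case analysis. The hypotheses are given as nesting relations among $N\la z\ra$, $N\la w\ra$ and the singleton $\{\bar{u_r}\}$, and to exploit them one has to keep very precise track of which vertex is footprinted at which step and in whose neighborhood it lies, so that a nesting turns into an actual violation of~\eqref{RESTR4} or of the monotonicity~\eqref{RESTR5}; by comparison, the bookkeeping in the first two paragraphs that isolates the single surviving configuration is routine.
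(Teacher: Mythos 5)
Your proposal is correct and follows essentially the same route as the paper's proof: assume by contradiction that $x_{\bar{u_r}k}+\sum_{z\in N\la\bar{u_r}\ra}y_{zk}=0$, use the face equation to reduce to the unique configuration where some $v\in N\la u_r\ra\setminus N\la\bar{u_r}\ra$ is chosen at step $k$ and $w$ at some step $j\neq k$, and then contradict the hypotheses in the two cases $j<k$ and $j>k$. The only difference is that you spell out explicitly, via \eqref{RESTR3}--\eqref{RESTR5}, the step the paper compresses into ``it follows that $w\antes v$ (if $j<k$) or $N\la w\ra\setminus(\{\bar{u_r}\}\cup N\la v\ra)\neq\emptyset$ (if $j>k$)''.
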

\begin{proof}
Let $(x,y)\in F$.
Suppose by contradiction that $x_{\bar{u_r} k} + \sum_{z \in N\la \bar{u_r}\ra} y_{zk} = 0$. Then $x_{\bar{u_r} i}=0$ (due to $x_{\bar{u_r} k}=0)$ and $y_{wk}=0$ (due to $y_{zk}=0$ for all $z\in N\la \bar{u_r}\ra)$.
Therefore, as $(x,y) \in F$,
$$x_{u_r i} + \sum_{\substack{j=1\\j \neq k}}^i y_{wj} + \sum_{v \in N\la u_r\ra \setminus N\la \bar{u_r}\ra} y_{vk} = 2.$$
If $x_{u_r i}=1$, both summations vanish which is absurd. Then, $x_{u_r i}=0$,
some vertex $v \in N\la u_r\ra \setminus N\la \bar{u_r}\ra$ is chosen at step $k$ and $w$ is chosen at step $j\in \{1,\ldots,i\} \setminus \{k\}$. Since $x_{\bar{u_r} k}=0$, it follows that $w\antes v$ (if $j<k$) or $N\la w\ra \setminus (\{\bar{u_r}\}\cup N\la v\ra)\neq \emptyset$ (if $j>k$). In any case, we get a contradition. 
\end{proof}	

Consider that $k=i$. Assume that H2 does not hold, i.e.~for some $r$, every $v_r \in N\la u_r\ra \setminus N\la \bar{u_r}\ra$ satisfies
$w \noantes \bar{u_r}$. Claim \ref{claim}(i) implies that $F$ is not a facet.

If $k=1$ and H4 does not hold, i.e.~for some $r$, every $z_r \in N\la u_r\ra \setminus N\la \bar{u_r}\ra$ satisfies $N\la w\ra \subset \{\bar{u_r}\}\cup N\la z_r\ra$, we obtain the same conclusion by Claim \ref{claim}(ii).

Consider now that $k<i$. If H3 does not hold, for some $r$ there is $v \in N\la u_r\ra \setminus N\la \bar{u_r}\ra$ such that
$w \noantes v$ and for all $x^v \in (N\la u_r\ra \setminus N\la \bar{u_r}\ra) \setminus \{v\}$, $x^v \noantes v$.
Here, any solution that chooses $v$ in step $i$ does not belong to $F$, implying that $y_{vi}=0$.

Finally, consider that $1<k<i$. Suppose that H5 is not valid, i.e.~for all $v^* \in N^{\cup} \setminus \{w\}$, $w \noantes v^*$ or
$N\la w\ra \subset N\la v^* \ra \cup \{u_1,u_2\}$. In the first case, i.e.~$w \noantes v^*$, the stronger inequality
$x_{u_1 i} + x_{u_2 i} + \sum_{j=1}^{k-1} 2 y_{wj} + \sum_{j=k}^i y_{wj} + \sum_{v \in N^{\cup}} y_{vk} \leq 2$ is valid.
In the second case, we claim that every $(x,y)\in F$ satisfies at equality the Type II inequality
\begin{equation} \label{eq:auxk=i}
x_{u_1 k} + x_{u_2 k} + \sum_{j=1}^k y_{wj} + \sum_{v\in N^\cup} y_{vk} \leq 2
\end{equation}
In other words, we claim that $(x_{u_1 k}- x_{u_1 i}) + (x_{u_2 k}- x_{u_2 i}) - \sum_{j=k+1}^i y_{wj}=0$. Actually, by the validity of \eqref{eq:auxk=i}, it is enough to prove that $(x_{u_1 k}- x_{u_1 i}) + (x_{u_2 k}- x_{u_2 i}) - \sum_{j=k+1}^i y_{wj}\geq 0$. If $\sum_{j=k+1}^i y_{wj}= 0$, we are done because $x_{uk}\geq x_{ui}$ for all $u\in V$. Otherwise, $w$ was chosen in some step $j\in \{k+1,\ldots,i\}$. So, $x_{u_1 i}=x_{u_2 i}=0$. As $(x,y)\in F$, a vertex $v\in N^\cup\setminus \{w\}$ must be chosen at step $k$, implying $x_{u_1 k} +x_{u_2 k}\leq 1$. Since $N\la w\ra \subset N\la v\ra \cup \{u_1,u_2\}$ by assumption, either $u_1$ or $u_2$ was not footprinted at step $j$, that is, $x_{u_1 j} + x_{u_2 j}\geq 1$, or still $x_{u_1 k} +x_{u_2 k}\geq 1$. It follows that $(x_{u_1 k}- x_{u_1 i}) + (x_{u_2 k}- x_{u_2 i})=1$, as desired.

To complete the converse, suppose that H6 does not hold. On the one hand, $N\la w\ra \subset \{\bar{u_r}\} \cup N\la \tilde z_r \ra$ for all $\tilde z_r\in N\la u_r\ra \setminus N\la \bar{u_r}\ra$. On the other hand, either $w\noantes \tilde v_r$ for all $\tilde v_r\in N\la u_r\ra\setminus N\la \bar{u_r}\ra$ or $w\noantes \tilde v'_r$ for all $\tilde v'_r\in N\la \bar{u_r}\ra\setminus \{w\}$.
By Claim~\ref{claim}(iii), it remains to prove the second case, i.e.~$w\noantes \tilde v'_r$ for all $\tilde v'_r\in N\la \bar{u_r}\ra\setminus \{w\}$. This latter condition ensures that the $(k,k,\{\bar{u_r}\},N\la \bar{u_r}\ra \setminus \{w\},\{w\},1,k)$-inequality is valid
(see H2 of Theorem~\ref{VERYGENERALINEQ}): $x_{\bar{u_r} k} + \sum_{v\in N\la \bar{u_r}\ra} y_{vk} + \sum_{j=1}^{k-1} y_{wj} \leq 1$. We show that it is satisfied at equality by all points in $F$. Suppose by contradiction that $x_{\bar{u_r} k} + \sum_{v\in N\la \bar{u_r}\ra} y_{vk} + \sum_{j=1}^{k-1} y_{wj} =0$. As $(x,y)\in F$, some vertex $\tilde z_r \in N\la u_r\ra \setminus N\la \bar{u_r}\ra$ is chosen at step $k$ and $w$ is chosen at step $j\in \{k+1,\ldots,i\}$. Since $x_{\bar{u_r} k}=0$, it follows that $N\la w\ra \setminus (\{\bar{u_r}\}\cup N\la \tilde z_r\ra)\neq \emptyset$: a contradiction.
Therefore, $F$ is not a facet.
\end{proof}

\begin{cor} \label{FAM2CLUTTER}
Type II inequalities define facets if $G;C$ is a clutter and $k=i$, or $G;C$ is a strong clutter.
\end{cor}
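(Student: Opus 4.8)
The plan is to derive Corollary~\ref{FAM2CLUTTER} directly from Proposition~\ref{FAM2}, which already supplies necessary and sufficient conditions for a Type II inequality to be facet-defining in each of the three regimes $k=i$, $k=1$, and $1<k<i$; it remains only to check that the relevant hypotheses hold automatically under the stated assumptions. Two elementary facts will be used throughout. First, in a clutter $u\antes v$ holds for every pair of distinct vertices, while in a strong clutter one moreover has $|N\la u\ra\setminus N\la v\ra|\geq 2$; and by Lemma~\ref{LEMITANV2}(ii), $|N\la v\ra|\geq 2$ for all $v$ when $G;C$ is a clutter. Second, since $w\in N^{\cap}\subset N\la\bar{u_r}\ra$ for $r\in\{1,2\}$, every vertex drawn from $N\la u_r\ra\setminus N\la\bar{u_r}\ra$ is automatically distinct from $w$; moreover $\bar{u_r}\antes u_r$ (clutter) guarantees that this set is non-empty.

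Consider first $G;C$ a clutter and $k=i$, where we must check H1 and H2. Hypothesis H1 is immediate, since $v\antes w$ and $w\antes v$ hold for all distinct vertices. For H2, fix $r\in\{1,2\}$ and pick any $v_r\in N\la u_r\ra\setminus N\la\bar{u_r}\ra$; then $v_r\neq w$, hence $w\antes v_r$ by the clutter property, as required. Thus the inequality is facet-defining.

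Now let $G;C$ be a strong clutter; being a clutter, the case $k=i$ is already covered, so we treat $k=1$ and $1<k<i$. In both, H1 holds as above, and H3 holds vacuously: any $v\in N^{\cup}\setminus N^{\cap}$ satisfies $v\neq w$, so $w\antes v$ by the clutter property and the premise ``$w\noantes v$'' never occurs. For $k=1$ it remains to verify H4: choose $z_r\in N\la u_r\ra\setminus N\la\bar{u_r}\ra$; then $|N\la w\ra\setminus N\la z_r\ra|\geq 2$ by the strong-clutter inequality (and $w\neq z_r$), so deleting the single vertex $\bar{u_r}$ still leaves $N\la w\ra\setminus(\{\bar{u_r}\}\cup N\la z_r\ra)\neq\emptyset$. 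For $1<k<i$ we additionally need H5 and H6. For H6, taking $\tilde z_r\in N\la u_r\ra\setminus N\la\bar{u_r}\ra$ and repeating the previous computation verifies its second alternative for each $r$. For H5 we split into two subcases and invoke Remark~\ref{rem:GENERALINEQ3}(3): if $N^{\cap}\neq\{w\}$, choose $v^*\in N^{\cap}\setminus\{w\}$, so that the geometric condition in H5 reduces to $v^*\antes w$, which together with $w\antes v^*$ holds by the clutter property; if $N^{\cap}=\{w\}$, choose $v^*\in N\la u_r\ra\setminus N\la\bar{u_r}\ra$ for some $r$, so that the condition becomes $N\la w\ra\setminus(\{\bar{u_r}\}\cup N\la v^*\ra)\neq\emptyset$, which follows from $|N\la w\ra\setminus N\la v^*\ra|\geq 2$ exactly as before, while $w\antes v^*$ again holds by the clutter property.

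The bulk of the argument is just bookkeeping of which hypotheses of Proposition~\ref{FAM2} are active in each regime; the only genuinely delicate point is H5 in the range $1<k<i$, where one must first decide whether $N^{\cap}$ is a singleton and then apply Remark~\ref{rem:GENERALINEQ3}(3) to reduce to the appropriate elementary condition. It is precisely in H5 (and in H4 when $k=1$, and the second alternative of H6) that the gap ``$\geq 2$'' of a strong clutter — rather than the mere non-emptiness guaranteed by clutterness — is what makes the verification go through.
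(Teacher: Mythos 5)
Your proposal is correct and follows exactly the route the paper intends: the corollary is an immediate consequence of Proposition~\ref{FAM2}, obtained by checking that the clutter property yields H1--H2 when $k=i$, and that the strong clutter gap $|N\la w\ra\setminus N\la z\ra|\geq 2$ (surviving the deletion of one vertex) yields H3--H6, with H5 handled via Remark~\ref{rem:GENERALINEQ3}(3). Your hypothesis-by-hypothesis verification, including the nonemptiness of $N\la u_r\ra\setminus N\la \bar{u_r}\ra$ from $\bar{u_r}\antes u_r$, is accurate and matches the paper's (unwritten) argument.
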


The following table summarizes the facet-defining results for constraints of $F_1$ and the new families of valid inequalities,
and mentions if a stronger inequality was found. A dagger ``$^\dagger$'' indicates whether no extra conditions are needed for the inequality
to define a facet:

\begin{center}
\begin{tabular}{|c|c|c|c|}
\hline
    & \multicolumn{2}{c|}{¿Is $G;C$ a clutter?} & \\
Constraint & No & Yes & Stronger \\
\hline
Non-neg.          & \multicolumn{2}{c|}{Prop.~\ref{EASYFACETS} (i)$^\dagger$, (ii)$^\dagger$}               & \\
\eqref{RESTR1}    & Prop.~\ref{RESTR1STRONGER}         & Cor.~\ref{RESTR1STRONGERCLUTTER}$^\dagger$ & \eqref{RESTR1STRONGER1}, \eqref{RESTR1STRONGER2}, \eqref{RESTR1STRONGER3} \\
\eqref{RESTR2}    & \multicolumn{2}{c|}{not a facet}                                              & \eqref{NOVA1}\\ 
\eqref{RESTR3}    & Prop.~\ref{GENERALINEQ1}         & Cor.~\ref{GENERALINEQ1CLUTTER}$^\dagger$   & \eqref{NOVA0} \\
\eqref{RESTR4}    & Prop.~\ref{Prop-RESTR4}          & Cor.~\ref{Prop-RESTR4CLUTTER}$^\dagger$    & \eqref{NOVA1} \\
\eqref{RESTR5}    & Prop.~\ref{EASYFACETS} (iii)     & Cor.~\ref{EASYFACETSCLUTTER}$^\dagger$     & \\
Type I            & Prop.~\ref{GENERALINEQ2}         & Cor.~\ref{FAM1CLUTTER}$^\dagger$           & \eqref{NOVA1} \\
Type II($k=i$)    & Prop.~\ref{FAM2}                 & Cor.~\ref{FAM2CLUTTER}$^\dagger$           & \eqref{SUPERNOVA} \\
Type II($k<i$)    & Prop.~\ref{FAM2}                 & Cor.~\ref{FAM2CLUTTER}(only strong)$^\dagger$           & \eqref{SUPERNOVA} \\
\hline
\end{tabular}
\end{center}

\subsection{Twin vertices and facet generation}

Here, we explore what happens if twin vertices are introduced.
Let $G$ be a connected graph with at least 3 vertices. Assume $m\geq 3$.
Given a valid inequality $\pi^x x + \pi^y y\leq \pi_0$ for $P$ and a vertex $u\in V$, let $\pi^x_u$ denote the subvector of $\pi^x$ related to $u$, i.e. $\pi^x_u=(\pi^x_{u1},\ldots,\pi^x_{um})$. Similarly, define $\pi^y_u$, $x_u$ and $y_u$ with respect to $\pi^y$, $x$ and $y$, respectively.

\begin{lemma} \label{lemma:twins}
Let $u, u'$ be twin vertices of $G;C$ and $\pi^x x + \pi^y y\leq \pi_0$ be facet-defining for $P$. Then:\\
(i)~$\pi^y_u=\pi^y_{u'}$, if the inequality is different from $-y_{uj}\leq 0$ and $-y_{u'j}\leq 0$, for all $j=1,\ldots,m$;\\
(ii)~$\pi^x_{u'} x_{u} + \pi^x_{u} x_{u'} + \pi^x_* x_* + \pi^y y \leq \pi_0$ is facet-defining for $P$, where $\pi^x_*$ and $x_*$ are vectors $\pi^x$ and $x$ without components $(\pi^x_u,\pi^x_{u'})$ and $(x_u,x_{u'})$, respectively;\\
(iii)~if $\pi^x_{uk}\geq 0$ and $\pi^x_{u'k}\geq 0$ for all $k=1,\ldots,j$, then $\pi^x_{uj}=0$ or $\pi^x_{u'j}=0$, for all $j=1,\ldots,m$.
\end{lemma}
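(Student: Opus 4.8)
The plan is to prove the three items in order, using the earlier facet-characterization machinery (in particular Proposition \ref{prop:twins} and the technique of lifting a valid equality of a face against the facet-defining inequality). Throughout I write $N\la u\ra = N\la u'\ra$ since $u,u'$ are twins.

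For part (i), I would argue exactly as in Proposition \ref{prop:twins}(iii). Let $F$ be the facet. Since the inequality is not $-y_{uj}\le 0$ nor $-y_{u'j}\le 0$, I can pick points $(x^1,y^1),(x^2,y^2)\in F$ with $y^1_{uj}=1$ and $y^2_{u'j}=1$ respectively. The key observation is that swapping the roles of $u$ and $u'$ in any feasible solution produces another feasible solution (they have identical neighborhoods, so every constraint \eqref{RESTR1}--\eqref{RESTR5} is preserved), and it keeps the solution in $P$. So from $(x^1,y^1)$ I build $(x^1,\tilde y^1)$ by interchanging the $u$- and $u'$-rows of $y^1$; this is again in $P$ but need not be in $F$, hence $\pi^x x^1 + \pi^y \tilde y^1 \le \pi_0 = \pi^x x^1 + \pi^y y^1$, which after cancelling the common part gives $(\pi^y_u-\pi^y_{u'})\cdot(\text{something})\le 0$. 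Running the same argument starting from a point with $y_{u'j}=1$ and interchanging gives the reverse inequality. Doing this for each coordinate $j$ (choosing the face point so that only the $j$-th entry differs after the swap) yields $\pi^y_{uj}=\pi^y_{u'j}$ for all $j$, i.e. $\pi^y_u=\pi^y_{u'}$.

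For part (ii), the idea is that the map $\Phi$ on $\mathbb{R}^{2nm}$ that interchanges the blocks $x_u\leftrightarrow x_{u'}$ (and fixes everything else) is an affine automorphism of $P$: indeed $\Phi$ sends feasible integer points to feasible integer points by the twin symmetry, and it is an involution, hence $\Phi(P)=P$. A facet of $P$ is therefore mapped by $\Phi$ to a facet of $P$, and the inequality defining $\Phi(F)$ is obtained from $\pi^x x + \pi^y y\le\pi_0$ by substituting $x_u\mapsto x_{u'}$ and $x_{u'}\mapsto x_u$ — which is precisely $\pi^x_{u'}x_u + \pi^x_u x_{u'} + \pi^x_* x_* + \pi^y y\le\pi_0$ (note the $y$-part is untouched by $\Phi$). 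Since $\dim\Phi(F)=\dim F = \dim P - 1$, this new inequality is facet-defining.

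For part (iii), I would combine (ii) with Proposition \ref{prop:twins}(ii). Fix $j$ and suppose, for contradiction, that both $\pi^x_{uj}>0$ and $\pi^x_{u'j}>0$ (the case of a non-facet inequality $-y_{\cdot j}\le 0$ has all $\pi^x=0$, so there is nothing to prove; otherwise the inequality is genuinely facet-defining and by hypothesis $\pi^x_{uk},\pi^x_{u'k}\ge 0$ for $k\le j$). Apply Proposition \ref{prop:twins}(ii) with $S=\{u,u'\}$ and $j(u)=j(u')=j$: this gives $\sum_{k=1}^{j}\pi^x_{uk}+\sum_{k=1}^{j}\pi^x_{u'k}\le\pi_0$. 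Now I need a feasible point of $F$ that simultaneously keeps $x_{u,1..j}=x_{u',1..j}=1$ while forcing the rest of the inequality to its maximum — the natural candidate is to take a facet point, footprint/choose things only from step $j+1$ onward and only outside $N\la u\ra$, so that the $x_u,x_{u'}$ blocks stay all-ones on the first $j$ coordinates. Since $N\la u\ra = N\la u'\ra$, a vertex footprinting $u$ footprints $u'$ as well, so among feasible points either both $x_{uk}=x_{u'k}=1$ or the row starts dropping; pairing this with the bound from Proposition \ref{prop:twins}(ii) and with the fact (from (ii)) that the two blocks play symmetric roles, I get that on the face we must have $x_{uk}+x_{u'k}$ effectively behaving as a single variable, forcing $\pi^x_{uj}\cdot\pi^x_{u'j}=0$.

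The main obstacle I anticipate is part (iii): unlike (i) and (ii), it is not a pure symmetry argument, and I will need to produce an explicit point of the facet $F$ on which the hypothetical equation $\pi^x_{uj}x_{uj}+\pi^x_{u'j}x_{u'j}=\text{const}$ can be tested — essentially I must show that $F$ contains a point with $x_{uj}=1,x_{u'j}=0$ and another with $x_{uj}=0,x_{u'j}=1$ whenever both coefficients are positive, which would contradict facetness because on such a facet the two quantities would have to be linked by an equation independent of the defining one. Getting that pair of points to lie in $F$ (not merely in $P$) is where the real work is, and it will probably use the operations $\fp$ and $\ch$ together with the assumption $\pi^x_{uk},\pi^x_{u'k}\ge 0$ for $k\le j$ to argue that pushing those coordinates to $1$ cannot decrease the left-hand side below $\pi_0$.
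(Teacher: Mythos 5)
Your parts (i) and (ii) are correct and essentially coincide with the paper's proof. In (i), note that no special choice of face point is needed: any $(x^1,y^1)\in F$ with $y^1_{u'j}=1$ automatically has $y^1_{u'k}=0$ for $k\neq j$ (by \eqref{RESTR2}) and $y^1_{uk}=0$ for all $k$ (twins can never both be chosen in a legal sequence, because a chooser of one footprints the neighborhood of both), so the swap isolates exactly the coordinate $j$. Your automorphism phrasing of (ii) is a clean repackaging of the paper's argument, which verifies validity of the swapped inequality and then maps $2mn$ affinely independent points of $F$ through the $x_u\leftrightarrow x_{u'}$ exchange.

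Part (iii) is where there is a genuine gap, and the route you sketch would not close it. The structural claim ``since a vertex footprinting $u$ footprints $u'$ as well, among feasible points either both $x_{uk}=x_{u'k}=1$ or the row starts dropping'' is false in $P=P_1$: the $x$-variables may be set to $0$ with no footprinter at all (this is exactly what \eqref{RESTR6}--\eqref{RESTR7} of $F_2$ forbid), so for instance $\fp(u',j;\initial)$ lies in $P$ with $x_{uj}=1$ and $x_{u'j}=0$; the two rows are not tied, and nothing forces $x_u+x_{u'}$ to ``behave as a single variable'' on $F$. Moreover, your stated goal, namely producing points of $F$ with both patterns $x_{uj}=1,x_{u'j}=0$ and $x_{uj}=0,x_{u'j}=1$, is neither achievable by your construction (you cannot prescribe that an arbitrary facet contains a point choosing nothing in steps $1,\ldots,j$ outside $N\la u\ra$) nor would its success contradict facetness; and the bound from Proposition~\ref{prop:twins}(ii) with $S=\{u,u'\}$ plays no role. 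The missing ideas are the following two steps from the paper's proof. First, since $F$ is a facet it cannot be contained in the hyperplane $x_{uj}=x_{u'j}$, so there exists an integer point $(x^1,y^1)\in F$ with, say, $x^1_{uj}=1$ and $x^1_{u'j}=0$. Second, $x^1_{uj}=1$ together with \eqref{RESTR4}--\eqref{RESTR5} means that no vertex of $N\la u\ra=N\la u'\ra$ is chosen in steps $1,\ldots,j$, so raising $x_{u'k}$ to $1$ for all $k\leq j$ yields a point $(x^2,y^1)\in P$; validity there, combined with $\pi^x x^1+\pi^y y^1=\pi_0$, gives
\begin{equation*}
0\;\geq\;\pi^x(x^2-x^1)\;=\;\sum_{k=1}^{j-1}\pi^x_{u'k}\bigl(1-x^1_{u'k}\bigr)+\pi^x_{u'j}\;\geq\;\pi^x_{u'j},
\end{equation*}
using $\pi^x_{u'k}\geq 0$ for $k<j$, whence $\pi^x_{u'j}=0$ (the symmetric case yields $\pi^x_{uj}=0$). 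Note the comparison is against \emph{validity} at a modified point of $P$, not a contradiction with facetness; your closing remark about pushing coordinates to $1$ under the sign hypothesis is the right germ, but without the existence of a facet point with $x_{uj}\neq x_{u'j}$ and without aiming the estimate at validity, the argument does not go through.
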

\begin{proof} Let $F=\{(x,y)\in P:\pi^x x + \pi^y y = \pi_0\}$.\\ 
(i)~Since the inequality is different from $-y_{u'j}\leq 0$, there is a point $(x^1,y^1) \in F$ such that $y^1_{u'j}=1$.
Then, $y^1_{uk}=0$ for all $k=1,\ldots,m$.
Let $y^2$ be the vector obtained from $y^1$ by swapping components from $u$ and $u'$: $y^2_{uk} = y^1_{u'k}$ and
$y^2_{u'k} = y^1_{uk}$ for $k = 1,\ldots,m$.
Note that $(x^1,y^2)\in P$ because $N\la u\ra=N\la u'\ra$. Since the inequality is valid, $\pi^x x^1 + \pi^y y^2\leq \pi_0$.
Also, $\pi^x x^1 + \pi^y y^1= \pi_0$, which implies $\pi^y(y^2-y^1)= \pi^y_{uj} - \pi^y_{u'j} \leq 0$.
A similar reasoning swapping the roles of $u$ and $u'$ leads to $\pi^y_{u'j}-\pi^y_{uj}\leq 0$. Therefore, $\pi^y_{uj}=\pi^y_{u'j}$.\\
(ii)~Let $(x^1_u,x^1_{u'},x^1_*,y^1)\in P$. Observe that the point $(x^2,y^1)$ such that $x^2_u=x^1_{u'}$, $x^2_{u'}=x^1_u$ and $x^2_*=x^1_*$ is also in $P$.
As the inequality is valid, $\pi^x x^2 + \pi^y y^1\leq \pi_0$. It follows that $\pi^x_{u'} x^1_{u} + \pi^x_{u} x^1_{u'} + \pi^x_* x^1_* + \pi^y y^1 = \pi^x_{u'} x^2_{u'} + \pi^x_{u} x^2_{u} + \pi^x_* x^2_* + \pi^y y^1 \leq  \pi_0$. Hence, $\pi^x_{u'} x_{u} + \pi^x_{u} x_{u'} + \pi^x_* x_* + \pi^y y \leq \pi_0$ is valid too. Now, let $F'$ be the face defined by the latter inequality.
Pick $2mn$ affinely independent points in $F$. By interchanging the values of the components $x_u$ and $x_{u'}$ as before, we get $2mn$ affinely independent points in $F'$. Therefore, $F'$ is also a facet of $P$.\\
(iii) As $F$ is a facet, $F \not\subset\{x\in P:x_{uj}=x_{u'j}\}$ and, thus, there is an integer point $(x^1,y^1)\in F$ such that $x^1_{uj}\neq x^1_{u'j}$. Consider the case $x^1_{uj}=1$ and $x^1_{u'j}=0$. Since $u$ is not footprinted at step $j$, we can obtain another point in $P$ where $u'$ is also not footprinted at step $j$. Precisely, $(x^2,y^1)\in P$ where $x^2_{u'k}=1$, for all $k=1,\ldots,j$, 
and the other components of $x^2$ coincide with those of $x^1$. Therefore, $\pi^x x^2 + \pi^y y^1\leq \pi_0=\pi^x x^1 + \pi^y y^1$, which implies $0\geq \pi^x x^2 - \pi^x x^1 = \sum_{k=1}^{j}\pi^x_{u'k}(x^2_{u'k}-x^1_{u'k}) = \sum_{k=1}^{j-1}\pi^x_{u'k}(1-x^1_{u'k}) + \pi^x_{u'j}\geq \pi^x_{u'j}$
since $\pi^x_{u'k} \geq 0$ for $k = 1,\ldots,j-1$. By $\pi^x_{u'j} \geq 0$, we get $\pi^x_{u'j}=0$.
Analogously, the case $x^1_{uj}=0$ and $x^1_{u'j}=1$ leads to $\pi^x_{uj}=0$.
\end{proof}

\begin{thm} \label{prop:lift-twins}
Let $G';C'$ be an instance where $u, u'$ are twin vertices, $G = G' - u'$ (i.e.$\!$ the graph that results from
deleting $u'$ to $G'$) and $C = C'\setminus\{u'\}$. Let $P$ and $P'$ be the polytopes corresponding to $G;C$ and $G';C'$ respectively (using $m$ as upper bound). If the inequality $\pi^x x + \pi^y y\leq \pi_0$ is valid 
for $P$ and $\pi^x_u\geq 0$, then
\begin{align}
\pi^x_* x_* + \pi^x_{u} x_{u}\; + \pi^y y + \pi^y_u y_{u'}& \leq  \pi_0, \label{lift1}\\
\pi^x_* x_* + \pi^x_{u} x_{u'} + \pi^y y + \pi^y_u y_{u'}& \leq  \pi_0 \label{lift2}
\end{align}
are valid for $P'$, where $\pi^x_*$ and $x_*$ are vectors $\pi^x$ and $x$ without components $\pi^x_u$ and $x_u$, respectively.
Moreover, if $\pi^x x + \pi^y y\leq \pi_0$ is facet-defining for $P$ and $\pi_0\neq 0$, then \eqref{lift1}-\eqref{lift2} are facet-defining for $P'$.
\end{thm}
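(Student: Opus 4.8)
The plan is to move points between $P$ and $P'$ in both directions. I would first record three facts about the twin pair $u,u'$ of $G';C'$: (a) for every $v\in V(G)$ one has $u'\in N\la v\ra \iff u\in N\la v\ra \iff v\in N\la u\ra$, and $N_G\la v\ra = N_{G'}\la v\ra\setminus\{u'\}$; (b) in any integer point of $P'$ at most one of $u,u'$ is chosen, a direct consequence of \eqref{RESTR1} together with \eqref{RESTR3}--\eqref{RESTR5}; (c) if $u$ or $u'$ is chosen at a step $i$, then \eqref{RESTR4} forces $x_{wi}=0$ for every $w\in N\la u\ra$.

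For validity I would use the (non-linear) \emph{contraction} that sends an integer point $(x',y')$ of $P'$ to the point $(x,y)$ over $V(G)$ with $y_u=y'_u+y'_{u'}$ and $x_u=\max(x'_u,x'_{u'})$ componentwise, all remaining coordinates copied; by (b) these are again $0/1$ vectors. The crucial step is to check that $(x,y)\in P$: constraints \eqref{RESTR1}, \eqref{RESTR2}, \eqref{RESTR5} are immediate, and \eqref{RESTR3}, \eqref{RESTR4} for $u$ and for the vertices having $u$ in their neighbourhood reduce, after a short case split on whether $u\sim u'$, to the corresponding constraints of $G'$ --- here fact (c) is exactly what stops the componentwise maximum from costing anything, since whenever the $u$- and $u'$-terms could matter they are pinned to $0$ at the step in question (and the entries of a monotone $x'$ are nonnegative differences). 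Granting $(x,y)\in P$, a short computation shows that the left-hand side of \eqref{lift1} at $(x',y')$ equals the value of $\pi^x x+\pi^y y$ at $(x,y)$ diminished by $\pi^x_u(x_u-x'_u)$, which is $\le\pi_0$ because $\pi^x_u\ge 0$ and $\pi^x x+\pi^y y\le\pi_0$ is valid on $P$. Hence \eqref{lift1} is valid for $P'$; \eqref{lift2} follows since interchanging the coordinates $x_u$ and $x_{u'}$ of a point of $P'$ again yields a point of $P'$ (as $u,u'$ are twins, cf.\ the proof of Lemma~\ref{lemma:twins}(ii)) and turns \eqref{lift2} into \eqref{lift1}.

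For the facet statement, assume $\pi^x x+\pi^y y\le\pi_0$ is facet-defining for $P$ and $\pi_0\ne 0$; then $\pi_0>0$ by Proposition~\ref{prop:twins}(i). Let $\lambda$ be the \emph{canonical extension} $(x,y)\mapsto(\tilde x,\tilde y)$ with $\tilde x_{u'}=x_u$, $\tilde y_{u'}=\mathbf 0$ and all else copied; it is linear, maps $P$ into $P'$, and --- since the left-hand side of \eqref{lift1} does not involve $x_{u'}$ while $\tilde y_{u'}=\mathbf 0$ --- maps the facet $F$ of $P$ into $F':=\{(x,y)\in P':\eqref{lift1}\text{ holds with equality}\}$. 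Now let $\mu^x x+\mu^y y=\mu_0$ be an equation valid on $F'$; I must show it is a scalar multiple of \eqref{lift1}. Restricting it to $\lambda(F)$ and using that $F$ is a facet of $P$ produces $\beta\in\mathbb R$ with $\mu^x_v=\beta\pi^x_v$ for $v\notin\{u,u'\}$, $\mu^x_u+\mu^x_{u'}=\beta\pi^x_u$, $\mu^y_v=\beta\pi^y_v$ for $v\in V(G)$, and $\mu_0=\beta\pi_0$; it remains to split $\beta\pi^x_u$ and to compute $\mu^y_{u'}$. For each step $j$ there is a binary point of $F$ in which $u$ is chosen at step $j$ --- otherwise $F$ would lie in the facet defined by $y_{uj}\ge 0$ (Proposition~\ref{EASYFACETS}(i)), forcing \eqref{lift1} to be a positive multiple of $-y_{uj}\le 0$ and so $\pi_0=0$; re-routing that choice from $u$ to $u'$ produces a point of $F'$ differing from its canonical extension only in the coordinates $y_{uj}$ and $y_{u'j}$, which yields $\mu^y_{u'j}=\mu^y_{uj}=\beta\pi^y_{uj}$. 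Likewise, for each $j$ there is a binary point of $F$ with $x_{uj}=1$ and $x_{u,j+1}=0$ (convention $x_{u,m+1}=0$): by Proposition~\ref{EASYFACETS}(ii) when $j=m$, and when $j<m$ because otherwise every vertex of $F$ would satisfy $x_{uj}=x_{u,j+1}$, which (whether or not that face of $P$ is a facet) either makes \eqref{lift1} a multiple of \eqref{RESTR5} and so $\pi_0=0$, or contradicts that $F$ is a facet; flipping the single coordinate $\tilde x_{u'j}$ of its canonical extension from $1$ to $0$ keeps the point in $P'$ --- monotonicity of $\tilde x_{u'}$ is preserved since $\tilde x_{u',j-1}=1$ and $\tilde x_{u',j+1}=0$, \eqref{RESTR4} only loosens, and the one \eqref{RESTR3} that weakens reduces to \eqref{RESTR3} for the corresponding vertex of $G$ --- hence it stays in $F'$, and comparison with the canonical extension gives $\mu^x_{u'j}=0$, so $\mu^x_{uj}=\beta\pi^x_{uj}$. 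Thus $\mu$ equals $\beta$ times the coefficient vector of \eqref{lift1}, so \eqref{lift1} defines a facet of $P'$; \eqref{lift2} then follows from Lemma~\ref{lemma:twins}(ii) applied to the twins $u,u'$ of $G';C'$.

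The main obstacle is the validity step. The naive contraction $x_u:=x'_u$ (plainly deleting $u'$) does not work: a step at which $u$ was chosen only in order to footprint $u'$ becomes illegal once $u'$ is removed, so \eqref{RESTR3} can fail; the choice $x_u:=\max(x'_u,x'_{u'})$ repairs this, but re-verifying \eqref{RESTR3}--\eqref{RESTR4} for $u$ and its ``co-neighbours'' is then the one place where one genuinely splits into the adjacent/non-adjacent twin cases and combines monotonicity of $x'$ with fact (c). By contrast, the facet part --- lifting $F$ through $\lambda$, re-routing a single choice to the twin, and flipping one $x_{u'}$ coordinate --- is routine, though it still requires checking at each modification that the new vector is feasible for $F_1$.
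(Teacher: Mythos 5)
Your argument is correct, and its validity half is essentially the paper's: contract a point of $P'$ by taking the componentwise maximum of $x_u,x_{u'}$ and merging $y_u,y_{u'}$ (legitimate because at most one of the twins can be chosen), check feasibility of the contracted point for $G;C$, and invoke $\pi^x_u\geq 0$; and \eqref{lift2} is obtained from \eqref{lift1} by swapping the $x_u,x_{u'}$ blocks, exactly as in Lemma~\ref{lemma:twins}(ii). Where you genuinely diverge is the facet-defining half. The paper argues directly: since $\pi_0\neq 0$ the origin is off the affine hull of $F$, so $F$ contains $2mn$ linearly independent binary points; it extracts invertible $m\times m$ submatrices of the $x_u$- and $y_u$-blocks, lifts the points to $P'$ in three ways (append $x_{u'}=\mathbf 0,\,y_{u'}=\mathbf 0$; append $x_{u'}=x_u$; re-route the choice of $u$ to $u'$), and shows the resulting $2m(n+1)\times 2m(n+1)$ block matrix $B$ is nonsingular, which exhibits enough linearly independent points on the face of \eqref{lift1}. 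You instead use the indirect characterization: assume an equation $\mu^x x+\mu^y y=\mu_0$ valid on the face, pull it back through the canonical extension $\lambda$ ($x_{u'}:=x_u$, $y_{u'}:=\mathbf 0$) to identify, via the facet $F$ of the full-dimensional $P$, all coefficients on the $G$-coordinates up to a scalar $\beta$, and then pin down $\mu^y_{u'}$ and $\mu^x_{u'}$ by re-routing a choice of $u$ to $u'$ and by dropping a single coordinate $x_{u'j}$ from $1$ to $0$. The point perturbations are the same three that populate the paper's matrix $B$, but your route replaces the submatrix/nonsingularity bookkeeping with existence claims (a binary point of $F$ with $y_{uj}=1$, and one with $x_{uj}=1>x_{u,j+1}$), which you correctly reduce to $\pi_0\neq 0$ via Proposition~\ref{EASYFACETS} and the maximality of facets (if $F$ were contained in one of those proper faces it would equal it, forcing $\pi_0=0$) --- the same place where the paper uses $\pi_0\neq 0$ to pass from affine to linear independence. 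Both arguments rely on $P'$ being full dimensional (Proposition~\ref{PROPDIM}); yours trades the explicit linear algebra for a handful of extra feasibility checks on the perturbed points, which you carry out correctly.
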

\begin{proof}
Let $\pi^y_*$ and $y_*$ denote vectors $\pi^y$ and $y$ without components $\pi^y_u$ and $y_u$, respectively.
Let $(x_u,x_{u'},x_{*},y_u,y_{u'},y_{*})\in P'$. Define $x^{\max}_{u}$ as the maximum between $x_u$ and $x_{u'}$, where the values are taken componentwise: $x^{\max}_{uj} = \max\{x_{uj},x_{u'j}\}$ for all $j$.
Since either $u$ or $u'$ can not be chosen simultaneously, we have $y_u=\bf 0$ or $y_{u'}=\bf 0$.
First, consider the case $y_{u'}=\bf 0$. Define $y \doteq (y_u,y_*)$.
It is easy to check that $(x^{\max}_u,x_{*},y)$ satisfies \eqref{RESTR1}-\eqref{RESTR5} so $(x^{\max}_u,x_{*},y) \in P$.
Therefore, $\pi^x_* x_* + \pi^x_{u} x^{\max}_{u} + \pi^y y \leq \pi_0$.
As $\pi^x_u\geq 0$ and $y_{u'}=\bf 0$, we get $\pi^x_* x_* + \pi^x_{u} x_{u} + \pi^y y + \pi^y_u y_{u'}\leq \pi^x_* x_* + \pi^x_{u} x^{\max}_{u} + \pi^y y \leq \pi_0$.
This shows that \eqref{lift1} is valid.
Similarly, it can be proven that \eqref{lift1} is valid in the case $y_u=\bf 0$ by defining $y \doteq (y_{u'},y_*)$.

Next, we prove that \eqref{lift1} is facet-defining.
Let $F = \{(x,y) \in P : \pi^x x + \pi^y y= \pi_0\}$. As $\pi_0\neq 0$, the origin is in $P\setminus F$, and so it
does not belong to the affine hull of $F$.
Consequently, there are $2mn$ linearly independent vectors in $F$, where $n=|V(G)|$.
Let $A=[X_u\: Y_u\: X_*\: Y_* ]$ be the square matrix where each row is one of these vectors, and the columns are grouped according to the components $x_u$, $y_u$, $x_*$ and $y_*$. Since $A$ is invertible, there is an $m\times m$ invertible submatrix $X^1_u$ of $X_u$. In addition, there is another $m\times m$ invertible submatrix $Y^2_u$ of $Y_u$.

Let $Y^1_u$, $X^1_*$ and $Y^1_*$ be the submatrices of $Y_u$, $X_*$, $Y_*$ using the same rows as $X^1_u$, and $X^2_u$,
$X^2_*$ and $Y^2_*$ be the submatrices of $X_u$, $X_*$, $Y_*$ using the same rows as $Y^2_u$.
Define the $2m(n+1)\times 2m(n+1)$ matrix
\[B=\begin{bmatrix}
X_u   & 0     & Y_u   & 0     & X_*   & Y_*\\
X^1_u & X^1_u & Y^1_u & 0     & X^1_* & Y^1_*\\
X^2_u & 0     & 0     & Y^2_u & X^2_* & Y^2_*\\
\end{bmatrix},\]
where each row has the form $(x_u,x_{u'},y_u,y_{u'},x_*,y_*)$.
We now prove that any row belongs to $P'$. Suppose w.l.o.g.~that $A$ is a $\{0,1\}$-matrix, then the rows correspond to legal sequences of $G;C$.
In the first group of $2mn$ rows, we take a point in $P$, do not choose $u'$ but footprint $u'$ since step 1. In the second group of $m$ rows, we do not choose $u'$ but footprint $u'$ exactly as $u$ was footprinted, which is possible since they are twins. In the last group of $m$ rows, we footprint $u'$ since step 1 and, in the case that $u$ was chosen in the starting point taken from $P$ at, say, step $j$, then choose $u'$ at $j$ and do not choose $u$.

Since the first group of $2mn$ rows is linearly independent, and $X^1_u$ and $Y^2_u$ are invertible, $B$ is invertible too. Moreover, each row $(x_u,x_{u'},y_u,y_{u'},x_*,y_*)$ is a point in the face defined by \eqref{lift1}.
Indeed, as $\pi^x_u x_u + \pi^y_u y_u + \pi^x_* x_* + \pi^y_* y_* = \pi_0$, we readily get
$\pi^x_u x_u + 0 x_{u'} + \pi^y_u y_u + \pi^y_u y_{u'} + \pi^x_* x_* + \pi^y_* y_* = \pi_0$. This means that each row-vector satisfies \eqref{lift1} at equality. Thus, we obtain $2m(n+1)$ linearly independent points in the face defined by \eqref{lift1}
implying that it is a facet of $P'$.

The results concerning~\eqref{lift2} are a direct consequence of those for \eqref{lift1} and Lemma~\ref{lemma:twins}(ii).
\end{proof}

\section{Computational experiments} \label{SSCOMPU}

This section is devoted to present computational experiments in order to know which formulation performs better
and if the addition of cuts coming from the previous polyhedral study is effective during the optimization.

The experiments have been carried out by running a pure Branch and Bound over several random instances.
A computer equipped with an Intel i5 CPU 2.67GHz, 4Gb of RAM, Ubuntu 16.04 operating system and IBM ILOG CPLEX 12.7
has been used. Each run has been performed on one thread of the CPU with a limit of two hours of time. 

Random instances are generated as follows. For given numbers $n \in \mathbb{Z}_+, p \in [0, 1]$, a graph is generated by
starting from the empty graph of $n$ vertices and adding edges with probability $p$. For example, if $p = 1$ then a $K_n$ is obtained.

For each combination $(n,p) \in \{(15,0.2)$, $(15,0.4)$, $(15,0.6)$, $(15,0.8)$, $(20,0.4)$, $(20,0.6)$, $(20,0.8)$, $(25,0.6)$,
$(25,0.8)$, $(30,0.8)\}$ and each $C \in \{ \emptyset, V, \frac{V}{2} \}$ (total GDP, classic GDP and a case of GGDP where $C$ contains
half of the vertices), 5 graphs are generated, thus giving a total of 150 instances. All of them are connected and twin free.
Combinations with a large amount of nodes and a low density (such as $n=20$ and $p=0.2$) were deliberately discarded since almost all
instances with those combinations can not be solved in the preset time limit. In particular, these instances present a high value of the
initial upper bound $m$.

\subsection{Initial heuristic}

In order to provide a reasonably good lower bound, an initial solution is constructed and injected to CPLEX at the beginning of the
optimization.

We implemented a greedy algorithm to construct it. Start from empty sets $S$ and $W$, and assign $(x,y) \leftarrow (\initial)$.
At each step $i$, pick a vertex $v \in V \setminus S$ such that the cardinal of $N\la v\ra \setminus W$ is minimum but not zero,
then add $v$ to $S$, $N\la v\ra$ to $W$ and do $(x,y) \leftarrow \ch(v,i;x,y)$.  Repeat until $W = V$.

The resulting solution $(x,y)$ represents a legal dominating sequence of maximal length and is feasible for the eight formulations.
In addition, $LB$ is set to the length of that sequence.

\subsection{Comparing formulations}

All the formulations are evaluated over the instances generated from the 7 combinations $(15,0.2)$, $(15,0.4)$, $(15,0.6)$, $(20,0.6)$,
$(20,0.8)$, $(25,0.8)$, $(30,0.8)$. Combinations $(20,0.4)$ and $(25,0.6)$ are not considered since they are too difficult for some
formulations. On the other hand, $(15,0.8)$ is too easy: any formulation requires a few seconds to solve them.

Table \ref{tab:1} reports the results. The first three columns give the combination $(n,p,C)$ and the fourth one
indicates the parameters being evaluated: ``\#i(\% relgap)'' refers to the number of solved instances and the average of the
percentage relative gap, ``Nodes'' and ``Time'' are the averages of the number of nodes evaluated by the B\&B and
the elapsed time (in seconds), respectively, over the solved instances. For $(n,p) \in \{(15,0.4)$, $(15,0.6)$, $(20,0.8)$, $(25,0.8)\}$,
``\#i(\% relgap)'' is not reported since all the formulations are able to solve the instances of the given combination.

The last four rows of the table summarize the results for all combinations, with best values in boldface.
In particular, the last row (``Time$^*$'') reports the average of the elapsed time over all 105 instances considered in this experiment, by
taking 7200 sec. for those that are not solved. This parameter can also be computed as follows:
$$ \textrm{Time}^* = \dfrac{\textrm{Time} \times \textrm{\#i} + 7200(105 - \textrm{\#i})}{105}.$$

As we can see from the table, $F_3$, $F_7$ and $F_8$ solve all the instances. In particular, $F_8$ explores less nodes than the others.
However, the linear relaxations of $F_3$ are easier to solve and, in consequence, the CPU time is smaller. According to these conclusions,
the best formulation is $F_3$.

We also measure the impact of individually adding the extra constraints to the initial formulation by means of the parameter Time$^*$.
Let $T_i$ be the value of Time$^*$ for $F_i$. We first note that by considering constraints \eqref{RESTR8}-\eqref{RESTR9} reduces
dramatically the overall CPU time: we compare $T_1 + T_2 + T_5 + T_6$ (those constraints are absent) with $T_3 + T_4 + T_7 + T_8$
(those constraints are present), which give 6258 and 1184 seconds respectively.
Constraints \eqref{RESTR10} also help to diminish the CPU time but the improvement is marginal:
$T_1 + T_2 + T_3 + T_4 = 3821$ and $T_5 + T_6 + T_7 + T_8 = 3621$.
On the other hand, constraints \eqref{RESTR6}-\eqref{RESTR7} make the optimizer behave slightly worse:
$T_1 + T_3 + T_5 + T_7 = 3674$ and $T_2 + T_4 + T_6 + T_8 = 3768$.

Finally, we compare the difficulty of the classic GDP vs.$\!$ total GDP with the data gathered in this experiment.
For a given $C \in \{ \emptyset, V, \frac{V}{2} \}$, let $T_C$ be the average of time over all the formulations and all instances
associated with that $C$.
We obtain $T_{\emptyset} = 2511$, $T_{V} = 2171$ and $T_{\frac{V}{2}} = 2760$ which suggests that total GDP is harder than the classic one.
A cause for this behavior could be the following: for a given graph $G$, the upper bound of $\grd(G;\emptyset)$ has one more unit
than the upper bound of $\grd(G;V)$, implying that the formulation (for the same graph) requires $2|V|$ additional variables in the total
domination case.

\begin{table}[t] \centering \footnotesize
\begin{tabular}{|@{\hspace{3pt}}c@{\hspace{3pt}}|@{\hspace{3pt}}c@{\hspace{3pt}}|@{\hspace{3pt}}c@{\hspace{3pt}}|@{\hspace{3pt}}c@{\hspace{3pt}}|@{\hspace{3pt}}c@{\hspace{3pt}}|@{\hspace{3pt}}c@{\hspace{3pt}}|@{\hspace{3pt}}c@{\hspace{3pt}}|@{\hspace{3pt}}c@{\hspace{3pt}}|@{\hspace{3pt}}c@{\hspace{3pt}}|@{\hspace{3pt}}c@{\hspace{3pt}}|@{\hspace{3pt}}c@{\hspace{3pt}}|@{\hspace{3pt}}c@{\hspace{3pt}}|}
\hline
 $n$ & $p$ & $C$ &  &  $F_1$ & $F_2$ & $F_3$ & $F_4$ & $F_5$ & $F_6$ & $F_7$ & $F_8$ \\
\hline
    &     &               & \#i(\% relgap) & 5(0.00) & 5(0.00) & 5(0.00) & 5(0.00) & 5(0.00) & 5(0.00) & 5(0.00) & 5(0.00)\\ 
    &     & $\emptyset$   & Nodes & 92923 & 77751 & 43477 & 61749 & 94655 & 49019 & 64102 & 48598\\ 
    &     &                & Time & 175.9 & 172.7 & 113.9 & 227.0 & 198.9 & 119.7 & 159.1 & 151.7\\ 
    &     &               & \#i(\% relgap) & 5(0.00) & 5(0.00) & 5(0.00) & 5(0.00) & 5(0.00) & 5(0.00) & 5(0.00) & 5(0.00)\\ 
 15 & 0.2 & $V$           & Nodes & 2495260 & 1418036 & 289997 & 96119 & 1789791 & 933478 & 340730 & 179896\\ 
    &     &                & Time & 2738.1 & 2000.6 & 518.6 & 258.9 & 2284.6 & 1548.6 & 693.3 & 527.9\\ 
    &     &               & \#i(\% relgap) & 4(1.85) & 5(0.00) & 5(0.00) & 5(0.00) & 4(1.82) & 5(0.00) & 5(0.00) & 5(0.00)\\ 
    &     & $\frac{V}{2}$ & Nodes & 1364835 & 1773348 & 325082 & 350817 & 1459272 & 1381170 & 379803 & 133720\\ 
    &     &                & Time & 1272.9 & 2301.3 & 448.2 & 673.6 & 1629.7 & 1791.0 & 582.6 & 309.8\\ 
\hline
    &     & $\emptyset$   & Nodes & 712710 & 286807 & 48922 & 73541 & 608650 & 421643 & 66406 & 39945\\ 
    &     &                & Time & 564.4 & 325.4 & 101.3 & 212.5 & 471.0 & 500.9 & 174.0 & 113.5\\ 
 15 & 0.4 & $V$           & Nodes & 1268312 & 544577 & 70198 & 19714 & 1014941 & 692413 & 60680 & 29948\\ 
    &     &                & Time & 1115.4 & 717.0 & 144.3 & 54.4 & 904.0 & 852.8 & 135.9 & 81.7\\ 
    &     & $\frac{V}{2}$ & Nodes & 602265 & 513324 & 35499 & 29511 & 235344 & 165470 & 44559 & 28634\\ 
    &     &                & Time & 709.5 & 1018.5 & 94.1 & 91.5 & 258.9 & 267.4 & 98.8 & 77.5\\ 
\hline
    &     & $\emptyset$   & Nodes & 30088 & 22556 & 4703 & 4271 & 23614 & 34649 & 5076 & 4668\\ 
    &     &                & Time & 19.9 & 25.6 & 6.6 & 9.2 & 17.9 & 38.9 & 9.0 & 9.8\\ 
 15 & 0.6 & $V$           & Nodes & 29667 & 26674 & 3832 & 4079 & 55357 & 43415 & 3346 & 3241\\ 
    &     &                & Time & 17.9 & 28.1 & 5.9 & 8.4 & 40.1 & 43.8 & 5.7 & 7.3\\ 
    &     & $\frac{V}{2}$ & Nodes & 55815 & 110729 & 6945 & 5019 & 67552 & 72351 & 5881 & 4842\\ 
    &     &                & Time & 44.9 & 153.3 & 16.1 & 15.5 & 62.6 & 118.9 & 15.4 & 15.4\\
\hline
    &     &               & \#i(\% relgap) & 5(0.00) & 3(9.25) & 5(0.00) & 5(0.00) & 5(0.00) & 5(0.00) & 5(0.00) & 5(0.00)\\ 
    &     & $\emptyset$   & Nodes & 988054 & 466550 & 39281 & 55712 & 629510 & 772783 & 34537 & 52009\\ 
    &     &                & Time & 2420.9 & 1604.6 & 289.9 & 508.0 & 1387.1 & 2337.9 & 212.9 & 440.0\\ 
    &     &               & \#i(\% relgap) & 4(2.50) & 5(0.00) & 5(0.00) & 5(0.00) & 5(0.00) & 4(2.50) & 5(0.00) & 5(0.00)\\ 
 20 & 0.6 & $V$           & Nodes & 1020377 & 1259144 & 32774 & 55886 & 1127201 & 992055 & 31273 & 29355\\ 
    &     &                & Time & 1354.4 & 2973.9 & 191.7 & 469.5 & 1704.1 & 2224.3 & 166.7 & 209.6\\ 
    &     &               & \#i(\% relgap) & 4(2.86) & 2(12.30) & 5(0.00) & 5(0.00) & 4(5.00) & 2(15.52) & 5(0.00) & 5(0.00)\\ 
    &     & $\frac{V}{2}$ & Nodes & 1168007 & 92724 & 40389 & 38547 & 1709270 & 84568 & 60645 & 105901\\ 
    &     &                & Time & 2998.5 & 253.6 & 234.1 & 313.5 & 3103.4 & 328.0 & 469.8 & 1271.8\\ 
\hline
    &     & $\emptyset$   & Nodes & 111586 & 38238 & 5197 & 4097 & 59593 & 46207 & 7068 & 5288\\ 
    &     &                & Time & 127.3 & 61.8 & 17.2 & 14.7 & 69.6 & 104.6 & 22.3 & 23.6\\ 
 20 & 0.8 & $V$           & Nodes & 50368 & 18847 & 2134 & 1799 & 40963 & 21796 & 2529 & 2012\\ 
    &     &                & Time & 54.3 & 31.2 & 5.5 & 5.1 & 38.2 & 42.0 & 7.3 & 8.3\\ 
    &     & $\frac{V}{2}$ & Nodes & 217452 & 47566 & 4677 & 3974 & 80027 & 59136 & 4443 & 4026\\ 
    &     &                & Time & 265.6 & 94.1 & 13.3 & 13.2 & 78.2 & 126.2 & 14.7 & 15.9\\ 
\hline
    &     & $\emptyset$   & Nodes & 516552 & 195865 & 12343 & 9223 & 224284 & 140744 & 8263 & 9559\\ 
    &     &                & Time & 958.0 & 789.2 & 60.4 & 94.5 & 345.1 & 603.8 & 50.0 & 82.3\\ 
 25 & 0.8 & $V$           & Nodes & 661557 & 173850 & 9609 & 8180 & 518909 & 103340 & 11362 & 12139\\ 
    &     &                & Time & 1388.1 & 676.6 & 53.9 & 55.8 & 1106.0 & 415.8 & 74.8 & 83.3\\ 
    &     & $\frac{V}{2}$ & Nodes & 210278 & 201639 & 12131 & 14444 & 540362 & 149792 & 8258 & 7914\\ 
    &     &                & Time & 365.2 & 1005.3 & 70.7 & 161.8 & 1000.2 & 729.8 & 65.0 & 66.7\\ 
\hline
    &     &               & \#i(\% relgap) & 1(39.17) & 3(22.50) & 5(0.00) & 4(19.13) & 1(33.95) & 2(34.17) & 5(0.00) & 5(0.00)\\ 
    &     & $\emptyset$   & Nodes & 967876 & 612289 & 73141 & 154230 & 380751 & 459696 & 63123 & 105989\\ 
    &     &                & Time & 6940.1 & 4232.1 & 1218.3 & 3075.6 & 2571.8 & 3565.8 & 1231.8 & 3017.8\\ 
    &     &               & \#i(\% relgap) & 5(0.00) & 4(10.00) & 5(0.00) & 5(0.00) & 4(3.33) & 3(7.33) & 5(0.00) & 5(0.00)\\ 
 30 & 0.8 & $V$           & Nodes & 1013007 & 620869 & 24843 & 17513 & 616678 & 510177 & 20820 & 19282\\ 
    &     &                & Time & 2934.5 & 2264.6 & 313.0 & 265.3 & 1723.4 & 3719.1 & 267.5 & 390.8\\ 
    &     &               & \#i(\% relgap) & 2(36.36) & 5(0.00) & 5(0.00) & 5(0.00) & 3(15.62) & 4(23.33) & 5(0.00) & 5(0.00)\\ 
    &     & $\frac{V}{2}$ & Nodes & 1811972 & 381841 & 29016 & 30301 & 1171870 & 550678 & 31445 & 44132\\ 
    &     &                & Time & 5847.9 & 2255.6 & 383.9 & 499.5 & 4343.0 & 3122.2 & 665.3 & 692.2\\ 
\hline
    &     &               & \#i(\% relgap) & 95(3.94) & 97(2.57) & \textbf{105}(0.00) & 104(0.91) & 96(2.84) & 95(3.95) & \textbf{105}(0.00) & \textbf{105}(0.00) \\ 
    & Av. &               & Nodes & 674570 & 426386 & 53057 & 48456 & 568656 & 360284 & 59731 & \textbf{41481} \\ 
    &     &                & Time & 1164.6 & 1033.3 & \textbf{204.8} & 308.3 & 950.6 & 932.0 & 243.9 & 361.7 \\ 
		&     &                & Time$^*$ & 1739.5 & 1503.1 & \textbf{204.8} & 373.9 & 1486.3 & 1529.0 & 243.9 & 361.7 \\
\hline
\end{tabular}
\caption{Comparison of formulations}  \label{tab:1}
\end{table}

\subsection{Separation and Cutting-plane algorithm}

In order to know if the addition of our cuts is effective during the optimization, we have performed a test by running CPLEX over
the instances of 7 combinations: $(15,0.2)$, $(20,0.4)$, $(20,0.6)$, $(20,0.8)$, $(25,0.6)$, $(25,0.8)$, $(30,0.8)$.
We have used formulation $F_3$ and considered these algorithms:
\begin{itemize}
	\item B\&B - no cuts are generated,
	\item B\&C$_1$ - only inequalities of Type I are generated,
	\item B\&C$_2$ - inequalities of the two types are generated.
\end{itemize}
The separation of Type I inequalities is performed 10 times in the root node, twice in nodes with depths 1 and 2, and once in nodes
with depths 3 to 10. In B\&C$_2$, the routine that separates Type II ineq.~is executed after the one for Type I in nodes with depth at most 5.

Below, we describe the implementation of both routines. They are designed not to strictly follow the hypotheses of facet-defining results, but
to run fast. The current fractional solution is denoted by $(x^*, y^*)$.

\subsubsection{Separation of Type I inequalities}

Before starting the optimization, create sets
$$\mathcal{W}_u = \{ w \in N\la u\ra : |N\la w\ra| \geq 2 ~\textrm{and, for all}~ v \in N\la u\ra \setminus \{w\}, w \antes v ~\textrm{and}~ v \antes w\}$$
for each $u \in V$ (these sets are related to hypotheses H3-H4 of Prop.~\ref{GENERALINEQ2}).
Each time the separation routine is invoked, assign $\mathcal{A} \leftarrow V$.
Then, for every $u \in V$ and $w \in \mathcal{W}_u \cap \mathcal{A}$ do the following. Set $sum \leftarrow y^*_{w1}$.
For all $i = 2,\ldots,m$, do $sum \leftarrow sum + y_{wi}$ and check whether $x^*_{ui} + sum > 1.1$. In that case, add
$x_{ui} + \sum_{j=1}^i y_{w j} \leq 1$ as a cut and remove $w$ from $\mathcal{A}$.

The set $\mathcal{A}$ stores those vertices  ``$w$'' not used by cuts from previous iterations, thus preventing the generation of
cuts with similar support.

\subsubsection{Separation of Type II inequalities}

Before starting the optimization, create sets
\begin{multline*}
  \mathcal{W}_{u_1 u_2} = \{ w \in \mathcal{W}_{u_1} \cap \mathcal{W}_{u_2} :
     ~\textrm{there are}~ z_1 \in N\la u_1\ra \setminus N\la u_2\ra, z_2 \in N\la u_2\ra \setminus N\la u_1\ra\\
		\textrm{such that}~ N\la w\ra \setminus (\{u_2\} \cup N\la z_1\ra) \neq \emptyset,
		N\la w\ra \setminus (\{u_1\} \cup N\la z_2\ra) \neq \emptyset\}
\end{multline*}
for each pair $\{u_1, u_2\} \subset V$, which are related to hypotheses of Prop.~\ref{FAM2}.
The separation routine is executed immediately after the separation of Type I ineq.~and makes use of the vertices that remains in
$\mathcal{A}$. 
For every $\{u_1, u_2\} \subset V$ and $w \in \mathcal{W}_{u_1 u_2} \cap \mathcal{A}$ do the following. Set $sum \leftarrow y^*_{w1}$.
For all $i = 2,\ldots,m$, do $sum \leftarrow sum + y_{wi}$ and check whether $x^*_{u_1 i} \notin \mathbb{Z}$ and
$x^*_{u_2 i} \notin \mathbb{Z}$. In that case, for all $k = 1,\ldots,i$, if $y^*_{wk} \notin \mathbb{Z}$ then check whether
$x^*_{u_1 i} + x^*_{u_2 i} + sum + \sum_{v \in N^{\cup}} y_{vk} > 2.2$ and, in that case, add
$x_{u_1 i} + x_{u_2 i} + \sum_{j=1}^i y_{wj} + \sum_{v \in N^{\cup}} y_{vk} \leq 2$ as a cut and remove $w$ from $\mathcal{A}$.

\subsubsection{Comparing algorithms}

Table \ref{tab:2} reports the results. Each row displays the parameters obtained by the three algorithms for a given combination $(n,p)$.
As in Table \ref{tab:1}, ``\#i(\% relgap)'' refers to the number of solved instances and the average of the percentage
relative gap, ``Nodes'' and ``Time'' are the averages of the number of evaluated nodes and the elapsed time (in seconds), respectively, over
the solved instances. The last two rows summarize the results for all combinations, with best values in boldface.
Here, the last row (``Av$^*$'') reports the average of the elapsed time over all instances, by taking 7200 sec. for those ones that are
not solved, see the definition of Time$^*$ above.

\begin{table}[t] \centering \footnotesize
\begin{tabular}{|@{\hspace{3pt}}c@{\hspace{3pt}}|@{\hspace{3pt}}c@{\hspace{3pt}}|@{\hspace{3pt}}c@{\hspace{3pt}}|@{\hspace{3pt}}c@{\hspace{3pt}}|@{\hspace{3pt}}c@{\hspace{3pt}}|@{\hspace{3pt}}c@{\hspace{3pt}}|@{\hspace{3pt}}c@{\hspace{3pt}}|@{\hspace{3pt}}c@{\hspace{3pt}}|@{\hspace{3pt}}c@{\hspace{3pt}}|@{\hspace{3pt}}c@{\hspace{3pt}}|@{\hspace{3pt}}c@{\hspace{3pt}}|}
\hline
   & & \multicolumn{3}{c|}{\#i(\% relgap)} & \multicolumn{3}{c|}{Nodes} & \multicolumn{3}{c|}{Time} \\
 $n$ & $p$ & B\&B & B\&C$_1$ & B\&C$_2$ & B\&B & B\&C$_1$ & B\&C$_2$ & B\&B & B\&C$_1$ & B\&C$_2$ \\
\hline
15 & 20 & 15(0.00) & 15(0.00) &  15(0.00) & 219518 & 110810 &  95554 &  360.2 &  283.0 &  238.6 \\
20 & 40 & 11(8.04) & 11(4.88) &  13(1.67) & 542344 & 184416 & 130556 & 3543.3 & 2513.1 & 2302.9 \\
20 & 60 & 15(0.00) & 15(0.00) &  15(0.00) &  37482 &  21011 &  28611 &  238.6 &  199.9 &  386.2 \\
20 & 80 & 15(0.00) & 15(0.00) &  15(0.00) &   4003 &   3257 &   3891 &   12.0 &   11.2 &   12.6 \\
25 & 60 & 6(29.77) & 11(8.72) & 13(10.59) & 244484 &  93348 & 121882 & 2498.7 & 2380.6 & 3053.3 \\
25 & 80 & 15(0.00) & 15(0.00) &  15(0.00) &  11361 &   9318 &  10779 &   61.7 &   56.7 &   71.3 \\
30 & 80 & 15(0.00) & 15(0.00) &  15(0.00) &  42333 &  34746 &  37530 &  638.4 &  713.5 &  825.7 \\
\hline
 & Av. & 92(5.40) & 97(1.94) & \textbf{101}(1.75) & 132099 & 59201 & \textbf{58685} & 800.4 & \textbf{750.5} & 917.3 \\
 & Av.$^*$ &   &   &   &   &   &   & 1593 & 1242 & \textbf{1157} \\
\hline
\end{tabular}
\caption{Comparison of algorithms} \label{tab:2}
\end{table}

Note that both B\&C$_1$ and B\&C$_2$ outperform B\&B, with B\&C$_1$ presenting the lowest average of CPU time per solved instance and
B\&C$_2$ being the best one in terms of number of solved instances, relative gap, number of nodes and the parameter reported in the last row.
We conclude that the addition of the new inequalities greatly reinforces the linear relaxations, thus helping to
decrease the number of nodes in the B\&B tree and the consumed time.

\section{Conclusions}

We have introduced the General Grundy Domination Problem (GGDP), which generalizes both the Grundy Domination and Grundy Total Domination problems.
We tackle such combinatorial optimizations problem by integer programming techniques.
To the best of our knowledge, this is the first mathematical programming approach for Grundy domination numbers.
The basis of our work are novel ILP formulations relying on the concept of legal sequence and the study of strong valid inequalities.
Besides the constraints of the formulations and its generalizations, we were able to present a large class of valid inequalities and necessary/sufficient conditions for facet-definition. We observed that the clutter property plays an important role in this context.

Besides the theoretical achievements, we also reported computational experiments that aimed at comparing the performances of the formulations in a pure B\&B framework as well as the strength of the derived cuts. Regarding the formulations, we could attest the power of the symmetry-breaking constraints \eqref{RESTR8}-\eqref{RESTR9} in reducing both the number of nodes and the computational times. In addition, we could empirically demonstrate the strength of the cuts, even in restricted versions.

\section*{References}


\begin{thebibliography}{00}

\bibitem{BERGE}
C. Berge.
Hypergraphs.
North-Holland, Amsterdam, 1989.

\bibitem{BRESAR2014}
B. Bre\v{s}ar, T. Gologranc, M. Milani\v{c}, D. F. Rall, R. Rizzi.
Dominating sequences in graphs.
\textit{Discrete Math.} \textbf{336} (2014) 22--36.

\bibitem{GRUNDYCOL}
C. A. Christen, S. M. Selkow.
Some perfect coloring properties of graphs.
\textit{J. Comb. Theory, Series B} \textbf{27} (1979) 49--59.

\bibitem{P1}
B. Bre{\v{s}}ar, S. Klav{\v{z}}ar, D. F. Rall.
Domination game and an imagination strategy.
\textit{SIAM J. Discrete Math.} \textbf{24} (2010) 979--991.

\bibitem{P2}
W. B. Kinnersley, D. B. West, R. Zamani.
Extremal problems for game domination number.
\textit{SIAM J. Discrete Math.} \textbf{27} (2013) 2090--2107.

\bibitem{P3}
G. Ko\v{s}mrlj.
Realizations of the game domination number.
\textit{J. Comb. Optim.} \textbf{28} (2014) 447--461.

\bibitem{BRESAR2016}
B. Bre{\v{s}}ar, M. A. Henning, D. F. Rall.
Total dominating sequences in graphs. 
\textit{Discrete Math.} \textbf{339} (2016) 1165--1676.

\bibitem{NASINI2017}
B. Bre{\v{s}}ar, T. Kos, G. Nasini, P. Torres.
Total dominating sequences in trees, split graphs, and under modular decomposition.
\textit{Discrete Optim.} (2017), in press.

\bibitem{LAGOS2017}
M. Camp\^elo, D. Sever\'in.
Facets of the polytope of legal sequences.
\textit{Electron. Notes Discrete Math.} \textbf{62} (2017) 15--20.

\bibitem{ANNEGRET}
A. P\^{e}cher, A. K. Wagler.
Almost all webs are not rank-perfect.
\textit{Math. Program.} \textbf{105} (2006) 311--328.

\bibitem{MARGOT}
F. Margot.
Symmetry in Integer Linear Programming.
\textit{In: Jünger M. et al. (eds) 50 Years of Integer Programming 1958--2008.}
Springer, Berlin, Heidelberg, 2010.

\end{thebibliography}
\end{document}